\newcommand{\md}{\,{\rm d}}
\newcommand{\Ab}{\mathbf A}
\newcommand{\FF}{\mathbf F}
\newcommand{\R}{\mathbb R}
\newcommand{\C}{\mathbb C}
\DeclareMathOperator{\curl}{curl}\DeclareMathOperator{\Div}{div}
 \DeclareMathOperator{\dist}{dist}
\newtheorem{thm}{Theorem}[section]
\newtheorem{prop}[thm]{Proposition}
\newtheorem{lem}[thm]{Lemma}
\newtheorem{corol}[thm]{Corollary}
\newtheorem{proposition}[thm]{Proposition}
\newtheorem{rem}[thm]{Remark}
\numberwithin{equation}{section}
\title[Bulk superconductivity]{Nucleation of bulk superconductivity
close to critical magnetic field}
\author[S. Fournais]{S\o ren Fournais}
\address[S. Fournais and A. Kachmar]{Department of Mathematical Sciences, University
  of Aarhus, Ny Munkegade, Building
  1530, DK-8000 \AA rhus C, Denmark}
\email[S. Fournais]{fournais@imf.au.dk} \email[A.
Kachmar]{ayman.kachmar@math.u-psud.fr}
\author[A. Kachmar]{Ayman Kachmar}
\date{\today}
\begin{document}

\begin{abstract}
We consider the two-dimensional Ginzburg-Landau functional with
constant applied magnetic field. For applied magnetic fields close
to the second critical field $H_{C_2}$ and large Ginzburg-Landau
parameter, we provide leading order estimates on the energy of
minimizing configurations. We obtain a fine threshold value of the
applied magnetic field for which bulk superconductivity contributes
to the leading order of the energy. Furthermore, the energy of the
bulk is related to that of the Abrikosov problem in a periodic
lattice. A key ingredient of the proof is a novel $L^\infty$-bound
which is of independent interest.
\end{abstract}

\maketitle 

\section{Introduction and main results}\label{hc2-sec:int}
Let us consider a two-dimensional, simply connected, open domain
$\Omega\subset\mathbb R^2$ with smooth boundary. The energy of a
cylindrical superconducting sample of cross section $\Omega$, placed
in a constant applied magnetic field parallel to the cylinder axis,
is given by the following Ginzburg-Landau functional:
\begin{align}\label{eq-hc2-GL}
\mathcal E(\psi,\Ab;\Omega)&=\int_\Omega e_{\kappa,H}(\psi,\Ab)\,dx\nonumber\\
&=\int_\Omega \left(|(\nabla-i\kappa
H\Ab)\psi|^2-\kappa^2|\psi|^2+\frac{\kappa^2}2|\psi|^4+(\kappa
H)^2|\curl(\Ab-\FF)|^2\right)\,\md x\,.
\end{align}
Here $\psi$ is a complex valued wave function, $\Ab:\Omega\to\R^2$ a
vector potential, $\kappa$ the Ginzburg-Landau parameter (a material
parameter which is temperature independent), and  $H$ is the
strength of the applied magnetic field. The potential
$\FF:\Omega\to\R^2$ is the unique vector field satisfying,
\begin{equation}\label{eq-hc2-F}
{\rm curl}\,\FF=1\,,\quad{\Div}\,\FF=0\quad{\rm in}~\Omega\,,\quad
\nu\cdot \FF=0\quad{\rm on}~\partial\Omega\,,
\end{equation}
where $\nu$ is the unit inward normal  vector of $\partial\Omega$.

In the last two decades, many authors have studied the minimization
of the Ginzburg-Landau functional ${\mathcal E}$ in \eqref{eq-hc2-GL}  over all
admissible configurations $(\psi,\Ab)
\in H^1(\Omega;\C)\times H^1(\Omega;\R^2)$.
In the asymptotic limit $\kappa\to\infty$
(corresponding to type II superconductors), it is recognized that
the behavior of the minimizers and their energy strongly depends
on the applied magnetic field $H$. One distinguishes three
different critical values $H_{C_1}$, $H_{C_2}$ and $H_{C_3}$ of the
applied magnetic field that can be described roughly in the
following way:
\begin{enumerate}
\item If the applied magnetic field $H<H_{C_1}$, then $|\psi|$
does not vanish anywhere in $\Omega$, for any minimizer
$(\psi,A)$ of the Ginzburg-Landau energy in \eqref{eq-hc2-GL}.
\item If $H_{C_1}<H<H_{C_2}$, $|\psi|$ has isolated zeros in
$\Omega$, called {\it vortices}.
\item If $H_{C_2}<H<H_{C_3}$, $|\psi|$ is small (in the bulk) except
  in a narrow region near the boundary of $\Omega$. This is the
  phenomenon called boundary superconductivity.
\item If $H>H_{C_3}$, $|\psi|$ vanishes everywhere in $\Omega$.
\end{enumerate}
Precise mathematical definitions exist for the critical fields
$H_{C_1}$ and $H_{C_3}$ which are precisely estimated in the limit
$\kappa\to\infty$. We do not aim at giving an exhaustive list of
references but we invite the reader to see the monographs
\cite{FoHe08, SaSe}. A mathematical definition of the critical field
$H_{C_2}$ is still not available, but current
mathematical results (c.f. \cite{FoHe08, Pa02, SaSe, SS02}) suggest
that it behaves as follows in the large $\kappa$ regime,
$$H_{C_2}=\kappa+o(\kappa)\quad{\rm as~}\kappa\to\infty.$$

The present paper is devoted to a detailed analysis of the  minimizers
of the Ginzburg-Landau functional
in  the asymptotic regime $\kappa\to\infty$ and
$H=\kappa+o(\kappa)$, which corresponds to type~II superconductors
subject to an applied magnetic field $H$ close to the critical field
$H_{C_2}$.
The obtained results  are complementary to those in
\cite{AlHe,  FoHe04, Pa02, SS02}.

\subsection{Earlier results}
The regime of applied magnetic fields  close to the critical field
$H_{C_3}$ is treated by Lu-Pan \cite{LuPa99} (who, in particular,
introduced a precise definition of this critical field), Helffer-Pan \cite{HP} and then by
Fournais-Helffer \cite{FoHe04}. This regime corresponds to applied
magnetic fields $H=\frac{\kappa}{\Theta_0}+\rho(\kappa)$ where
$\rho(\kappa)$
satisfies,
$\lim_{\kappa\to\infty}\frac{\rho(\kappa)}{\kappa}=0$.
The constant $\Theta_0$ appearing above is universal and satisfies
$\Theta_0 \in (0,1)$.

Among other
things, the above mentioned papers give leading order estimates on the
ground state energy,
\begin{equation}\label{eq-GL-GS}
C_0(\kappa,H)=\inf_{(\psi,\Ab)\in H^1(\Omega;\C)\times
  H^1(\Omega;\R^2)}\mathcal E(\psi,\Ab)\,.
\end{equation}

Pan \cite{Pa02} and
Almog-Helffer \cite{AlHe}   give
leading order estimates on the ground state energy, $C_0(\kappa,H)$ when $\kappa\to\infty$ and
the applied magnetic
field satisfies
$$H=b\kappa+o(\kappa)\quad{\rm as~}\kappa\to\infty\,.$$
The constant $b$ is assumed in the interval
$[1,\Theta_0^{-1})$ (with an extra condition when $b=1$, see
Theorem~\ref{thm-hc2-Pan} below). This regime corresponds to applied magnetic fields
varying between the critical fields $H_{C_2}$ and
$H_{C_3}$. Roughly speaking, the above mentioned papers show that the
ground state energy satisfies,
$$C_0(\kappa,H)=-C(b)|\partial\Omega|\kappa+o(\kappa)\quad{\rm
  as~}\kappa\to\infty\,,$$
where $[1,\Theta_0^{-1})\ni b\mapsto C(b)\in (0,\infty)$. The case  $b=1$ corresponds to  applied fields
$H$ close to the critical field $H_{C_2}$.
In strong connection with our results, we state the following theorem proved
by Pan in \cite{Pa02}, devoted to the case $b=1$. We use here the
convention that a set $D\subset\Omega$ is smooth if there exists a
smooth set $\widetilde D\subset\R^2$ such that $D=\widetilde D\cap\Omega$.

\begin{thm}\label{thm-hc2-Pan}
There exists a positive universal constant $E_1$ such that, for any
magnetic field $H=H(\kappa)$ satisfying,
\begin{equation}\label{eq-hc2-hypH}
\frac{H}\kappa\to1\,,\quad
H-\kappa\to+\infty\quad{\rm
as}~\kappa\to\infty\,,\end{equation} any minimizer
$(\psi,\Ab)$ of the energy $\mathcal E$ in \eqref{eq-hc2-GL} and
any open, smooth domain
$D\subset\Omega$, the
following expansion holds
\begin{equation}\label{eq-hc2-asymp}
\mathcal E(\psi,\Ab;D)=- E_1|\overline
D\cap\partial\Omega|\kappa+o(\kappa)\,,\quad{\rm
as}~\kappa\to\infty\,.
\end{equation}
\end{thm}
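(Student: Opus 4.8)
The plan is to prove the energy asymptotics $\mathcal E(\psi,\Ab;D)=-E_1|\overline D\cap\partial\Omega|\kappa+o(\kappa)$ by establishing matching upper and lower bounds, with the constant $E_1$ arising from a one-dimensional effective boundary problem. The regime $H-\kappa\to+\infty$ with $H/\kappa\to1$ places us just above $H_{C_2}$, so the leading contribution to the energy is concentrated in a boundary layer of width $\sim(\kappa H)^{-1/2}\sim\kappa^{-1}$ near $\partial\Omega$, while the bulk contribution is negligible at order $\kappa$. Accordingly, $E_1$ should be identified as the ground-state energy of a rescaled one-dimensional de~Gennes-type functional on the half-line modelling the normal/superconducting transition at the boundary.

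First I would fix a minimizer $(\psi,\Ab)$ and derive the basic a~priori estimates: the $L^\infty$-bound $\|\psi\|_\infty\le 1$ (from the maximum principle applied to the Ginzburg-Landau equation) and the estimate that $\kappa H\curl(\Ab-\FF)$ is small, allowing us to replace $\Ab$ by the model potential $\FF$ up to controlled errors in the energy. For the upper bound, I would construct a trial configuration supported in the boundary layer: using boundary coordinates $(s,t)$ (arclength and normal distance), one builds $\psi$ from the optimal one-dimensional de~Gennes profile modulated along the boundary, yielding $\mathcal E(\psi_{\mathrm{trial}},\FF;D)\le -E_1|\overline D\cap\partial\Omega|\kappa+o(\kappa)$. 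The main point is that the curvature corrections and the phase variation along $\partial\Omega$ contribute only at lower order in this regime.

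For the lower bound, I would localize the energy using a partition of unity adapted to the boundary layer, splitting $\mathcal E(\psi,\Ab;D)$ into a sum of local contributions over small boundary cells and interior cells. In each interior cell, the effective magnetic field strength $\kappa H\sim\kappa^2$ exceeds the pairing strength $\kappa^2$ only marginally, but the condition $H-\kappa\to+\infty$ is exactly what forces the interior (bulk) energy density to be $o(1)$ after integration, so interior cells contribute $o(\kappa)$ in total. In each boundary cell, after rescaling to the model half-plane and flattening the boundary, one bounds the local energy from below by the one-dimensional de~Gennes energy $-E_1$ per unit boundary length, up to errors controlled by the cell size and curvature.

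The hard part will be making the boundary-layer reduction rigorous uniformly in the position along $\partial\Omega$ while simultaneously controlling the error from replacing the true magnetic potential $\Ab$ by $\FF$ and from the boundary curvature; in particular one must show that the bulk term genuinely drops out at order $\kappa$, which is precisely where the hypothesis $H-\kappa\to+\infty$ enters and where the novel $L^\infty$-bound advertised in the abstract is presumably deployed to close the estimates. I expect that controlling the cross terms between the boundary profile and the residual magnetic field, and verifying that the localization errors sum to $o(\kappa)$ rather than $O(\kappa)$, will require the most care.
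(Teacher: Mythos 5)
Your overall skeleton (boundary-layer trial state in $(s,t)$ coordinates for the upper bound, localization into boundary and bulk pieces for the lower bound, bulk suppressed because $H\geq\kappa$) is the right one, but there is a genuine gap at the heart of the argument: the identification of the constant. The $E_1$ in the statement is \emph{not} the ground-state energy of a one-dimensional de~Gennes-type functional; it is defined (see \eqref{eq-hc2-E1} and Theorem~\ref{thm-hc2-Pa02}, both taken from Pan \cite{Pa02}, where this theorem is actually proved) as the thermodynamic limit $E_1=\lim_{\ell\to\infty}\bigl(-d(\ell)/(2\ell)\bigr)$ of the genuinely two-dimensional half-cylinder problem \eqref{eq-hc2-redGL}--\eqref{eq-hc2-d(ell)}. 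Because of this, your two bounds do not match. A trial function built from a one-dimensional profile (with an optimal phase along the boundary) gives an upper bound with the one-dimensional constant $E_1^{\rm 1D}$; on the other hand, the localization in your lower bound naturally produces, in each boundary region, the two-dimensional cell energy $d(\ell)$ --- flattening the boundary and rescaling yields exactly the functional $\mathcal E_\ell$ of \eqref{eq-hc2-redGL}, not a one-dimensional functional (this is precisely what happens in Section~\ref{hc2-sec-lb} of the paper, see \eqref{eq-hc2-lb:bnd-refGL}). Since one-dimensional profiles are (after an $O(1)$ cut-off cost) an admissible subclass for $\mathcal E_\ell$, all one can conclude is $E_1\geq E_1^{\rm 1D}$. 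So your upper bound reads $-E_1^{\rm 1D}\,|\overline D\cap\partial\Omega|\,\kappa+o(\kappa)$ while a correct lower bound reads $-E_1\,|\overline D\cap\partial\Omega|\,\kappa+o(\kappa)$, and they close only if $E_1=E_1^{\rm 1D}$. That equality is Pan's dimensional-reduction (absence of symmetry breaking) conjecture for the half-plane problem, a deep statement discussed e.g.\ in \cite{AlHe}; it does not follow from ``rescaling to the model half-plane and flattening the boundary'', and your proposal contains no argument for it. Indeed, as stated, your lower-bound step (bounding each boundary cell below by the one-dimensional energy per unit length) is equivalent to this conjecture and cannot be justified by the tools you invoke. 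The route that actually closes is the one recalled in Section~\ref{hc2-sec-lbdryp}: keep $d(\ell)$ as the reference quantity on \emph{both} sides, build the upper-bound trial state from the two-dimensional minimizer $\phi_\ell$ (using its decay, Theorem~\ref{thm-hc2-Pa02}, to control cut-off errors), bound the boundary contribution below by $d(\ell)$, and use the existence of the limit \eqref{eq-hc2-E1} --- itself a nontrivial part of Pan's work --- to match the two sides.

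Two smaller remarks. First, this paper does not reprove the statement: it quotes it from \cite{Pa02}, and in the proof of its own generalization (Theorem~\ref{thm-hc2-FK1}, via Corollary~\ref{thm-hc2-lb}) the case $\mu(\kappa)\to-\infty$ is handled precisely by invoking Pan's theorem. Second, your guess that the new $L^\infty$-bound (Theorem~\ref{thm:Linfty-bulk}) is what closes the bulk estimates here is anachronistic: in Pan's regime $H-\kappa\to+\infty$ the bulk is controlled by the nonnegativity of the local energy away from the boundary (the lowest Landau level there is $\kappa H\geq\kappa^2$) together with Pan's $L^2$-decay of $\psi$ away from $\partial\Omega$; the $L^\infty$-bound of the present paper is what is needed in the complementary regime $|H-\kappa|\lesssim\sqrt{\kappa}$, where $\psi$ no longer decays in the bulk.
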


Furthermore, Pan proves in \cite{Pa02} that  $\psi$ decays
away from the boundary $\partial\Omega$ in the
$L^2$-sense (and this is actually one key ingredient to prove
(\ref{eq-hc2-asymp})) showing thus that the superconducting sample
exhibits only {\it surface} superconductivity. A result by Almog
\cite{Al2} on the decay of $\psi$  permits one to extend the
validity of Theorem~\ref{thm-hc2-Pan} down to magnetic fields $H$
satisfying $H-\kappa\gg\frac{\ln\kappa}{\kappa}$ as
$\kappa\to\infty$. Here we remind the reader that for two positive
functions $a(\kappa)$ and $b(\kappa)$, the notation $a(\kappa)\ll
b(\kappa)$ as $\kappa\to\infty$ means that
$\displaystyle\lim_{\kappa\to\infty} \frac{a(\kappa)}{b(\kappa)}=0$.

The constant $E_1$ appearing in Theorem~\ref{thm-hc2-Pan} is a
universal constant defined via a reduced Ginzburg-Landau energy in a
cylindrical domain. We will recall its definition in \eqref{eq-hc2-E1}
below.

Complementary to the results of Pan \cite{Pa02},
Sandier and Serfaty \cite{SS02}
consider the regime of magnetic fields
$$H=b\kappa+o(\kappa)\quad{\rm as~}\kappa\to\infty\,,$$
where the constant
$b\in[0,1]$. Among other things, they prove that there exists a
strictly increasing function $[0,1]\ni b\mapsto f(b)\in [-\frac12,0]$,
with $f(1)=0$,
such that the ground state energy satisfies
$$C_0(\kappa,H)=f(b)|\Omega|\kappa^2+o(\kappa^2)\quad{\rm
  as~}\kappa\to\infty\,.$$
More precisely, they prove a uniform energy density in $|\Omega|$
compatible with this global ground state energy. In the regime of
interest to us, which corresponds to $b=1$, the ground state energy
therefore 
satisfies,
\begin{equation}\label{en-SS}
C_0(\kappa,H)=o(\kappa^2)\quad{\rm as~}\kappa\to\infty\,.
\end{equation}

We observe from the aforementioned results that a transition happens
from {\it bulk} to {\it boundary} behavior when the applied field is
close to $\kappa$, or in other words, when the applied field is close
to the second critical field $H_{C_2}$.
At the same time the order of magnitude of the energy changes here.

The results
of the present paper (Theorem~\ref{thm-hc2-FK1} below)
determine the leading order term in the
energy expansion \eqref{en-SS}, and indicates the optimal regime for
the magnetic field $H$ such that Theorem~\ref{thm-hc2-Pan} is valid.
We obtain that the leading order behavior of the energy is
determined according to variations of $H-\kappa$ on the
critical scale $\sqrt{\kappa}$. Our results close the gap between the results of \cite{Pa02}
and \cite{SS02} and---taken together with the results of these papers---yield an overall
understanding of the ground state energy of type II superconductors in
strong magnetic fields.

\subsection{Main results}

In addition to the constant $E_1$ appearing in
Theorem~\ref{thm-hc2-Pan}, the asymptotic behavior of the ground state
energy $C_0(\kappa,H)$
involves another universal constant $E_2>0$. The definition of
$E_2$ is
related
to the Abrikosov energy, see \eqref{eq-hc2-E2-TDL} and
\eqref{eq-hc2-E2} below. We will use the function $\R\ni
x\mapsto[x]_+:= {\rm max}(0,x)$. 

\begin{thm}\label{thm-hc2-FK1}
Let the positive constants $E_1$ and $E_2$ be defined by
\eqref{eq-hc2-E1} and \eqref{eq-hc2-E2-TDL} respectively. Assume that the
magnetic field satisfies,
$$H=\kappa-\mu(\kappa)\sqrt{\kappa}\quad{\rm such ~that~}
\quad \lim_{\kappa\to\infty}\frac{\mu(\kappa)}{\sqrt{\kappa}}=0
\,.$$ 
Then, for any minimizer $(\psi,\Ab)$ of the energy $\mathcal E$ in
\eqref{eq-hc2-GL}, and
any open, smooth domain $D\subset\Omega$, the following asymptotic expansion
holds,
\begin{equation}\label{eq-hc2-asymp'}
\mathcal E(\psi,\Ab;D)=-E_1|\overline
D\cap\partial\Omega|\kappa-E_2|D|\,[\mu(\kappa)]^2_+\kappa+
o(\max(1,[\mu(\kappa)]^2_+)\kappa)\,,\quad{\rm as}~\kappa\rightarrow\infty\,.
\end{equation}
\end{thm}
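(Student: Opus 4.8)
The proof proceeds by establishing matching upper and lower bounds for $\mathcal{E}(\psi,\Ab;D)$, the goal being to isolate a \emph{boundary} contribution of order $\kappa$ from a \emph{bulk} contribution of order $[\mu]_+^2\kappa$. The first task is a collection of a priori estimates for any minimizer $(\psi,\Ab)$. From the Ginzburg-Landau equations and the maximum principle I would derive the uniform bound $\|\psi\|_{L^\infty(\Omega)}\le 1$, and---this is where the ``novel $L^\infty$-bound'' of the abstract enters---a refined estimate showing that in the interior of $\Omega$ one has $\|\psi\|_\infty^2$ of the order of $[\mu]_+/\sqrt{\kappa}$, so that the order parameter is small in the bulk precisely on the scale dictated by $H-\kappa$ and the quartic term is a genuine perturbation. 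In parallel I would control the self-induced magnetic field, showing that $\|\curl(\Ab-\FF)\|_{L^2}$ is small, so that up to a gauge transformation one may replace $\Ab$ by $\FF$ (equivalently, by the constant-field potential with $\curl=1$) at a cost negligible compared to $\max(1,[\mu]_+^2)\kappa$. These reductions turn the problem into the analysis of the quadratic-plus-quartic energy in a fixed field of intensity $\kappa H\simeq\kappa^2$, whose magnetic length is of order $\kappa^{-1}$.

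For the upper bound I would build an explicit trial configuration as a sum of two localized pieces. Near $\partial\Omega$, in a boundary layer of width $\sim\kappa^{-1}$, I would insert the surface-superconductivity profile underlying Pan's Theorem~\ref{thm-hc2-Pan}, contributing $-E_1|\overline D\cap\partial\Omega|\kappa+o(\kappa)$. In the interior of $D$ I would tile with the fundamental cells of the optimal Abrikosov lattice, rescaled to the magnetic length, and insert the periodic minimizer that defines $E_2$ through its thermodynamic limit \eqref{eq-hc2-E2-TDL}; choosing the amplitude so as to minimize the effective density $-\alpha|\psi|^2+\tfrac{\kappa^2}{2}|\psi|^4$ with $\alpha=\kappa^2-\kappa H=\mu\kappa^{3/2}$ (positive precisely when $\mu>0$) produces a density of order $-E_2[\mu]_+^2\kappa$ and hence a bulk contribution $-E_2|D|[\mu]_+^2\kappa$. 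The two pieces are glued with cut-off functions supported on sets of negligible measure, and the energy of the induced field is shown to be of lower order; summing the contributions gives the upper bound in \eqref{eq-hc2-asymp'}.

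The lower bound is the heart of the matter. I would fix a length scale $\ell$ in the window $\kappa^{-1}\ll\ell\ll1$ and use an IMS-type partition of unity to split $\mathcal{E}(\psi,\Ab;D)$ into a sum over \emph{interior} cells and \emph{boundary} cells, the localization error being of order $\ell^{-2}\|\psi\|_{L^2}^2$, which the a priori bounds keep below the target precision for a suitable choice of $\ell$. In each boundary cell I would compare the local energy to the one-dimensional reduced functional on the half-cylinder that defines $E_1$, recovering $-E_1|\overline D\cap\partial\Omega|\kappa$ as in \cite{Pa02, FoHe04}. In each interior cell, after gauging to the constant field, I would project $\psi$ onto the lowest Landau level---the $L^\infty$-bound together with the Landau-level gap of order $\kappa^2$ controls the projection error---and then invoke the defining property of $E_2$: the infimum of the periodic Ginzburg-Landau energy over the cell is bounded below by $-E_2[\mu]_+^2\kappa\,|\mathrm{cell}|$ up to lower order, while for $\mu\le 0$ the interior operator is non-negative and the bulk term drops out. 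Summing over cells and letting $\ell\to 0$ after $\kappa\to\infty$ yields the matching lower bound.

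The main obstacle is the interior lower bound and its uniformity. Converting the local energy of a \emph{minimizer}---which is neither periodic nor supported in a single Landau level---into the thermodynamic Abrikosov energy $E_2$ requires simultaneously justifying the projection to the lowest Landau level with an error small relative to $[\mu]_+^2\kappa$, absorbing the non-periodicity and the cell-boundary effects through the partition of unity and a periodic comparison, and controlling the quartic coupling via the refined $L^\infty$-estimate, all uniformly as $\mu=\mu(\kappa)$ is allowed to grow subject to $\mu/\sqrt{\kappa}\to0$; thus every error must be shown to be $o(\max(1,[\mu]_+^2)\kappa)$ rather than merely $o(\kappa)$. The delicate bookkeeping lies in the transition region where the boundary layer of width $\kappa^{-1}$ meets the interior cells of size $\ell$, where one must ensure that the surface and bulk reductions neither overlap nor double-count the energy.
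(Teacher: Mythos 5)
Your a priori estimates, your upper-bound construction (Pan's boundary profile in a layer of width $\sim\kappa^{-1}$ glued to a rescaled Abrikosov lattice of amplitude $\sim\mu^{1/2}\kappa^{-1/4}$), and your lower-bound scheme (IMS partition, boundary cells compared with the reduced half-cylinder functional, interior cells gauged to the constant field and projected onto the lowest Landau level via the spectral gap) all match the paper's Sections~\ref{sec-hc2-Linfty}--\ref{hc2-sec-lb}, and your scalings are correct. The genuine gap is in the logic of the upper bound: you claim that a trial configuration supported in $D$ bounds $\mathcal E(\psi,\Ab;D)$ from above. It does not. The pair $(\psi,\Ab)$ is a minimizer of the \emph{global} functional, so the variational principle only yields $\mathcal E(\psi,\Ab;\Omega)\leq \mathcal E(\text{test})$; the restricted energy $\mathcal E(\psi,\Ab;D)$ is not the infimum of anything. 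A priori the minimizer could have anomalously low energy in $\Omega\setminus D$ compensated by excess energy in $D$, and no test-function construction on $D$ alone can rule this out. This is not a technicality: the whole point of the theorem, as opposed to an estimate of the ground state energy $C_0(\kappa,H)$, is this local statement, and it is exactly what Section~\ref{sec-hc2-proofs} of the paper is for.

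The paper closes this gap by a subtraction argument absent from your proposal. One needs: (i) the \emph{global} upper bound of Theorem~\ref{thm-hc2-FK-ub}; (ii) a lower bound, Theorem~\ref{thm-hc2-lb-loc}, stated not for the restricted energy but for cutoff functions $h\psi$ with ${\rm supp}\,h\subset\overline D_a$ --- precisely so that no uncontrolled cells straddle $\partial D\cap\Omega$; (iii) an IMS splitting of $\psi$ across $\Gamma=\partial D\setminus\partial\Omega$ at a scale $m$, which combined with (i), and with (ii) applied to the piece $\varphi_{1,m}$ supported near $\Omega\setminus\overline D$, forces an upper bound on the energy of the piece $\varphi_{2,m}$ supported in $\overline D$; and (iv) Lemma~\ref{lem-hc2-IMS}, which uses the Ginzburg--Landau equations \eqref{eq-hc2-GLeq} together with the decay estimates of Lemmas~\ref{lem-hc2-FoHe} and \ref{lem-FH-jems} to show that the flux term $\int_{\Omega\cap\partial D}\overline\psi\,\nu\cdot(\nabla-i\kappa H\Ab)\psi\,d\sigma$ is $o(\kappa)$, so that the cutoff can be removed and the bound on $\mathcal E(\varphi_{2,m},\Ab;D)$ becomes the bound $\mathcal E(\psi,\Ab;D)\leq -E_1|\overline D\cap\partial\Omega|\kappa-E_2|D|[\mu]^2_+\kappa+o(\max(1,[\mu]_+^2)\kappa)$. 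The matching lower bound for $\mathcal E(\psi,\Ab;D)$ is then obtained by applying this same upper bound to the complementary domain and subtracting from the global lower bound (Corollary~\ref{thm-hc2-lb}) --- not, as you propose, by running the cell decomposition directly on $D$, which would face the same interface problem at $\partial D\cap\Omega$ in the terms $-\kappa^2|\psi|^2$ of straddling cells. In short: your two bounds are essentially the paper's Sections~\ref{hc2-sec-ub} and \ref{hc2-sec-lb}, but without the step that uses criticality of $(\psi,\Ab)$ and the decay of $\psi$ to localize at the artificial boundary $\partial D\cap\Omega$, they do not yield the statement for an arbitrary subdomain $D$.
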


Theorem~\ref{thm-hc2-FK1} generalizes  Theorem~\ref{thm-hc2-Pan}
and shows,
in an energy sense, that the sample
is in a surface superconducting state as long as the magnetic field
satisfies $|H-\kappa|\ll\sqrt{\kappa}$ (see
Corollary~\ref{corol-hc2-FK1} for a qualitative statement on the
behavior of order parameters $\psi$). In this specific regime,
one difference between the proofs of
Theorems~\ref{thm-hc2-Pan} and \ref{thm-hc2-FK1} is that the
order parameter $\psi$ is not expected to decay in the bulk. Hence we
need a different method for controlling the energy contribution of
the bulk, which we show to be negligible compared with that of the
boundary.

However, as Theorem~\ref{thm-hc2-FK1} shows, when the magnetic field
strength $H$ becomes of the order  $\kappa-\mu\sqrt{\kappa}$ with
$\mu$ a positive constant, the energy contribution of the bulk can no
more be neglected. Therefore, Theorem~\ref{thm-hc2-Pan} gives a sharp
description of how bulk superconductivity starts to appear,
and thus establishes  a fine characterization of the critical
field $H_{C_2}$, which seems to be absent even in the Physics
literature.

When the difference $\kappa-H$ becomes large compared with the
critical scale $\sqrt{\kappa}$\,, Theorem~\ref{thm-hc2-FK1} shows
that the energy of the bulk becomes
dominant to leading order.

As a corollary of Theorem~\ref{thm-hc2-FK1},
we get the following properties of the minimizing order parameter.

\begin{corol}\label{corol-hc2-FK1}
Assume that the magnetic field satisfies,
$$H=\kappa-\mu(\kappa)\sqrt{\kappa}\,,\quad{\rm such~ that~}
\lim_{\kappa\to\infty} \frac{\mu(\kappa)}{\sqrt{\kappa}}=0\,.
$$
Then for any minimizer $(\psi,\Ab)$ of the energy $\mathcal E$ in
\eqref{eq-hc2-GL}, and any
open, smooth domain  $D\subset\Omega$,
we have
\begin{align}
  \label{eq:14}
  \kappa\int_{D}|\psi|^4\,dx = 2\left(E_1|\overline D\cap\partial\Omega|
 +[\mu(\kappa)]^2_+E_2|D|\right) + o(\max(1,[\mu(\kappa)]_{+}^2))\,.
\end{align}
\end{corol}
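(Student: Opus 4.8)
The plan is to derive \eqref{eq:14} from the energy expansion \eqref{eq-hc2-asymp'} by testing the first Ginzburg--Landau equation against the order parameter. For a minimizer the variation of $\mathcal E$ with respect to $\overline\psi$ gives
\[
-(\nabla-i\kappa H\Ab)^2\psi=\kappa^2(1-|\psi|^2)\psi \quad\text{in }\Omega,\qquad \nu\cdot(\nabla-i\kappa H\Ab)\psi=0\quad\text{on }\partial\Omega.
\]
Multiplying by $\overline\psi$, integrating over a smooth subdomain $D$, and performing a magnetic integration by parts, one obtains
\[
\int_D|(\nabla-i\kappa H\Ab)\psi|^2\,dx-\kappa^2\int_D|\psi|^2\,dx=-\kappa^2\int_D|\psi|^4\,dx+\mathcal B_D,
\]
where the boundary term $\mathcal B_D:=\mathrm{Re}\int_{\partial D\cap\Omega}\overline\psi\,\nu\cdot(\nabla-i\kappa H\Ab)\psi\,ds$ is supported only on the interior part $\partial D\cap\Omega$ of $\partial D$, the contribution of $\partial D\cap\partial\Omega$ vanishing by the boundary condition. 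Inserting this into the definition of $\mathcal E(\psi,\Ab;D)$ and isolating the quartic term yields the exact relation
\[
\frac{\kappa^2}{2}\int_D|\psi|^4\,dx=-\mathcal E(\psi,\Ab;D)+\mathcal B_D+(\kappa H)^2\int_D|\curl(\Ab-\FF)|^2\,dx.
\]

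Multiplying by $2/\kappa$ and inserting \eqref{eq-hc2-asymp'}, the energy term reproduces exactly the main term $2\bigl(E_1|\overline D\cap\partial\Omega|+[\mu(\kappa)]_+^2E_2|D|\bigr)+o(\max(1,[\mu(\kappa)]_+^2))$. It then remains to show that the two residual contributions, $\tfrac2\kappa\mathcal B_D$ and $2\kappa H^2\int_D|\curl(\Ab-\FF)|^2$, are $o(\max(1,[\mu(\kappa)]_+^2))$. The magnetic term is dispatched by the a priori estimate on the induced field: since $H\sim\kappa$, the known smallness of $\int_\Omega|\curl(\Ab-\FF)|^2\,dx$, furnished by the second Ginzburg--Landau equation and standard elliptic regularity, makes this term negligible on the required scale.

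The main obstacle is the boundary term $\mathcal B_D$, a one-dimensional integral over the curve $\partial D\cap\Omega$ which is not directly controlled by the bulk $L^2$-quantities at our disposal. I would overcome this by exploiting the fact that \eqref{eq-hc2-asymp'} holds for \emph{every} smooth subdomain and averaging the identity over a family of nested domains. Concretely, foliate a one-sided tubular neighborhood $\mathcal A_\delta=\{0<\dist(\cdot,\partial D\cap\Omega)<\delta\}$, of width $\delta=\delta(\kappa)\to0$, by the level sets $\{D_t\}_{0<t<\delta}$, apply the relation above to each $D_t$, and combine the co-area formula with Cauchy--Schwarz to get
\[
\int_0^\delta|\mathcal B_{D_t}|\,dt\le\int_{\mathcal A_\delta}|\psi|\,|(\nabla-i\kappa H\Ab)\psi|\,dx\le\|\psi\|_{L^2(\mathcal A_\delta)}\,\|(\nabla-i\kappa H\Ab)\psi\|_{L^2(\mathcal A_\delta)}.
\]
By the mean value theorem there is $t^\ast\in(0,\delta)$ with $|\mathcal B_{D_{t^\ast}}|\le\delta^{-1}\|\psi\|_{L^2(\mathcal A_\delta)}\|(\nabla-i\kappa H\Ab)\psi\|_{L^2(\mathcal A_\delta)}$. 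Bounding the kinetic energy through the energy estimate applied with $D=\Omega$, and using the $L^\infty$-bound of the paper — which forces $|\psi|$ to be uniformly small in the bulk, so that $\|\psi\|_{L^2(\mathcal A_\delta)}$ is small — one checks that $\tfrac2\kappa|\mathcal B_{D_{t^\ast}}|=o(\max(1,[\mu]_+^2))$ once $\delta$ is sent to $0$ slowly (e.g. $\kappa\delta\to\infty$). Since $D_{t^\ast}$ differs from $D$ only by the thin set $\mathcal A_{t^\ast}\subset\mathcal A_\delta$, the quantities $|D_{t^\ast}|$, $|\overline{D_{t^\ast}}\cap\partial\Omega|$ and $\kappa\int_{D_{t^\ast}}|\psi|^4$ differ from their counterparts for $D$ by $o(\max(1,[\mu]_+^2))$, so replacing $D$ by $D_{t^\ast}$ proves \eqref{eq:14}. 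The delicate point in making this rigorous is the simultaneous control of $\|\psi\|_{L^2(\mathcal A_\delta)}$ and the kinetic energy on $\mathcal A_\delta$ when $\partial D\cap\Omega$ approaches $\partial\Omega$, where $\psi$ need not be small; this is precisely where the novel $L^\infty$-bound of the paper plays its role.
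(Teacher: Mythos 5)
Your starting identity is exactly the paper's: both arguments test the first equation in \eqref{eq-hc2-GLeq} against $\overline\psi$ on $D$, integrate by parts, and insert the expansion \eqref{eq-hc2-asymp'}. The genuine gap is in your disposal of the magnetic term $(\kappa H)^2\int_D|\curl(\Ab-\FF)|^2\,dx$. No ``a priori estimate plus standard elliptic regularity'' is strong enough here: Lemma~\ref{lem-hc2-FoHe} gives only $\|\curl\Ab-1\|_{L^\infty(\Omega)}\leq C\kappa^{-1}$, hence $(\kappa H)^2\int_D|\curl\Ab-1|^2\,dx=O(\kappa^2)$, and even the improved bound of Proposition~\ref{prop:CurlLinfty} combined with Theorem~\ref{thm:Linfty-bulk} gives $\|\curl\Ab-1\|_{L^\infty}\leq CH^{-1}\lambda(\kappa)$, so at best $(\kappa H)^2\int_D|\curl\Ab-1|^2\,dx=O(\kappa^2\lambda^2)$, which in the present regime is of order $|\mu(\kappa)|\kappa^{3/2}+\kappa^{2\delta}$. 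The corollary needs this term to be $o(\max(1,[\mu(\kappa)]_+^2)\kappa)$, and since $|\mu(\kappa)|\ll\sqrt{\kappa}$ neither bound suffices; for instance with $\mu\equiv 1$ you need $o(\kappa)$ but only have $O(\kappa^{3/2})$. The paper's actual mechanism is Corollary~\ref{corol-thm-hc2-lb}, which is an energy-comparison (squeezing) argument, not a regularity estimate: the lower bound of Section~5 is proved for the reduced functional $\mathcal E_0$, i.e.\ \emph{without} the field term, while the upper bound of Theorem~\ref{thm-hc2-FK-ub} controls the full functional $\mathcal E$; since the two match to precision $o(\max(1,[\mu(\kappa)]_+^2)\kappa)$, the difference $(\kappa H)^2\int_\Omega|\curl\Ab-1|^2\,dx=\mathcal E-\mathcal E_0$ is forced to be of that size. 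This idea is absent from your proposal and cannot be replaced by pointwise estimates on $\curl\Ab$.

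Your treatment of the boundary term is also heavier and leakier than necessary. The paper bounds $\int_{\Omega\cap\partial D}\overline\psi\,\nu\cdot(\nabla-i\kappa H\Ab)\psi\,d\sigma$ for the \emph{fixed} domain $D$ (see the proof of \eqref{eq-hc2-D-bnd-cont}): split the curve at distance $g_1(\kappa)$ from $\partial\Omega$ with $g_1\to0$, $\kappa g_1\to\infty$; on the near piece use $|\psi|\leq1$, the pointwise bound \eqref{eq-hc2-FoHe2} and the vanishing length of that piece; on the far piece use Lemma~\ref{lem-FH-jems}. This gives $o(\kappa)$ with no change of domain. Your scheme instead establishes the identity on a $\kappa$-dependent domain $D_{t^*}$, and to conclude you must apply \eqref{eq-hc2-asymp'} to $D_{t^*}$; but Theorem~\ref{thm-hc2-FK1} is stated for a fixed smooth domain, and uniformity of its $o(\cdot)$ over a shrinking family of domains is neither stated nor proved. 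The alternative of transferring the energy back to $D$ via $|\mathcal E(\psi,\Ab;D)-\mathcal E(\psi,\Ab;D_{t^*})|\leq C\kappa^2\delta$ (the only available density bound, from \eqref{eq-hc2-FoHe2}) fails, because your own Cauchy--Schwarz estimate forces $\delta\gg\kappa^{-1/2}$, making $\kappa^2\delta\gg\kappa^{3/2}$; in particular the rate $\kappa\delta\to\infty$ you propose is too weak, since with $\|\psi\|_{L^2(\Omega)}\leq C\zeta$ and $\zeta^2\approx\max(1,|\mu|)\kappa^{-1/2}$ the bound $|\mathcal B_{D_{t^*}}|\leq C\kappa\zeta^2\delta^{-1}$ is $o(\max(1,[\mu]_+^2)\kappa)$ only when $\delta\gg\kappa^{-1/2}$. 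So even setting aside the magnetic term, the averaging route does not close as written, whereas the paper's direct pointwise argument does.
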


We conclude by stating a sharp $L^\infty$-bound in the following
theorem. The motivation for this is twofold.
Taken together with \cite[Theorem~2.1]{FH08}, it is
an affermative
answer to a precise version
of
a conjecture by Sandier-Serfaty
\cite{SS02} and Aftalion-Serfaty \cite{AS}. It also plays a key-role
in the proof of Theorem~\ref{thm-hc2-FK1} announced above.

\begin{thm}
\label{thm:Linfty-bulk} Let $\delta \in (0,1)$ and $g:\R_+\to\R_+$
be a function such that $g(\kappa)/\kappa \rightarrow 0$ as $\kappa\to\infty$.
Then there exists a constant $C>0$ such that if $|H - \kappa|\leq g(\kappa)$, then
\begin{align}
\| \psi \|_{L^{\infty}(\omega_{\kappa})} \leq C \lambda(\kappa),
\end{align}
for all critical points $(\psi, {\bf A})$ of the energy in
\eqref{eq-hc2-GL}.

Here
\begin{align}
\omega_{\kappa} := \{ x \in \Omega\,|\, \dist(x,\partial \Omega)
\geq \kappa^{-1+\delta}\},
\end{align}
and
\begin{align}
\lambda(\kappa) := \max\left\{ \left|\frac{\kappa}{H} -
    1\right|^{1/2},
\kappa^{-1+\delta}\right\}.
\end{align}
\end{thm}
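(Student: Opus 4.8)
The plan is to prove the $L^\infty$-bound on $\psi$ in the bulk region $\omega_\kappa$ by combining elliptic estimates with a comparison argument exploiting the Ginzburg-Landau equations satisfied by a critical point $(\psi,\Ab)$. Let me sketch the strategy.

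\medskip

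\noindent\emph{Step 1: The equation and a pointwise differential inequality.}
A critical point $(\psi,\Ab)$ satisfies the first Ginzburg-Landau equation
\[
-(\nabla-i\kappa H\Ab)^2\psi=\kappa^2(1-|\psi|^2)\psi\quad\text{in }\Omega.
\]
The standard first step is to obtain the crude a priori bound $\|\psi\|_{L^\infty(\Omega)}\le 1$ via the maximum principle applied to $|\psi|^2$: one verifies that $-\Delta|\psi|^2\le 2\kappa^2(1-|\psi|^2)|\psi|^2$ in the sense that the Kato-type inequality $|(\nabla-i\kappa H\Ab)\psi|^2\ge|\nabla|\psi||^2$ gives, for $u=|\psi|^2$,
\[
-\tfrac12\Delta u + |(\nabla-i\kappa H\Ab)\psi|^2 = \kappa^2(1-u)u,
\]
so wherever $u>1$ we would have $-\Delta u<0$, contradicting an interior maximum. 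This yields $|\psi|\le 1$ globally and is the starting point, but it is far from the claimed sharp bound $\lambda(\kappa)$.

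\medskip

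\noindent\emph{Step 2: Rescaling to the magnetic length and local elliptic estimates.}
The true magnetic length scale is $(\kappa H)^{-1/2}\sim\kappa^{-1}$. I would fix a point $x_0\in\omega_\kappa$, so that the ball $B(x_0,\kappa^{-1+\delta})\subset\Omega$, and rescale by setting $y=\kappa(x-x_0)$. In the rescaled variable the magnetic Laplacian becomes order-one and the nonlinear term becomes $(1-|\psi|^2)\psi$ without the large prefactor $\kappa^2$. The key quantitative input is that after a gauge transformation the effective magnetic operator on a unit ball has spectrum bounded below, and the crucial eigenvalue parameter measuring how far $H$ is from $\kappa$ is precisely $|\kappa/H-1|$, which controls the term $\kappa^2-(\kappa H)\times(\text{lowest Landau level})$. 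Applying interior $L^p\to L^\infty$ elliptic (Schauder or $W^{2,p}$) estimates on the unit ball, using $\|\psi\|_{L^\infty}\le1$ to bound the nonlinearity, bootstraps a bound of the form $\|\psi\|_{L^\infty(B(x_0,\tfrac12\kappa^{-1+\delta}))}\le C(\text{local }L^2\text{-average of }\psi + \text{lower-order})$.

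\medskip

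\noindent\emph{Step 3: Closing the estimate via the energy/spectral gap.}
The heart of the argument is to convert the local elliptic bound into the sharp scale $\lambda(\kappa)=\max\{|\kappa/H-1|^{1/2},\kappa^{-1+\delta}\}$. I would test the equation against $\psi$ locally, integrate by parts, and use the diamagnetic-type lower bound on $\int|(\nabla-i\kappa H\Ab)\psi|^2$ coming from the lowest Landau level, which contributes $(\kappa H)\int|\psi|^2$. This exactly cancels against part of $\kappa^2\int|\psi|^2$, leaving the residual coefficient $\kappa^2-\kappa H=\kappa H(\kappa/H-1)$ multiplying $\int|\psi|^2$, and the quartic term $-\kappa^2\int|\psi|^4\le0$. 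When $\kappa/H-1\le0$ this forces $\psi$ to be small directly; when it is positive the residual must balance the quartic term, giving $|\psi|^2\lesssim|\kappa/H-1|$ on average, hence the exponent $1/2$ in $\lambda$. The floor $\kappa^{-1+\delta}$ enters to absorb the boundary-layer error from restricting to $\omega_\kappa$ and the commutator/curvature terms generated by localizing the magnetic operator.

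\medskip

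\noindent The step I expect to be the main obstacle is Step~3: making the spectral cancellation precise and \emph{local} (rather than global over $\Omega$, where it would be cleaner) requires carefully controlling the localization error and the deviation of $\Ab$ from $\FF$. One cannot simply invoke the global lowest-Landau-level bound, since the sharp constant $\lambda(\kappa)$ leaves no room for losses; the challenge is to propagate the average $L^2$-smallness into a genuine pointwise $L^\infty$-bound uniformly over $\omega_\kappa$, presumably by iterating Steps~2 and~3 on a sequence of shrinking balls (a De Giorgi/Moser-type iteration) so that the constant $C$ does not degenerate and the exponents match exactly.
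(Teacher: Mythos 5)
Your proposal does not follow the paper's route, and as it stands it has a gap that cannot be repaired within the strategy you outline. The paper does not prove the bound constructively by energy/spectral estimates at all: it argues by contradiction via a blow-up. One assumes $\lambda_n^{-1}\|\psi_n\|_{L^\infty(\omega_{\kappa_n})}\to\infty$ along a sequence, selects (by a pigeonhole/doubling trick on nested neighborhoods) a point $P_n$ where $\Lambda_n:=|\psi_n(P_n)|$ essentially realizes the local sup, rescales at the magnetic length and normalizes by $\Lambda_n$; elliptic estimates give a limit $\varphi_\infty$ with $|\varphi_\infty(0)|=1$ solving the \emph{linear} Landau equation $[(-i\nabla+\Ab_0)^2-1]\varphi_\infty=0$, and—this is the crucial nonlinear input—testing the equation against $\Pi_0 f$ (the lowest-Landau-level projection of test functions) and dividing by $\Lambda_n^2$ yields $\Pi_0\big(|\varphi_\infty|^2\varphi_\infty\big)=0$. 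A rigidity theorem from \cite{FH08} says these two facts force $\varphi_\infty\equiv 0$, a contradiction. None of this nonlinear structure (the vanishing of the LLL-projection of the cube of the blow-up limit) appears in your proposal, and it is exactly what replaces the "spectral cancellation" you hope to use.

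The reason your Steps 2--3 cannot reach the claimed sharpness is quantitative. First, the lowest-Landau-level lower bound is only as good as the constancy of the field: with the a priori estimate $\|\curl\Ab-1\|_{L^\infty}\leq C\kappa^{-1}$ (Lemma~\ref{lem-hc2-FoHe}), the cancellation $\kappa^2-\kappa H\,\curl\Ab$ carries an error of size $H\,\|\curl\Ab-1\|_{L^\infty}\sim 1$ per unit of $\int|\psi|^2$, i.e.\ it can only ever give $|\psi|^2\lesssim |\kappa/H-1|+C\kappa^{-1}$ on average. But the theorem must hold in the regime $|\kappa/H-1|\ll\kappa^{-1}$, where it asserts $|\psi|\lesssim \kappa^{-1+\delta}$, i.e.\ $|\psi|^2\lesssim\kappa^{-2+2\delta}$ --- far below the $\kappa^{-1}$ error floor. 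The only improvement of the field estimate, Proposition~\ref{prop:CurlLinfty}, has $\|\psi\|_{L^\infty(\omega_\kappa)}$ itself on the right-hand side, so using it is circular unless embedded in a contradiction scheme, which is precisely what the paper does. Second, even granting a perfect average bound, your passage from $L^2/L^4$-averages to a pointwise bound fails: Moser/De Giorgi iteration applied to $-\Delta|\psi|^2\leq 2\kappa^2|\psi|^2$ controls the sup only by local averages at the magnetic scale $\kappa^{-1}$, and a global average does not preclude concentration of $\int|\psi|^4$ on a single ball of radius $\kappa^{-1}$ (which would cost you a full factor of $\kappa$); localizing the spectral-gap inequality at that scale is impossible since the IMS error $|\nabla\chi|^2\sim\kappa^2$ is of the same order as the terms you want to keep. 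You flagged this obstacle yourself, but iterating on shrinking balls does not remove it; the missing idea is the blow-up at a near-maximum point combined with the Liouville-type rigidity of \cite{FH08}, for which no purely linear spectral argument can substitute.
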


In the regime of applied fields $H=\kappa-\mu(\kappa)\sqrt{\kappa}$
with
$\displaystyle\lim_{\kappa\to\infty}\frac{\mu(\kappa)}{\sqrt{\kappa}}=0$
and $\displaystyle\lim_{\kappa\to\infty}\mu(\kappa)=\mu_0\in
(0,+\infty]$, the estimate of Theorem~\ref{thm:Linfty-bulk} is
optimal. In this regime, the constant $\lambda(\kappa)$
above is equal to $\left|\frac{\kappa}{H}-1\right|^{1/2}$. It
follows from Corollary~\ref{corol-hc2-FK1} that there exists a
constant $c>0$ such that, for any minimizer $(\psi,\Ab)$, we have,
$$c\left|\frac{\kappa}{H}-1\right|^{1/2}\leq \|\psi\|_{L^\infty(\omega_\kappa)}\,.$$

The paper is organized as follows.
Section~\ref{hc2-sec-preliminaries} is devoted to some
preliminaries, in particular, we recall some {\it a
  priori} estimates together with the definitions of
the universal constants $E_1$ and $E_2$.
 The proof of
Theorem~\ref{thm:Linfty-bulk} is given in
Section~\ref{sec-hc2-Linfty}.
In Sections~\ref{hc2-sec-ub} and
\ref{hc2-sec-lb},
matching upper and lower bounds for the functional
in \eqref{eq-hc2-GL} are obtained.
Finally, Section~\ref{sec-hc2-proofs} concludes with
the proof of Theorem~\ref{thm-hc2-FK1}.

\section{Preliminaries}\label{hc2-sec-preliminaries}
\subsection{A priori estimates}\label{hc2-sec:ape} In this
section, we collect some useful estimates for critical points of the
Ginzburg-Landau functional $\mathcal E$ introduced in
(\ref{eq-hc2-GL}).
The set of estimates
in Lemma~\ref{lem-hc2-FoHe} appeared first in \cite{LuPa99} (for a
more particular regime) and were then  proved for a wider regime in
\cite{FoHe08, FoHe06}. The estimate of Lemma~\ref{lem-FH-jems} was
proved recently in \cite{FH08}.

Notice that a critical point $(\psi,\Ab)$ of the functional
$\mathcal E$ is a solution of the Ginzburg-Landau equations:
\begin{equation}\label{eq-hc2-GLeq}
\left\{
\begin{array}{l}
-(\nabla-i\kappa H\Ab)^2\psi=\kappa^2(1-|\psi|^2)\psi\,,\\
-\nabla^\bot\curl\Ab=(\kappa H)^{-1}{\rm
Im}(\overline\psi\,(\nabla-i\kappa
H\Ab)\psi)\,,\quad{\rm in}~\Omega\,,\\
\nu\cdot(\nabla-i\kappa H\Ab)\psi=0\,,\quad \curl\Ab=1\,,\quad{\rm
  on}~\partial\Omega\,.
\end{array}\right.
\end{equation}
Here $\nu$ is the unit inward normal vector of $\partial\Omega$.

We start with the following direct consequence of the maximum
principle.

\begin{lem}\label{lem-hc2-maxp}(\cite[Chapter~3]{SaSe})
Let $(\psi,\Ab)$ be a solution  of (\ref{eq-hc2-GLeq}). Then
$|\psi|\leq 1$ in $\overline\Omega$.
\end{lem}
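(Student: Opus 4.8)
The plan is to reduce the complex, gauge-covariant equation to a scalar differential inequality for $u := |\psi|^2$ and then invoke the maximum principle. Writing $\Pi_{\Ab} := (\nabla - i\kappa H\Ab)$ for the magnetic gradient, I would first record the gauge-invariant identity
\[
\tfrac12\Delta |\psi|^2 = |\Pi_{\Ab}\psi|^2 + \operatorname{Re}\big(\overline\psi\,\Pi_{\Ab}^2\psi\big),
\]
which follows from $\tfrac12\nabla|\psi|^2 = \operatorname{Re}(\overline\psi\,\Pi_{\Ab}\psi)$ (the connection term $-i\kappa H\Ab|\psi|^2$ being purely imaginary) together with one more differentiation. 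Substituting the first Ginzburg--Landau equation in the form $\Pi_{\Ab}^2\psi = -\kappa^2(1-|\psi|^2)\psi$ gives $\operatorname{Re}(\overline\psi\,\Pi_{\Ab}^2\psi) = \kappa^2 u(u-1)$, and discarding the nonnegative term $|\Pi_{\Ab}\psi|^2$ yields the key pointwise inequality
\[
\Delta u \ge 2\kappa^2\, u(u-1)\qquad\text{in }\Omega.
\]
Elliptic regularity for the Ginzburg--Landau system ensures that $\psi$, hence $u$, is smooth up to $\overline\Omega$, so this computation is classical.

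Next I would translate the boundary condition into a condition on $u$. Since $\nu\cdot\nabla u = 2\operatorname{Re}(\overline\psi\,\nu\cdot\Pi_{\Ab}\psi)$ by the same orthogonality used above, the Neumann-type condition $\nu\cdot\Pi_{\Ab}\psi = 0$ on $\partial\Omega$ immediately produces the homogeneous Neumann condition
\[
\partial_\nu u = 0\qquad\text{on }\partial\Omega.
\]

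With the inequality and the Neumann condition in hand, I would run a maximum-principle argument. Let $M := \max_{\overline\Omega} u$, attained at some $x_0\in\overline\Omega$, and suppose for contradiction that $M>1$. If $x_0$ is an interior point, then $\Delta u(x_0)\le 0$, whereas the inequality forces $\Delta u(x_0)\ge 2\kappa^2 M(M-1)>0$, a contradiction. Otherwise the maximum is attained only on $\partial\Omega$, and I would localize to the super-level set $\{u>1\}$, on which the right-hand side above is strictly positive so that $u$ is strictly subharmonic; choosing an interior ball $B\subset\Omega$ tangent to $\partial\Omega$ at $x_0$ and small enough that $u>1$ on $B$, Hopf's boundary point lemma gives a strictly positive \emph{outward} normal derivative at $x_0$, contradicting $\partial_\nu u(x_0)=0$. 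Hence $M\le 1$, i.e. $|\psi|\le 1$ on $\overline\Omega$.

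The identity and the interior case are routine. The main obstacle is the boundary case: one must invoke Hopf's lemma with the correct sign convention (here $\nu$ is the \emph{inward} normal, so the relevant outward derivative is $-\partial_\nu u$, and a strictly positive outward derivative contradicts $\partial_\nu u(x_0)=0$), and one must verify that its hypotheses are genuinely met---smoothness of $\partial\Omega$, the interior-ball condition, and strict subharmonicity near $x_0$. The passage to the super-level set $\{u>1\}$ is what makes these hypotheses transparent, since there $\Delta u>0$ strictly and the troublesome sign-indefinite coefficient $2\kappa^2(u-1)$ is positive.
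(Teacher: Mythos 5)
Your proof is correct and takes the same route the paper intends: the paper gives no proof of its own, presenting the lemma as a ``direct consequence of the maximum principle'' with a citation to \cite[Chapter~3]{SaSe}, and your argument---the gauge-invariant identity $\tfrac12\Delta|\psi|^2=|(\nabla-i\kappa H\Ab)\psi|^2+\operatorname{Re}\bigl(\overline\psi\,(\nabla-i\kappa H\Ab)^2\psi\bigr)$, the resulting inequality $\Delta|\psi|^2\geq 2\kappa^2|\psi|^2(|\psi|^2-1)$, the homogeneous Neumann condition for $|\psi|^2$ coming from $\nu\cdot(\nabla-i\kappa H\Ab)\psi=0$, and the interior-maximum/Hopf-lemma dichotomy---is exactly that maximum-principle proof, carried out in full and with the boundary case (including the sign convention for the inward normal $\nu$) handled correctly.
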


We also have elliptic estimates on the magnetic field and the energy density.

\begin{lem}\label{lem-hc2-FoHe}(Fournais-Helffer \cite{FoHe06})
There exist positive constants $\kappa_0$ and $C$ such that, if the
magnetic field satisfies $H\geq\frac\kappa2$ and if $(\psi,\Ab)$ is
a critical point of (\ref{eq-hc2-GL}), then for all
$\kappa\geq\kappa_0$, the following estimates hold,
\begin{align}\label{eq-hc2-FoHe1}
&\|\curl(\Ab-\FF)\|_{C^1(\Omega)}+\kappa^{-1}\|\curl(\Ab-\FF)\|_{C^2(\Omega)}\leq
C\kappa^{-1}\,,\\
&\|(\nabla-i\kappa H\Ab)\psi\|_{L^\infty(\Omega)}\leq
C\kappa\,,\quad e_{\kappa,H}(\psi,\Ab)\leq
C\kappa^2\,.\label{eq-hc2-FoHe2}
\end{align}
\end{lem}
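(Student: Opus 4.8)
The plan is to prove the bound pointwise: fix an interior point $x_0\in\omega_\kappa$ and estimate $|\psi(x_0)|$ by working on a ball $B(x_0,R)\subset\Omega$ whose radius will be chosen of the order $R\sim\kappa^{-1}/\lambda(\kappa)$. The condition $\dist(x_0,\partial\Omega)\ge\kappa^{-1+\delta}$ together with the floor $\kappa^{-1+\delta}$ built into $\lambda(\kappa)$ is precisely what guarantees that such a ball fits well inside $\Omega$, so that boundary effects do not enter and the Neumann conditions in \eqref{eq-hc2-GLeq} are invisible on $B(x_0,R)$. Throughout, the hypothesis $|H-\kappa|\le g(\kappa)=o(\kappa)$ ensures $H\ge\kappa/2$ for large $\kappa$, so that Lemma~\ref{lem-hc2-FoHe} applies; in particular $\curl\Ab=1+O(\kappa^{-1})$ in $C^1$, whence the effective field $b:=\kappa H\,\curl\Ab$ satisfies $b\ge\kappa H-C\kappa$ on $B(x_0,R)$.

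The heart of the argument is a localized energy identity. Let $\chi$ be a cut-off equal to $1$ on $B(x_0,R)$, supported in $B(x_0,2R)$, with $|\nabla\chi|\le C/R$. Testing the first equation in \eqref{eq-hc2-GLeq} against $\chi^2\psi$ and using the IMS localization formula, the two cross terms cancel and one is left with the clean identity
\begin{equation*}
\int|(\nabla-i\kappa H\Ab)(\chi\psi)|^2\,\md x=\int|\nabla\chi|^2|\psi|^2\,\md x+\kappa^2\int\chi^2(1-|\psi|^2)|\psi|^2\,\md x.
\end{equation*}
For the left-hand side I would invoke the magnetic lower bound $\int|(\nabla-i\kappa H\Ab)\phi|^2\,\md x\ge\int b\,|\phi|^2\,\md x$ for compactly supported $\phi$, which follows from the commutator of the two covariant derivatives. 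Combined with $b\ge\kappa H-C\kappa$ this yields
\begin{equation*}
\kappa^2\int\chi^2|\psi|^4\,\md x\le\int|\nabla\chi|^2|\psi|^2\,\md x+\kappa^2\Lambda\int\chi^2|\psi|^2\,\md x,\qquad\Lambda:=\Bigl(1-\tfrac H\kappa\Bigr)+\tfrac C\kappa.
\end{equation*}
A Cauchy–Schwarz argument on $B(x_0,2R)$ (bounding $\int\chi^2|\psi|^2\,\md x$ by $\|\chi\|_{L^2}(\int\chi^2|\psi|^4\,\md x)^{1/2}$) together with the choice $R\sim\kappa^{-1}/\lambda(\kappa)$, which makes the localization error $\kappa^{-2}R^{-2}$ subdominant to $\Lambda$, then produces the averaged estimate $|B(x_0,R)|^{-1}\int_{B(x_0,R)}|\psi|^2\,\md x\le C\lambda(\kappa)^2$ (and the analogous $L^4$ average).

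One subtlety must be addressed before this is conclusive: the crude bound $b\ge\kappa H-C\kappa$ loses a factor $O(\kappa)$, so that the constant $\Lambda$ above carries a spurious term $C/\kappa$ that exceeds $\lambda(\kappa)^2$ when $\delta<\tfrac12$. The way I would remove it is to observe that the field correction $\kappa H\,\curl(\Ab-\FF)$ is itself of higher order in $\psi$: feeding the (previously obtained, or a priori) smallness of $|\psi|$ in the bulk back into the second Ginzburg–Landau equation in \eqref{eq-hc2-GLeq} shows $\curl(\Ab-\FF)=O(\|\psi\|^2_{L^\infty(\mathrm{bulk})})$ there, so the offending term is $O(\kappa\|\psi\|^2)$ and can be absorbed into the left-hand side. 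This closes a bootstrap in which $\lambda(\kappa)^2$ (rather than $\kappa^{-1}$) is the genuine threshold; the input of \cite{FH08} can serve as the starting point of this iteration.

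The remaining, and main, difficulty is to upgrade this averaged smallness to the pointwise bound $|\psi(x_0)|\le C\lambda(\kappa)$. The obstruction is that the spectral gain $\Lambda$ lives only in the integrated identity: a naive use of Kato's inequality $-\Delta|\psi|\le\kappa^2(1-|\psi|^2)|\psi|\le\kappa^2|\psi|$ followed by the local maximum principle converts averages into suprema only on the magnetic scale $\kappa^{-1}$, and passing from an average over $B(x_0,R)$ to one over a $\kappa^{-1}$-ball costs a factor $(R\kappa)^2=\lambda(\kappa)^{-2}$, destroying the entire gain. The point I would exploit is that the sub-mean-value step must be carried out \emph{with the field retained}: the function $\chi\psi$ is a low-energy state for $-(\nabla-i\kappa H\Ab)^2-\kappa^2$, whose bottom sits at distance $\sim\kappa^2\Lambda$, so the correct coercivity scale is exactly $R=(\kappa^2\Lambda)^{-1/2}\sim\kappa^{-1}/\lambda(\kappa)$. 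Implementing a sub-mean-value inequality at this scale—for instance through the diamagnetic comparison $|e^{t(\nabla-i\kappa H\Ab)^2}\psi|\le e^{t\Delta}|\psi|$ for the magnetic heat semigroup, used together with the spectral estimate $e^{t(\nabla-i\kappa H\Ab)^2}\le e^{-t\kappa H}$—should give $\|\psi\|_{L^\infty(B(x_0,R/2))}^2\le C\,|B(x_0,R)|^{-1}\int_{B(x_0,R)}|\psi|^2\,\md x\le C\lambda(\kappa)^2$. Taking the supremum over $x_0\in\omega_\kappa$ then finishes the proof. I expect this last step—producing a magnetic sub-mean-value inequality at scale $R$ so that no power of $\lambda(\kappa)$ is lost—to be the crux, the energy identity and the inputs from Lemma~\ref{lem-hc2-FoHe} being comparatively routine.
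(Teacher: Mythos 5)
Your proposal does not prove the statement at hand. The statement is Lemma~\ref{lem-hc2-FoHe}, the \emph{a priori} elliptic estimates of Fournais--Helffer: the bound $\|\curl(\Ab-\FF)\|_{C^1(\Omega)}+\kappa^{-1}\|\curl(\Ab-\FF)\|_{C^2(\Omega)}\leq C\kappa^{-1}$, the gradient bound $\|(\nabla-i\kappa H\Ab)\psi\|_{L^\infty(\Omega)}\leq C\kappa$, and the pointwise energy-density bound $e_{\kappa,H}(\psi,\Ab)\leq C\kappa^2$, valid for every critical point whenever $H\geq\kappa/2$. What you have written is instead an attack on Theorem~\ref{thm:Linfty-bulk}, the improved $L^\infty$-bound $\|\psi\|_{L^\infty(\omega_\kappa)}\leq C\lambda(\kappa)$: your entire argument is organized around the scale $R\sim\kappa^{-1}/\lambda(\kappa)$, the set $\omega_\kappa$, and the hypothesis $|H-\kappa|\leq g(\kappa)$, none of which appear in the lemma. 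Worse, the argument is circular as a proof of the lemma, since in your first paragraph you explicitly \emph{invoke} Lemma~\ref{lem-hc2-FoHe} ($\curl\Ab=1+O(\kappa^{-1})$ in $C^1$, and later the gradient bound implicitly through the ``effective field'' manipulations), so at no point do you derive any of the three estimates; they enter only as inputs. Nothing in the proposal addresses the $C^1$ or $C^2$ regularity of $\curl(\Ab-\FF)$, and the pointwise bound on the energy density $e_{\kappa,H}$ is never mentioned.

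For the record, the paper does not prove this lemma either --- it cites it from \cite{FoHe06} --- and the actual argument there is of a quite different nature from yours: it is a regularity bootstrap, not a spectral/energy argument. Starting from the maximum-principle bound $|\psi|\leq1$ (Lemma~\ref{lem-hc2-maxp}), one rescales the Ginzburg--Landau system \eqref{eq-hc2-GLeq} to the magnetic length scale $(\kappa H)^{-1/2}$, where it becomes a uniformly elliptic system with bounded data; the second equation, whose right-hand side $(\kappa H)^{-1}{\rm Im}(\overline\psi(\nabla-i\kappa H\Ab)\psi)$ is controlled once $|\psi|\leq1$ and a preliminary gradient bound hold, yields the $C^1$ and $C^2$ bounds on $\curl(\Ab-\FF)$ by Schauder estimates, and elliptic estimates on the first equation give $\|(\nabla-i\kappa H\Ab)\psi\|_{L^\infty}\leq C\kappa$ after scaling back; the pointwise bound $e_{\kappa,H}(\psi,\Ab)\leq C\kappa^2$ then follows by inserting these bounds termwise into the energy density (note $(\kappa H)^2|\curl(\Ab-\FF)|^2\leq C\kappa^4\cdot\kappa^{-2}=C\kappa^2$). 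If you intend your text as a proof of Theorem~\ref{thm:Linfty-bulk} it should be evaluated against the paper's Section~\ref{sec-hc2-Linfty}, where the method is again different (a blow-up/compactness argument by contradiction, projecting onto the lowest Landau level, rather than your heat-kernel sub-mean-value route); but as a proof of Lemma~\ref{lem-hc2-FoHe} it is simply not a proof of the stated result.
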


Finally, close to $H_{C_2}$ the estimate of Lemma~\ref{lem-hc2-maxp}
can be improved.

\begin{lem}\label{lem-FH-jems}(Fournais-Helffer \cite{FH08})
Assume that the magnetic field $H=H(\kappa)$ satisfies
$\frac{H}{\kappa}\to1$
as $\kappa\to\infty$. Then, given any function $g_1:\mathbb
R_+\to(0,1]$ satisfying
$$\lim_{\kappa\to\infty}g_1(\kappa)=0\,,\quad\lim_{\kappa\to\infty}\kappa
g_1(\kappa)=\infty\,,$$ there exists a function $g_2:\mathbb
R_+\to(0,1]$ such that
$$\lim_{\kappa\to\infty}g_2(\kappa)=0$$
and
\begin{equation}\label{eq-FH-jems}
\|\psi\|_{L^\infty(\{x\in\Omega~:~{\rm dist}(x,\partial\Omega)\geq
g_1(\kappa)\})}\leq g_2(\kappa)\,.
\end{equation}
\end{lem}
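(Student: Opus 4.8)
The plan is to argue by contradiction via a blow-up (rescaling) argument at the natural magnetic length scale, reducing the bulk decay to a Liouville-type rigidity theorem for a limiting Ginzburg--Landau equation on the whole plane. Suppose the conclusion fails. Then there exist $\delta_0>0$, a sequence $\kappa_n\to\infty$ with associated fields $H_n=H(\kappa_n)$, and points $x_n\in\Omega$ with $\dist(x_n,\partial\Omega)\geq g_1(\kappa_n)$ such that $|\psi_n(x_n)|\geq\delta_0$, where $(\psi_n,\Ab_n)$ is a corresponding critical point. I set the magnetic length $\ell_n=(\kappa_n H_n)^{-1/2}$ and rescale around $x_n$ by defining $\Psi_n(y)=\psi_n(x_n+\ell_n y)$, together with suitably gauged rescaled potentials $\tilde\Ab_n(y)=\ell_n\kappa_n H_n\Ab_n(x_n+\ell_n y)$. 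Since $\kappa_n H_n\sim\kappa_n^2$ we have $\ell_n\sim\kappa_n^{-1}$, and because $\kappa_n g_1(\kappa_n)\to\infty$ the rescaled domains exhaust $\R^2$; the Neumann boundary escapes to infinity, so the limiting equation will be posed on all of $\R^2$.

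The first step is to extract a limit. By Lemma~\ref{lem-hc2-maxp} we have $|\Psi_n|\leq1$, and the a priori bounds of Lemma~\ref{lem-hc2-FoHe} rescale to $\|(\nabla-i\tilde\Ab_n)\Psi_n\|_{L^\infty}\leq C\kappa_n\ell_n=O(1)$, together with $\curl\tilde\Ab_n\to1$ (using $\curl(\Ab_n-\FF)=O(\kappa_n^{-1})$ and $\curl\FF=1$). Fixing a local Coulomb gauge so that $\tilde\Ab_n\to\Ab_0$ in $C^\infty_{\rm loc}(\R^2)$ with $\curl\Ab_0=1$, elliptic regularity applied to the rescaled Ginzburg--Landau equation yields uniform $C^{1,\alpha}_{\rm loc}$ bounds on $\Psi_n$. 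Passing to a subsequence, $\Psi_n\to\Psi$ in $C^1_{\rm loc}(\R^2)$, where, using the hypothesis $H_n/\kappa_n\to1$ (hence $\kappa_n/H_n\to1$), the limit solves
\[
-(\nabla-i\Ab_0)^2\Psi=(1-|\Psi|^2)\Psi\quad\text{on }\R^2,\qquad |\Psi|\leq1,\qquad |\Psi(0)|\geq\delta_0.
\]

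The heart of the matter, and the step I expect to be the main obstacle, is the Liouville rigidity: any bounded solution of this limiting equation must vanish identically, contradicting $|\Psi(0)|\geq\delta_0$. The delicate point is that the coupling is exactly critical: the linearization $-(\nabla-i\Ab_0)^2\Psi=\Psi$ is solved by the entire lowest Landau level (including bounded Abrikosov-type states), so smallness cannot come from a naive spectral gap. The mechanism is instead a self-improving $L^4$ estimate. For a cutoff $\chi_R$ equal to $1$ on $B_R$, supported in $B_{2R}$, with $|\nabla\chi_R|\leq C/R$, the IMS localization formula combined with the equation gives
\[
\int_{\R^2}|(\nabla-i\Ab_0)(\chi_R\Psi)|^2
=\int_{\R^2}|\nabla\chi_R|^2|\Psi|^2+\int_{\R^2}\chi_R^2(1-|\Psi|^2)|\Psi|^2,
\]
and the lowest-Landau-level bound $-(\nabla-i\Ab_0)^2\geq\curl\Ab_0=1$ applied to $\chi_R\Psi$ yields
\[
\int_{\R^2}\chi_R^2|\Psi|^4\leq\int_{\R^2}|\nabla\chi_R|^2|\Psi|^2\leq\frac{C}{R^2}|B_{2R}|\leq C'.
\]
Letting $R\to\infty$ shows first that $\Psi\in L^4(\R^2)$. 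With this finiteness in hand the annular mass $\int_{B_{2R}\setminus B_R}|\Psi|^4$ tends to $0$, so by Cauchy--Schwarz $\int|\nabla\chi_R|^2|\Psi|^2\leq CR^{-1}(\int_{B_{2R}\setminus B_R}|\Psi|^4)^{1/2}\to0$; feeding this back into the inequality forces $\int_{\R^2}|\Psi|^4=0$, hence $\Psi\equiv0$. This is the desired contradiction.

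Finally, I would record the elementary auxiliary facts used above: the subsolution (Kato) inequality $-\Delta|\psi|\leq\kappa^2(1-|\psi|^2)|\psi|$ derived from the first Ginzburg--Landau equation, which together with $|\psi|\leq1$ justifies the elliptic regularity used in the compactness step; the diamagnetic inequality $|\nabla|\Psi||\leq|(\nabla-i\Ab_0)\Psi|$, which guarantees $|\Psi|\to0$ at infinity once $\Psi\in L^4$ and $\nabla\Psi$ is bounded; and the gauge fixing needed to make $\tilde\Ab_n$ converge. The only genuinely nontrivial ingredient is the critical Liouville theorem; everything else is compactness and rescaling of the a priori estimates already quoted in Lemmas~\ref{lem-hc2-maxp}--\ref{lem-hc2-FoHe}.
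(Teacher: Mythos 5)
Your proposal is correct. Note first that the paper does not actually prove this lemma---it quotes it from \cite{FH08}---so the natural in-paper comparison is with the proof of Theorem~\ref{thm:Linfty-bulk} in Section~3, which runs the same contradiction-plus-blow-up scheme you use: the same magnetic length $(\kappa H)^{-1/2}$, the same a priori inputs from Lemmas~\ref{lem-hc2-maxp} and \ref{lem-hc2-FoHe}, and the same gauge normalization as in \eqref{eq:a-F}. The genuine difference is the rigidity step. In Theorem~\ref{thm:Linfty-bulk} the blow-up is normalized by the amplitude $\Lambda_n\to0$, so the cubic term disappears and the limit only satisfies the linear equation \eqref{eq:59}; rigidity must then be recovered from the extra orthogonality \eqref{eq:complementary} via the nontrivial theorem of \cite{FH08}. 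In your setting the amplitude is bounded below by $\delta_0$, the full nonlinear equation survives in the limit, and your self-improving $L^4$ argument is indeed valid at the critical coupling: the localization identity together with the Landau-level bound $\int|(\nabla-i\Ab_0)v|^2\geq\int|v|^2$ for compactly supported $v$ gives
$$\int\chi_R^2|\Psi|^4\leq\int|\nabla\chi_R|^2|\Psi|^2\leq C\,,$$
whence $\Psi\in L^4(\R^2)$, and then the right-hand side is $O\big(R^{-1}\|\Psi\|_{L^4}^2\big)\to0$, forcing $\Psi\equiv0$ and contradicting $|\Psi(0)|\geq\delta_0$. (If $\kappa_n/H_n\to\lambda$, the same computation yields $\int\chi_R^2\left[(1-\lambda)|\Psi|^2+\lambda|\Psi|^4\right]\leq\int|\nabla\chi_R|^2|\Psi|^2$, so every $\lambda\leq1$ is covered, which is exactly the regime $H/\kappa\to1$ of the lemma.) What each route buys: yours is elementary and self-contained, avoiding the $\Pi_0$-machinery entirely, but it works only because the blow-up profile is non-degenerate; for the sharper Theorem~\ref{thm:Linfty-bulk} the normalized limit kills the cubic term, your sign trick then returns nothing, and the projector argument becomes necessary. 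Two minor points: the claimed $C^\infty_{\rm loc}$ convergence of the gauged potentials is more than Lemma~\ref{lem-hc2-FoHe} provides---\eqref{eq-hc2-FoHe1} plus the gauge choice of \eqref{eq:a-F} give locally uniform convergence of the potentials with $C^1$ control of the curl, which is all your $W^{2,p}_{\rm loc}$/$C^{1,\alpha}_{\rm loc}$ bootstrap requires; and the closing appeal to the diamagnetic inequality for decay at infinity is superfluous, since the $L^4$ self-improvement already annihilates $\Psi$ outright.
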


\subsection{The limiting boundary problem}\label{hc2-sec-lbdryp}
We recall in this section the definition of the universal constant
$E_1$ (appearing in Theorem~\ref{thm-hc2-FK1} as given in \cite{Pa02}).

Let us consider the following magnetic potential (we keep the
notation of \cite{Pa02}),
\begin{equation}\label{eq-hc2-mp:E}
\mathbf E(x)=(-x_2,0)\,,\quad\forall~x=(x_1,x_2)\in\mathbb
R\times\mathbb R_+\,,
\end{equation}
together with the reduced Ginzburg-Landau energy,
\begin{equation}\label{eq-hc2-redGL}
\mathcal E_\ell(\phi)=\int_{U_\ell}\left(|(\nabla-i\mathbf
E)\phi|^2-|\phi|^2+\frac12|\phi|^4\right)\,dx\,,
\end{equation}
where $U_\ell$ is the domain,
\begin{equation}\label{eq-hc2-Uell}
U_\ell=(-\ell,\ell)\times(0,\infty)\,,\quad\ell>0\,.
\end{equation}
Let us introduce the space
\begin{equation}\label{eq-hc2-Confspace}
\mathcal V(U_\ell)=\{u\in L^2(U_\ell)~:~(\nabla-i\mathbf E)u\in
L^2(U_\ell)~,~u(\pm\ell,\cdot)=0\,\}\,.
\end{equation}
We are interested in minimizing the energy (\ref{eq-hc2-redGL}) over
the space $\mathcal V(U_\ell)$. So we introduce further,
\begin{equation}\label{eq-hc2-d(ell)}
d(\ell)=\inf\{\mathcal E_\ell(\phi)~:~\phi\in\mathcal V(U_\ell)\}\,.
\end{equation}
The following theorem is proved in \cite[Theorems~4.4 \& 5.3]{Pa02}.
\begin{thm}\label{thm-hc2-Pa02}
There exist positive constants $\ell_0$, $M$ and $E_1$ such that:
\begin{enumerate}
\item For all $\ell\geq \ell_0$, (\ref{eq-hc2-redGL}) has a
minimizer $\phi_\ell$ in $\mathcal V(U_\ell)$, and
$\phi_\ell\not\equiv0$\,.
\item For all $\ell\geq\ell_0$, $\phi_\ell$ decays in the following way,
$$\int_{U_\ell\cap\{x_2\geq 3\}}\frac{x_2^2}{\ln
x_2}\left(|(\nabla-i\mathbf
E)\phi_\ell|^2+|\phi_\ell|^2+x_2^2|\phi_\ell|^4\right)\,dx\leq
M\ell\,.$$
\item For all $\ell\geq\ell_0$, the following estimate holds
$$\left|\frac{d(\ell)}{2\ell}+E_1\right|\leq\frac{M}{\ell}\,.$$
\end{enumerate}
\end{thm}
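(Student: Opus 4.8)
The plan is to treat the three assertions in turn: existence will rest on a compactness argument, the decay on a weighted (Agmon-type) multiplier estimate, and the energy asymptotics on an almost sub/superadditivity analysis of $\ell\mapsto d(\ell)$, with the decay estimate serving as the technical engine for the last step.

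For assertion (1) I would first record that $\mathcal E_\ell$ is bounded from below: using a partition of unity in $x_2$ splitting $U_\ell$ into a boundary layer $\{x_2\lesssim 1\}$ and a bulk $\{x_2\gtrsim 1\}$, the effective magnetic confinement forces $\int|(\nabla-i\mathbf E)\phi|^2\ge\int|\phi|^2$ on the bulk (the relevant model operator there sits above the lowest Landau level, at energy $\ge 1$), so that $\mathcal E_\ell(\phi)\ge -C\ell$. To see that $d(\ell)<0$, hence that any minimizer is nontrivial, I would test with a quasimode concentrated near $x_2=0$, oscillating in $x_1$ at the optimal momentum and cut off to vanish at $x_1=\pm\ell$; since the boundary model energy $\Theta_0$ is strictly below $1$, such a trial state has $\int|(\nabla-i\mathbf E)\phi|^2-|\phi|^2\approx(\Theta_0-1)\|\phi\|_2^2<0$, and scaling its amplitude down makes the quartic term negligible, giving $d(\ell)\le -c\ell<0$. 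Existence itself I would obtain by solving first on the bounded boxes $(-\ell,\ell)\times(0,R)$ by the direct method (where the Sobolev embedding $H^1\hookrightarrow L^4$ is compact) and then letting $R\to\infty$; the only danger is loss of mass at $x_2\to\infty$, which is excluded by the decay estimate of assertion (2) applied uniformly in $R$. The bound $|\phi_\ell|\le 1$ follows from the maximum principle applied to the Euler--Lagrange equation $-(\nabla-i\mathbf E)^2\phi_\ell=(1-|\phi_\ell|^2)\phi_\ell$.

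For assertion (2) I would run a weighted multiplier estimate on the Euler--Lagrange equation: multiply by $w(x_2)\overline{\phi_\ell}$ with a weight growing polynomially like $x_2^2/\ln x_2$, take real parts and integrate by parts. Writing $w=e^{2\Phi}$, the estimate reduces to bounding $\int e^{2\Phi}(|(\nabla-i\mathbf E)\phi_\ell|^2-|\phi_\ell|^2)$ below by $\int|\Phi'|^2e^{2\Phi}|\phi_\ell|^2$ plus the favorable quartic term. Since the magnetic confinement constant on $\{x_2\ge T\}$ tends only to the lowest-Landau-level value $1$ as $T\to\infty$, one can afford merely a slowly growing weight, for which $|\Phi'|^2=O(x_2^{-2})$ stays negligible; this is why the admissible weight is polynomial rather than exponential, and why the borderline factor $1/\ln x_2$ is inserted to close the differential inequality. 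The right-hand side $M\ell$ simply reflects that the boundary-layer mass per unit $x_1$-length is bounded. I would carry this out first on the truncated boxes, uniformly in $R$, so that it feeds back into assertion (1).

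For assertion (3) I would exploit the Dirichlet conditions on the vertical sides. Gluing a minimizer on $U_{\ell_1}$ next to one on $U_{\ell_2}$ produces an admissible competitor on $U_{\ell_1+\ell_2}$ of energy $d(\ell_1)+d(\ell_2)$, so $d$ is subadditive; with the lower bound $d(\ell)\ge -C\ell$, Fekete's lemma gives $d(\ell)/(2\ell)\to -E_1$ with $E_1\in(0,\infty)$ and, since $-2E_1=\inf_\ell d(\ell)/\ell$, the automatic bound $d(\ell)\ge -2E_1\ell$. To upgrade convergence to the rate $M/\ell$ I would prove an almost-superadditive doubling inequality $d(2\ell)\ge 2d(\ell)-C$: taking a minimizer $\phi$ on $U_{2\ell}$ and splitting it with cutoffs $\chi_-,\chi_+$ satisfying $\chi_-^2+\chi_+^2=1$ across an $O(1)$-wide window in $x_1$, the IMS localization formula yields $\mathcal E_{2\ell}(\phi)\ge\mathcal E(\chi_-\phi)+\mathcal E(\chi_+\phi)-\int(|\nabla\chi_-|^2+|\nabla\chi_+|^2)|\phi|^2$, the quartic cross term being favorable since $\chi_-^4+\chi_+^4\le 1$. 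The localization error integrates $|\phi|^2$ over a strip of bounded $x_1$-width but infinite height, so it is finite and $O(1)$ precisely because the decay estimate bounds $\int_0^\infty|\phi|^2\,dx_2$ uniformly. Combining $2d(\ell)-C\le d(2\ell)\le 2d(\ell)$ and summing the dyadic errors $\sum_k C\,2^{-k}$ gives $|d(\ell)/(2\ell)+E_1|\le M/\ell$.

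The main obstacle throughout is the non-compactness of $U_\ell$ in the $x_2$-direction, and essentially everything hinges on the weighted estimate of assertion (2): it rescues compactness in the existence proof and it is what makes the localization error in assertion (3) bounded rather than growing with $\ell$. I therefore expect the bulk of the effort to lie in calibrating the weight $w$ so that the multiplier estimate closes uniformly down to the boundary layer and uniformly in the truncation height $R$.
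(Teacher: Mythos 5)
First, a framing point: the paper does not prove this theorem at all --- it is quoted from Pan \cite[Theorems~4.4 \& 5.3]{Pa02}, so there is no internal proof to compare with, and your proposal must stand on its own. It does not, because of two genuine gaps.

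The central gap is in assertion (2). Your multiplier argument is premised on bounding $\int e^{2\Phi}\bigl(|(\nabla-i\mathbf E)\phi_\ell|^2-|\phi_\ell|^2\bigr)\,dx$ from below by the weight error $\int|\Phi'|^2e^{2\Phi}|\phi_\ell|^2\,dx$ (up to the quartic term), i.e.\ on absorbing that error into the quadratic form. But in the bulk the only available lower bound is $\int|(\nabla-i\mathbf E)u|^2\,dx\geq\int|u|^2\,dx$: the lowest Landau level sits \emph{exactly} at $1$, the coefficient of $-|\phi|^2$, and near-minimizing Landau-type states exist localized at every height $x_2$, so there is no gap whatsoever --- for any $T$, not merely in the limit $T\to\infty$ --- into which $\int|\Phi'|^2e^{2\Phi}|\phi_\ell|^2$ could be absorbed. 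What the Agmon identity combined with the Euler--Lagrange equation and the Landau bound actually yields is only
\begin{equation*}
\int e^{2\Phi}|\phi_\ell|^4\,dx\;\leq\;\int|\Phi'|^2e^{2\Phi}|\phi_\ell|^2\,dx\,,
\end{equation*}
a weighted $L^4$ norm controlled by a weighted $L^2$ norm --- the opposite of what you need. This gives no information on the weighted $L^2$ and kinetic terms appearing in the statement; and even for the quartic term, bootstrapping this inequality with H\"older (starting from $\int_{\{x_2\geq2\}}|\phi_\ell|^4\leq C\ell$) saturates at weights $x_2^{a}$ with $a<3$ for the quartic part and $a<1$ for the $L^2$ part, a full power of $x_2$ short of the claimed $x_2^4/\ln x_2$ and $x_2^2/\ln x_2$. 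The decay is the genuinely hard, borderline part of Pan's theorem (it is precisely the case where the gap of the linearized problem closes), and no calibration of the weight alone can produce it; a further mechanism is required.

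The second gap is in assertion (3): you declare the IMS localization error to be $O(1)$ ``because the decay estimate bounds $\int_0^\infty|\phi|^2\,dx_2$ uniformly.'' It does not. Assertion (2) is an estimate \emph{integrated over} $x_1$, of total size $M\ell$; it bounds the average vertical mass per unit length by a constant, not the mass of each individual vertical strip, and the uniform-in-$x_1$ statement is strictly stronger and needs its own proof. One can choose a good cutting line by pigeonhole, but then either the error is $O(1)$ while the cut drifts $O(\ell)$ from the midpoint, or the cut stays within $O(\sqrt\ell)$ at the price of an $O(\sqrt\ell)$ error; feeding either option through your dyadic iteration (keeping track of the drift via subadditivity) yields at best $|d(\ell)/(2\ell)+E_1|\leq M/\sqrt\ell$, not the claimed rate $M/\ell$. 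The sound parts of your plan --- the lower bound $d(\ell)\geq-C\ell$, strict negativity via a $\Theta_0<1$ quasimode, exact subadditivity by gluing (the gauge depends only on $x_2$, so horizontal translation is free), and Fekete giving $d(\ell)\geq-2E_1\ell$ automatically --- are correct, but both the whole of (2) and the stated rate in (3) rest on machinery you have not supplied; note also that your existence argument in (1) invokes (2) on the truncated boxes, so the first gap propagates there as well.
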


In light of Theorem~\ref{thm-hc2-Pa02}, the universal constant
$E_1>0$ is actually given as the limit,
\begin{equation}\label{eq-hc2-E1}
E_1=\lim_{\ell\to\infty}\left(-\frac{d(\ell)}{2\ell}\right)\,.\end{equation}

\subsection{The limiting bulk problem}\label{hc2-sec-lbulkp}
We turn now to the limiting problem in the bulk, thereby defining
the constant $E_2$ appearing in (\ref{eq-hc2-asymp'}). Actually,
$E_2$ can be defined in two different ways. The simpler definition
is through a  ``thermodynamic limit'' of the Ginzburg-Landau energy
(see \eqref{eq-hc2-E2-TDL}). A more complicated  definition is via a
limiting Abrikosov energy in a periodic lattice (see
\eqref{eq-hc2-E2}). The latter approach in defining $E_2$ has more
advantages, since on the one hand it shows rigorously how the
Abrikosov energy links to the Ginzburg-Landau model, and on the
other hand it provides an essential key for proving the main
theorem of the present paper.

\subsubsection{The universal constant $E_2$.}
Let us consider a constant $b\in (0,1)$. For any domain $\mathcal
D\subset \R^2$, we define the following Ginzburg-Landau energy,
$$G_{\mathcal D}(u)=\int_{\mathcal D}b|(\nabla-i\Ab_0)u|^2
-|u|^2+\frac1{2}|u|^4\,dx\,.
$$
Here $\Ab_0$ is the canonical magnetic potential (we keep the
notation from \cite{AS}),
\begin{equation}\label{eq-hc2-mpA0}
\Ab_0(x_1,x_2)=\frac12(-x_2,x_1)\,,\quad\forall~x=(x_1,x_2)\in
\R^2\,.\end{equation} It is proved by Sandier and Serfaty
\cite{SS02} (see also Aftalion-Serfaty \cite[Lemma~2.4]{AS}) that
there exists a continuous increasing function
$g:(0,1]\to(-\frac12,0]$ such that the following identity  holds,
$$g(b)=\lim_{R\to\infty}\frac{\inf_{u\in H^1_0(K_R;\C)}
G_{K_R}(u,A)}{|K_R|}\,,
$$
where $K_R\subset\R^2$ is a square of side-length equal to $R$.
Furthermore, it is proved that there exists a constant
$\alpha\in(0,\frac12)$ such that
$$\alpha(b-1)^2\leq |g(b)|\leq \frac12(b-1)^2\,,\quad\forall~b\in (0,1)\,.$$

The universal constant $E_2$ is then defined by,
\begin{equation}\label{eq-hc2-E2-TDL}
E_2=\lim_{b\to1_-}\frac{|g(b)|}{(b-1)^2}\,.
\end{equation}
The existence of the limit above is proved in \cite[Theorem~2]{AS} and
clearly satisfies
$$0<\alpha\leq E_2\leq \frac12\,.$$

\begin{rem}
For the sake of simplicity
we considered only a square lattice above.
This is because the lattice geometry is not important for the energy
at this level.
In \cite{AS}, the results
above are shown to be true for any parallelogram lattice and with the
same constant $E_2$. This remark also applies to the remainder of the
paper: We work with a square lattice as the basis for our
constructions out of simplicity, and since this is known not to affect
the energy to the precision considered.
\end{rem}

\begin{rem}
Notice that the functional $G_{\mathcal D}$ can be rewritten, using
the simple change of function $u = \sqrt{1-b}\, v$, as follows,
\begin{align*}
  G_{\mathcal D}(u)=(1-b)^2\Big\{ \frac{b}{1-b}\int_{\mathcal D}|(\nabla-i\Ab_0)v|^2
-|v|^2\,dx +\int_{\mathcal D} \frac1{2}|v|^4 - |v|^2\,dx\Big\}\,.
\end{align*}
This simple manipulation provides a link between the Ginzburg-Landau
energy $G_{\mathcal D}$ and the Abri\-kosov energy of
Theorem~\ref{thm-AS} below.
\end{rem}

\subsubsection{The periodic Schr\"odinger operator with constant
  magnetic field.}

Let $R>0$ and denote by $K_{R}$ the unit parallelogram of the
lattice $\mathcal L_{R}=R(\mathbb Z\oplus i\mathbb Z)$. We assume
the quantization condition that $|K_{R}|/(2\pi)$ is an integer, i.e.
there exists $N\in\mathbb N$ such that,
\begin{equation}\label{eq-hc2-quantization}
R^2=2\pi N\,.\end{equation} Let us introduce the following space,
\begin{align}\label{eq-hc2-space1}
E_{R}=\bigg{\{}u\in H^1(K_{R};\C)~:~&u(z_1+R,z_2)=e^{i\frac{\pi
Nz_2}{R }}u(z_1,z_2)\nonumber\\
&u(z_1,z_2+R )=e^{-i\frac{\pi Nz_1}{R } }u(z_1,z_2)\bigg{\}}\,.
\end{align}
Recall the  magnetic potential $\Ab_0$ introduced in
\eqref{eq-hc2-mpA0} above.
 Notice that the periodicity conditions in
\eqref{eq-hc2-space1} are constructed in such a manner that, for any
function $u\in E_{R}$, the functions $|u|$, $|\nabla_{\Ab_0}u|$ and
the vector field
$\overline u \nabla_{\Ab_0}u$ are periodic with respect to
the lattice $K_{R}$.

We denote by $P_{R}$ the operator,
\begin{equation}\label{eq-hc2-poperator}
P_{R}=-(\nabla-i\Ab_0)^2\quad{\rm in}~L^2(K_{R})\,,
\end{equation}
with form domain the space $E_{R}$ introduced in
(\ref{eq-hc2-space1}). More precisely, $P_{R}$ is the self-adjoint
realization associated with the closed quadratic form
\begin{equation}\label{eq-hc2-poperatorQF}
E_{R}\ni f\mapsto Q_{R}(f)=\|(\nabla-i\Ab_0)f\|_{L^2(K_{R})}^2\,.
\end{equation}

The operator $P_{R}$ being with compact resolvent, let us denote by
$\{\mu_j(P_{R})\}_{j\geq1}$ the increasing sequence of its distinct
eigenvalues (i.e. without counting multiplicity).

The following proposition may be classical in the spectral theory
of Schr\"odinger operators, but we refer to \cite{AS} or \cite{Al}
for a simple proof.

\begin{prop}\label{prop-hc2-poperator}
Assuming $R$  is such that $|K_{R}|\in2\pi\mathbb N$, then the
operator $P_{R}$ enjoys the following spectral properties:
\begin{enumerate}
\item $\mu_1(P_{R})=1$ and $\mu_2(P_{R})\geq 3$\,.
\item The space $L_{R}={\rm Ker}(P_{R}-1)$ is finite
dimensional and ${\rm dim}\,L_{R}=|K_{R}|/(2\pi)$\,.
\end{enumerate}
Consequently, denoting by $\Pi_1$ the orthogonal projection on the
space $L_{R}$ (in $L^2(K_{R})$), and by $\Pi_2={\rm Id}-\Pi_1$, then
for all $f\in D(P_{R})$, we have,
$$\langle P_{R}\Pi_2 f\,,\,\Pi_2f\rangle_{L^2(K_{R})}\geq
3\|f\|^2_{L^2(K_{R})}\,.$$
\end{prop}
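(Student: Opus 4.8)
The plan is to exploit the factorization of $P_{R}$ into creation and annihilation operators adapted to the constant field $\curl\Ab_0=1$, the spectral discreteness being already granted by the compact resolvent. Writing $\pi_j=-i\partial_{x_j}-(\Ab_0)_j$ for the magnetic momenta, so that $P_{R}=\pi_1^2+\pi_2^2$ on the form domain $E_{R}$, I would set
\begin{equation*}
a=\pi_1+i\pi_2\,,\qquad a^*=\pi_1-i\pi_2\,.
\end{equation*}
A direct computation gives $[\pi_1,\pi_2]=i\,\curl\Ab_0=i$, whence the operator identities $P_{R}=a^*a+1$ and $a^*a=aa^*-2$. The decisive structural point is that the periodicity conditions defining $E_{R}$ in \eqref{eq-hc2-space1} are precisely those making $\pi_1,\pi_2$ symmetric with no boundary contribution: since $|u|$, $|\nabla_{\Ab_0}u|$ and $\overline u\,\nabla_{\Ab_0}u$ are $\mathcal L_{R}$-periodic for $u\in E_{R}$, Green's formula on $K_{R}$ produces canceling fluxes across opposite sides. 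Consequently the magnetic translations generating these conditions commute with $\pi_1,\pi_2$, hence with $a$ and $a^*$, so both $a$ and $a^*$ preserve $E_{R}$ and $D(P_{R})$, and the quadratic form splits as $Q_{R}(f)=\|af\|_{L^2(K_{R})}^2+\|f\|_{L^2(K_{R})}^2$.

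From this splitting one reads off $Q_{R}(f)\geq\|f\|^2_{L^2(K_{R})}$, so $\mu_1(P_{R})\geq1$, with equality characterized by $L_{R}=\mathrm{Ker}(P_{R}-1)=\mathrm{Ker}\,a$. To solve $af=0$ I pass to $z=x_1+ix_2$; in the symmetric gauge \eqref{eq-hc2-mpA0} one finds
\begin{equation*}
af=-2i\,e^{-|z|^2/4}\,\partial_{\bar z}\!\left(e^{|z|^2/4}f\right),
\end{equation*}
so that $af=0$ forces $f(z)=e^{-|z|^2/4}g(z)$ with $g$ entire. This reduces the eigenvalue problem to a holomorphy condition and shows that $\mu_1(P_{R})=1$ will be attained as soon as such an $f$ can meet the periodicity and $L^2(K_{R})$ constraints.

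The dimension count is the heart of the matter. Imposing the quasi-periodicity of \eqref{eq-hc2-space1} on $f=e^{-|z|^2/4}g$ translates into theta-type functional equations for the entire function $g$ relative to the lattice $\mathcal L_{R}$ carrying $N=|K_{R}|/(2\pi)$ flux quanta. By the classical theory of theta functions (equivalently, the dimension of the space of holomorphic sections of a degree-$N$ line bundle over the torus $\C/\mathcal L_{R}$), the solution space has dimension exactly $N$, giving at once that $L_{R}\neq\{0\}$ and $\dim L_{R}=|K_{R}|/(2\pi)$. I regard this as the main obstacle: solving $af=0$ is routine, but counting admissible theta functions genuinely uses the quantization condition \eqref{eq-hc2-quantization} and is the step for which \cite{AS, Al} are invoked.

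For the gap and the concluding estimate I would use the ladder structure. If $\lambda>0$ is an eigenvalue of $a^*a$ with eigenfunction $\phi\in D(P_{R})$, then $\psi:=a\phi$ satisfies $\|\psi\|^2=\langle a^*a\phi,\phi\rangle=\lambda\|\phi\|^2\neq0$, and using $a^*a=aa^*-2$ together with $a^*a\phi=\lambda\phi$ one computes $a^*a\psi=(aa^*-2)(a\phi)=a(a^*a\phi)-2a\phi=(\lambda-2)\psi$; since $a^*a\geq0$ this forces $\lambda\geq2$ (here one uses that $a$ preserves $D(P_{R})$). Hence every eigenvalue of $P_{R}=a^*a+1$ exceeding $1$ is $\geq3$, i.e. $\mu_2(P_{R})\geq3$. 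The stated consequence then follows from the spectral theorem: $\Pi_2 f$ lies in the spectral subspace of $P_{R}$ associated with eigenvalues $\geq3$, so
\begin{equation*}
\langle P_{R}\Pi_2 f,\Pi_2 f\rangle_{L^2(K_{R})}\geq 3\,\|\Pi_2 f\|^2_{L^2(K_{R})}\,.
\end{equation*}
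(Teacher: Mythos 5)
Your proposal is correct, but there is no line-by-line comparison to make: the paper does not prove Proposition~\ref{prop-hc2-poperator} at all; it remarks that the result ``may be classical'' and defers entirely to \cite{AS} and \cite{Al}. What you have written is, in effect, the standard proof that those references contain, made explicit and self-contained. The key structural points are all right: the factorization $P_R=a^*a+1$ coming from $[\pi_1,\pi_2]=i\curl\Ab_0=i$; the fact that the magnetic translations encoded in \eqref{eq-hc2-space1} commute with $\pi_1,\pi_2$ (so $a$, $a^*$ preserve $E_R$ and $D(P_R)$, and the form splits as $Q_R(f)=\|af\|^2+\|f\|^2$); the identification $\mathrm{Ker}(P_R-1)=\mathrm{Ker}\,a=\{e^{-|z|^2/4}g(z):\ g\ \text{entire}\}\cap E_R$, which matches the paper's own description \eqref{eq-hc2-space2}; the theta-function (degree-$N$ line bundle) count giving $\dim L_R=N=|K_R|/(2\pi)$; and the ladder identity $a^*a=aa^*-2$, which excludes eigenvalues of $a^*a$ in $(0,2)$ and hence gives $\mu_2(P_R)\geq 3$, with the final bound following from the spectral theorem. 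The one step you delegate to classical theory --- the dimension of the space of admissible theta functions --- is precisely the step where the quantization condition \eqref{eq-hc2-quantization} is used, and citing it is no worse than what the paper itself does for the entire proposition; everything else in your argument is complete, and the domain issues (that eigenfunctions are smooth and that $a$ maps them back into $D(P_R)$) are correctly flagged and correctly resolved by the commutation with magnetic translations.

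One remark in your favor: the displayed inequality in the Proposition, as printed, has $3\|f\|^2_{L^2(K_R)}$ on the right-hand side. That version is false --- take $0\neq f\in L_R$, for which the left side vanishes --- so it must be read as a typo. The inequality you actually derive, $\langle P_R\Pi_2 f,\Pi_2 f\rangle\geq 3\|\Pi_2 f\|^2_{L^2(K_R)}$, is the correct statement, and it is also the version the paper uses in practice: in the proof of Lemma~\ref{prop-hc2-poperator'} the bound is applied to $f_2=f-\Pi_1 f$ in the form $Q_R(f_2)\geq 3\|f_2\|^2$. So the discrepancy between your conclusion and the stated one is a correction, not a gap.
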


The next lemma is a consequence of the existence of a spectral gap
between the first two eigenvalues of $P_{R}$.

\begin{lem}\label{prop-hc2-poperator'}
Given $p\geq 2$, there exists a constant $C_p>0$ such that, for any
$\gamma\in(0,\frac12)$, $R\geq 1$ with $|K_R| \in 2 \pi {\mathbb N}$, and $f\in D(P_{R})$ satisfying
\begin{equation}\label{eq-hc2-hypf}
Q_{R}(f)-(1+\gamma)\|f\|^2_{L^2(K_{R})}\leq0\,,\end{equation} the
following estimate holds,
\begin{equation}\label{eq-hc2-1=proj}
\|f-\Pi_1f\|_{L^p(K_{R})}\leq
C_p\sqrt{\gamma}\,\|f\|_{L^2(K_{R})}\,.
\end{equation}
Here $\Pi_1$ is the projection on the space $L_{R}$.
\end{lem}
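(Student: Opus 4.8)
The plan is to combine the spectral gap of Proposition~\ref{prop-hc2-poperator}, which controls $\Pi_2 f=f-\Pi_1 f$ in $L^2$, with a Sobolev inequality whose constant is \emph{uniform in} $R$, which upgrades the $L^2$ estimate to an $L^p$ estimate. The whole point is that these two ingredients both produce the gain $\sqrt\gamma$ and are each insensitive to the growth of the cell $K_R$.

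First I would treat the case $p=2$. Set $g=\Pi_2 f=f-\Pi_1 f\in E_R$. Since $P_R$ commutes with the spectral projections $\Pi_1,\Pi_2$ and equals $1$ on $L_R={\rm Ker}(P_R-1)$, the cross terms vanish and $Q_R(f)=\langle P_R f,f\rangle=\|\Pi_1 f\|_{L^2}^2+\langle P_R g,g\rangle$, while $\|f\|_{L^2}^2=\|\Pi_1 f\|_{L^2}^2+\|g\|_{L^2}^2$ by orthogonality. Hypothesis \eqref{eq-hc2-hypf} then reads $\|\Pi_1 f\|^2+\langle P_R g,g\rangle\le (1+\gamma)\big(\|\Pi_1 f\|^2+\|g\|^2\big)$, i.e. $\langle P_R g,g\rangle\le \gamma\|\Pi_1 f\|^2+(1+\gamma)\|g\|^2$. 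Invoking the gap $\mu_2(P_R)\ge 3$ from Proposition~\ref{prop-hc2-poperator}, that is $\langle P_R g,g\rangle\ge 3\|g\|^2$, and using $\gamma<\tfrac12$, I get $(2-\gamma)\|g\|^2\le \gamma\|\Pi_1 f\|^2\le \gamma\|f\|^2$. Hence $\|g\|_{L^2}^2\le \gamma\|f\|_{L^2}^2$ and, plugging this back, $Q_R(g)=\langle P_R g,g\rangle\le 3\gamma\|f\|_{L^2}^2$; in particular $\|g\|_{L^2}+\sqrt{Q_R(g)}\le C\sqrt\gamma\,\|f\|_{L^2}$.

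Next I would prove the key ingredient: a magnetic Sobolev inequality $\|u\|_{L^p(K_R)}\le C_p\big(\|u\|_{L^2(K_R)}+\sqrt{Q_R(u)}\big)$ for all $u\in E_R$, with $C_p$ independent of $R\ge 1$. Since $u\in E_R$, the modulus $|u|$ is $\mathcal{L}_R$-periodic and lies in $H^1(K_R)$, the diamagnetic inequality gives $\|\nabla|u|\|_{L^2}\le \|(\nabla-i\Ab_0)u\|_{L^2}=\sqrt{Q_R(u)}$, and $\||u|\|_{L^q}=\|u\|_{L^q}$. So it reduces to the scalar bound $\|v\|_{L^p(K_R)}\le C_p(\|v\|_{L^2(K_R)}+\|\nabla v\|_{L^2(K_R)})$ for $v=|u|$, uniformly in $R$. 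I would obtain this by tiling $K_R$ into $\lfloor R\rfloor^2$ squares of side $s=R/\lfloor R\rfloor\in[1,2)$, applying the ordinary embedding $H^1\hookrightarrow L^p$ on each square (its constant, after rescaling to the unit square, being bounded uniformly for $s\in[1,2]$ and $p\ge 2$), and summing. Because $p\ge 2$, the elementary inequality $\sum_j a_j^p\le\big(\sum_j a_j^2\big)^{p/2}$ turns the sum of local $L^p$ contributions into the global $L^2$-type quantity $\|v\|_{L^2(K_R)}^2+\|\nabla v\|_{L^2(K_R)}^2$, giving the claimed $R$-independent constant.

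Finally I would combine the two steps: applying the magnetic Sobolev inequality to $u=g=\Pi_2 f$ and inserting $\|g\|_{L^2}+\sqrt{Q_R(g)}\le C\sqrt\gamma\,\|f\|_{L^2}$ yields $\|f-\Pi_1 f\|_{L^p(K_R)}=\|g\|_{L^p(K_R)}\le C_p\sqrt\gamma\,\|f\|_{L^2(K_R)}$, which is exactly \eqref{eq-hc2-1=proj}. I expect the main obstacle to be the uniformity in $R$ of the Sobolev constant: the cells $K_R$ grow without bound, so no single fixed-domain embedding applies, and the quantization $R^2=2\pi N$ forces $R$ to be irrational in general. The tiling combined with the $\ell^p\subset\ell^2$ inequality is precisely what removes the $R$-dependence, and the rescaling to squares of side in $[1,2)$ is what accommodates the non-integer side length. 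The spectral-gap step, by contrast, is routine.
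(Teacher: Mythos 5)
Your proposal is correct, and its overall skeleton coincides with the paper's: both proofs first use the spectral gap $\mu_2(P_R)\geq 3$ of Proposition~\ref{prop-hc2-poperator} to control $f_2=f-\Pi_1 f$ simultaneously in $L^2$ and in the form norm with the gain $\sqrt{\gamma}$, and then upgrade to $L^p$ via the diamagnetic inequality together with a Sobolev inequality \eqref{eq-hc2-H1-Lp} whose constant is uniform in $R\geq 1$. The genuine difference lies in how that uniform Sobolev inequality is established. The paper multiplies $|f_2|$ by a cut-off $\chi(x/(CR))$ supported in a ball $B(0,2CR)\supset K_R$, applies the Sobolev embedding $H^1(\R^2)\hookrightarrow L^p(\R^2)$ to the resulting global function, and then crucially uses the $\mathcal L_R$-periodicity of $|f_2|$ to convert norms over $B(0,CR)$ back into $O(1)$ multiples of norms over the single cell $K_R$; note that the factor $\frac{2}{C^2R^2}$ coming from differentiating the cut-off is harmless precisely because $R\geq 1$. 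You instead tile $K_R$ by $\lfloor R\rfloor^2$ squares of side $R/\lfloor R\rfloor\in[1,2)$, apply the embedding cell by cell with a constant uniform over that compact range of side lengths, and resum with the elementary inequality $\sum_j a_j^p\leq\bigl(\sum_j a_j^2\bigr)^{p/2}$ valid for $p\geq 2$. Your route is more elementary and slightly more general: it never uses the periodicity of $|f_2|$, so it proves \eqref{eq-hc2-H1-Lp} for an arbitrary $H^1$ function on $K_R$, whereas the paper's argument is tied to functions in $E_R$; both arguments exploit the restriction $p\geq 2$ (you through $\ell^2\subset\ell^p$, the paper through the $H^1(\R^2)\hookrightarrow L^p(\R^2)$ embedding, which fails for $p=\infty$ but holds for all finite $p\geq 2$). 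The spectral step is identical in substance; your bookkeeping $(2-\gamma)\|f_2\|^2\leq\gamma\|\Pi_1 f\|^2$ and the paper's $\gamma\|f\|^2\geq\frac12 Q_R(f_2)+\frac12\|f_2\|^2$ are interchangeable.
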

\begin{proof}
Let us write $f_1=\Pi_1f$ and $f_2=f-\Pi_1f$, then since $f_1$ and
$f_2$ are orthogonal we get ($\|\cdot\|$ denotes the $L^2$ norm
unless otherwise stated),
$$Q_{R}(f)=Q_{R}(f_1)+Q_{R}(f_2)\,,\quad \|f\|^2=\|f_1\|^2+\|f_2\|^2\,.$$
Furthermore, from (\ref{eq-hc2-hypf}) we deduce,
$$\gamma\|f\|^2\geq
Q_{R}(f_1)-\|f_1\|^2+Q_{R}(f_2)-\|f_2\|^2 =Q_{R}(f_2)-\|f_2\|^2 \,.
$$ Invoking Proposition~\ref{prop-hc2-poperator} and the min-max
variational principle, we infer the bound,
\begin{align}
\label{eq-hc2-1=proj'}
\gamma\|f \|^2\geq
\frac12Q_{R}(f_2)+\frac12\|f_2\|^2\,.
\end{align}
Now, we claim that the following Sobolev inequality holds,
\begin{equation}\label{eq-hc2-H1-Lp}\|f_2\|_{L^p(K_{R})}\leq
C_p\left(\|\nabla|f_2|\,\|_{L^2(K_{R})}+\|f_2\|_{L^2(K_{R})}\right)\,,
\end{equation}
where $C_p>0$ is a constant independent of $R\in[1,\infty)$.

Using the diamagnetic inequality, we get further,
$$\|f_2\|_{L^p(K_{R})}\leq
C_{p}\left(\sqrt{Q_{R}(f_2)}+\|f_2\|_{L^2(K_{R})}\right)\,.
$$
By implementing \eqref{eq-hc2-1=proj'} in the above estimate, we get
the estimate announced in the lemma.

Thus, to finish the proof, we need only establish the estimate
\eqref{eq-hc2-H1-Lp}. Let $\chi$ be a cut-off function such that
$0\leq\chi\leq1$ in $\R^2$, $\chi=1$ in $B(0,1)$ and ${\rm
supp}\chi\subset B(0,2)$. Let further $C$ be a positive constant
such that $B(0,C)$ contains $K_{1}$.

The function
$$g(x)=\chi\left(\frac{x}{CR}\right)\,|f_2(x)|\,,\quad x\in\R^2\,,$$
belongs now to $H^1(\R^2)$. Using the Sobolev embedding of
$H^1(\R^2)$ in $L^p(\R^2)$, $p\geq 2$, we get a constant $c_p>0$
such that
$$\|g\|_{L^p(\R^2)}\leq c_p\left(\|\nabla
  g\|_{L^2(\R^2)}+\|g\|_{L^2(\R^2)}\right)\,.$$
Since the function $|f_2|$ is periodic with respect to the lattice
$K_{R}$, and since
$$\|\nabla g\|_{L^2(\R^2)}^2\leq 2\|\nabla |f_2|\,\|_{L^2(B(0,CR)}
+\frac{2}{C^2R^2}
\|f_2\|_{L^2(B(0,CR)}^2\,,$$
we get the estimate in
\eqref{eq-hc2-H1-Lp}.
\end{proof}

\subsubsection{The Abrikosov energy.}
Let us now introduce the following  energy functional (the Abrikosov
energy),
\begin{equation}\label{eq-hc2-eneAb}
F_{R}(v)=\frac1{|K_{R}|}\int_{K_{R}}\left(\frac12|v|^4-|v|^2\right)\,d
x\,.
\end{equation}
The energy $F_R$ will be minimized on the space
$L_{R}$, the eigenspace of the first eigenvalue of the periodic
operator $P_{R}$,
\begin{align}\label{eq-hc2-space2}
L_{R}&=\{u\in E_{R}~:~P_{R}u=u\}\nonumber\\
&=\{u\in
E_{R}~:~\left(\partial_{x_1}+i\partial_{x_2}+\frac12(x_1+ix_2)\right)u=0\}\,.
\end{align}

The following theorem is proved in \cite[Theorems~1 \& 2]{AS}.

\begin{thm}\label{thm-AS}
Setting
\begin{equation}\label{eq-hc2-c(r,t)}
c(R)=\min\{F_{R}(u)~:~u\in L_{R}\}\,,
\end{equation}
then the limit
$$\lim_{\substack{R\to \infty\\|K_{R}|/(2\pi)\in\mathbb
    N}}c(R)$$ exists and
is expressed  by the universal constant $E_2$ introduced in
\eqref{eq-hc2-E2-TDL} as follows,
\begin{equation}\label{eq-hc2-E2}
E_2=\lim_{\substack{R\to\infty\\|K_{R}|/(2\pi)\in\mathbb
N}}[-c(R)]\,.\end{equation}
\end{thm}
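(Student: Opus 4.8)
\medskip
\noindent\emph{Overall strategy.} The plan is to link the Abrikosov minimization $c(R)$ of \eqref{eq-hc2-c(r,t)} to the thermodynamic energy $g(b)$ (hence to $E_2$ via \eqref{eq-hc2-E2-TDL}) by an exact scaling identity on the lowest Landau level $L_R$, and then to interchange the limits $R\to\infty$ and $b\to 1_-$. The algebraic core is the following. For $u\in L_R$ one has $P_Ru=u$, so $\int_{K_R}|(\nabla-i\Ab_0)u|^2\,\md x=\|u\|_{L^2(K_R)}^2$ and therefore $G_{K_R}(u)=(b-1)\|u\|_{L^2(K_R)}^2+\tfrac12\|u\|_{L^4(K_R)}^4$. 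Since $L_R$ is a complex vector space, the optimization over the rescalings $u\mapsto tu$ is automatically contained in the minimization over $L_R$, and an elementary computation gives
\begin{equation*}
\min_{u\in L_R}G_{K_R}(u)=-\frac{(1-b)^2}{2}\,\max_{u\in L_R}\frac{\|u\|_{L^2(K_R)}^4}{\|u\|_{L^4(K_R)}^4}=(1-b)^2\,|K_R|\,c(R),
\end{equation*}
the last equality coming from the same scaling optimization applied to $F_R$ in \eqref{eq-hc2-eneAb}, which yields $c(R)=-\tfrac{1}{2|K_R|}\max_{u\in L_R}\|u\|_{L^2(K_R)}^4/\|u\|_{L^4(K_R)}^4$. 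This is precisely the manipulation behind the Remark using the substitution $u=\sqrt{1-b}\,v$, and it is the bridge between the two problems.

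\medskip
\noindent\emph{Reduction to periodic conditions and the upper bound.} Dirichlet functions trivially satisfy the boundary conditions in \eqref{eq-hc2-space1}, so $H^1_0(K_R)\subset E_R$ and $\inf_{E_R}G_{K_R}\le\inf_{H^1_0(K_R)}G_{K_R}$; conversely, cutting a periodic near-minimizer off inside a boundary layer of intermediate width yields a Dirichlet competitor whose energy changes only by a term of lower order in $|K_R|$. Hence $\widetilde g(b,R):=|K_R|^{-1}\inf_{E_R}G_{K_R}$ also tends to $g(b)$ as $R\to\infty$. Because $L_R\subset E_R$, the identity above gives at once the clean upper bound $\widetilde g(b,R)\le (1-b)^2\,c(R)$, with no boundary error.

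\medskip
\noindent\emph{Lower bound.} Here the spectral gap enters and the real work lies. For a minimizer $w\in E_R$ of $G_{K_R}$, write $w=w_1+w_2$ with $w_1=\Pi_1w\in L_R$ and $w_2=\Pi_2w$. From $G_{K_R}(w)\le0$ one gets $Q_R(w)\le b^{-1}\|w\|_{L^2(K_R)}^2$, so Lemma~\ref{prop-hc2-poperator'} applies with $\gamma=(1-b)/b$ and controls $w_2$ in $L^2$ and $L^4$. The form-orthogonality $Q_R(w)=\|w_1\|_{L^2(K_R)}^2+Q_R(w_2)$ together with $Q_R(w_2)\ge3\|w_2\|_{L^2(K_R)}^2$ (Proposition~\ref{prop-hc2-poperator}) shows the quadratic part only improves, while expanding the quartic term gives
\begin{equation*}
G_{K_R}(w)\ge G_{K_R}(w_1)-C\,\|w_1\|_{L^4(K_R)}^3\,\|w_2\|_{L^4(K_R)}-(\text{higher order})\ge (1-b)^2|K_R|\,c(R)-\mathrm{Err},
\end{equation*}
the last inequality using the scaling identity for $w_1\in L_R$. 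The main obstacle is to prove that $\mathrm{Err}/|K_R|$ is controlled uniformly in the volume and is $o((1-b)^2)$ as $b\to1_-$: the naive estimate fails, since the minimizer carries extensive mass $\|w\|_{L^2(K_R)}^2\sim(1-b)|K_R|$, so the global bound $\|w_2\|_{L^4}\lesssim\sqrt{\gamma}\,\|w\|_{L^2}$ makes the cross term grow with $R$. The remedy I would pursue is to localize onto magnetic cells of fixed size and apply a cell-by-cell version of Lemma~\ref{prop-hc2-poperator'} (establishing such a local Sobolev/spectral-gap bound is the genuinely delicate point), or else to exploit pointwise Gaussian bounds for functions in $L_R$; either route should give $\mathrm{Err}/|K_R|=o((1-b)^2)$ uniformly in $R$.

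\medskip
\noindent\emph{Interchange of limits.} Combining the two bounds yields, for each fixed $b<1$,
\begin{equation*}
(1-b)^2\,c(R)-\mathrm{Err}/|K_R|\le \widetilde g(b,R)\le (1-b)^2\,c(R).
\end{equation*}
Letting $R\to\infty$ and using $\widetilde g(b,R)\to g(b)$, the right inequality gives $\liminf_R c(R)\ge g(b)/(1-b)^2$, while the left one gives $\limsup_R c(R)\le g(b)/(1-b)^2+\eta(b)/(1-b)^2$ with $\eta(b):=\limsup_R \mathrm{Err}/|K_R|=o((1-b)^2)$. Finally sending $b\to1_-$ and invoking \eqref{eq-hc2-E2-TDL}, so that $g(b)/(1-b)^2\to -E_2$, squeezes $\liminf_R c(R)=\limsup_R c(R)=-E_2$; this establishes both the existence of the limit and the identity \eqref{eq-hc2-E2}.
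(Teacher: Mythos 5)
First, a point of comparison: the paper does not prove Theorem~\ref{thm-AS} at all --- it imports it wholesale from Aftalion--Serfaty \cite[Theorems~1 \& 2]{AS}. So your proposal is an attempt to reprove that external result from scratch, and it must be judged on whether it is self-contained. It is not. Your algebraic reductions are correct: for $u\in L_R$ one has $Q_R(u)=\|u\|_{L^2(K_R)}^2$, optimizing over dilations $u\mapsto tu$ gives the exact identity $\min_{L_R}G_{K_R}=(1-b)^2|K_R|\,c(R)$, and the inclusion $H^1_0(K_R)\subset E_R$ yields the clean upper bound. But the theorem then hinges entirely on the lower bound, i.e.\ on showing that the projection error satisfies $\mathrm{Err}/|K_R|=o\big((1-b)^2\big)$ \emph{uniformly in $R$}, and this you do not prove. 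As your own computation shows, the global use of Lemma~\ref{prop-hc2-poperator'} gives $\|w_2\|_{L^4}\lesssim\sqrt{1-b}\,\|w\|_{L^2}$ with extensive mass $\|w\|_{L^2}^2\sim(1-b)|K_R|$, so the cross term per unit volume grows like $(1-b)^{5/2}|K_R|^{1/4}$; for fixed $b$ this diverges as $R\to\infty$ and the squeeze argument collapses. The two remedies you name --- a cell-by-cell spectral-gap estimate, or pointwise Gaussian bounds on $L_R$ --- are exactly the missing idea, and neither is routine: obtaining LLL-reduction error estimates that do not degrade with the volume, and the uniformity needed to identify $\lim_R c(R)$ with $\lim_{b\to 1_-}|g(b)|/(1-b)^2$ from \eqref{eq-hc2-E2-TDL}, is precisely the analytic content of \cite{AS}. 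In effect your argument reduces Theorem~\ref{thm-AS} to the hard part of the very paper it comes from, and then stops.

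A secondary, more minor issue: the claim that the quasi-periodic and Dirichlet thermodynamic limits coincide, $\widetilde g(b,R)\to g(b)$, is asserted via a one-line boundary-layer cutting argument. This is believable but needs to be written out with care, since the error tolerance is again the delicate scale $(1-b)^2|K_R|$: one needs a priori bounds on periodic minimizers (e.g.\ $|w|\le 1$ from the maximum principle, and control of the mass in the cutoff layer) and a choice of layer width making both the gradient cross term and the discarded bulk energy $o\big((1-b)^2|K_R|\big)$, not merely $o(|K_R|)$. This part is fixable by standard means; the lower-bound gap above is the one that genuinely blocks the proof.
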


We conclude by showing that \eqref{eq-hc2-eneAb} admits minimizers
in \eqref{eq-hc2-space2}.

\begin{prop}\label{prop-hc2-exmin}
Let $F_{R}$ be the energy introduced in \eqref{eq-hc2-eneAb}. The
infimum of $F_{R}$  over the (eigen-) space $L_{R}$ is achieved by a
function $f_{R}\in L_{R}$.

Furthermore, there exist positive constants  $R_0$ and $C$ such
that, for all $\tau\in\C\setminus\R$ and $R\geq R_0$, we have the
estimate,
\begin{equation}\label{eq-hc2-dec:ftr}
\frac1{|K_{R}|}\int_{K_{R}}|f_{R}|^2\,dx+
\frac1{|K_{R}|}\int_{K_{R}}|f_{R}|^4\, dx\leq C\,.
\end{equation}
\end{prop}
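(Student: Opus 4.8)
The plan is to treat the two assertions separately: the existence of a minimizer by the direct method, and the a priori bound \eqref{eq-hc2-dec:ftr} by a Cauchy--Schwarz argument that exploits the sign of the minimal energy.

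For existence I would first record that $L_R$ is finite dimensional, of complex dimension $|K_R|/(2\pi)$ by Proposition~\ref{prop-hc2-poperator}, so that on $L_R$ the norms $\|\cdot\|_{L^2(K_R)}$ and $\|\cdot\|_{L^4(K_R)}$ are equivalent. Writing $F_R(v)=|K_R|^{-1}\big(\tfrac12\|v\|_{L^4(K_R)}^4-\|v\|_{L^2(K_R)}^2\big)$, the equivalence of norms supplies a constant $c_R>0$ with $\|v\|_{L^4(K_R)}^4\geq c_R\|v\|_{L^2(K_R)}^4$, whence $F_R(v)\geq |K_R|^{-1}\big(\tfrac{c_R}{2}\|v\|_{L^2(K_R)}^4-\|v\|_{L^2(K_R)}^2\big)\to+\infty$ as $\|v\|_{L^2(K_R)}\to\infty$. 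Thus $F_R$ is continuous and coercive on the finite-dimensional space $L_R$; its sublevel set $\{v\in L_R : F_R(v)\leq 0\}$ is nonempty (it contains $0$), closed and bounded, hence compact, and $F_R$ attains its infimum there, producing the minimizer $f_R$.

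For the estimate, the decisive observation is that the minimal value is non-positive: since $0\in L_R$, we have $c(R)=F_R(f_R)\leq F_R(0)=0$. I would then introduce the averages $m_2:=|K_R|^{-1}\int_{K_R}|f_R|^2\,dx$ and $m_4:=|K_R|^{-1}\int_{K_R}|f_R|^4\,dx$. The Cauchy--Schwarz inequality on $K_R$ (equivalently, Jensen's inequality for the probability measure $|K_R|^{-1}\,dx$) gives $m_2\leq\sqrt{m_4}$. On the other hand $\tfrac12 m_4-m_2=F_R(f_R)=c(R)\leq 0$, so $\tfrac12 m_4\leq m_2\leq\sqrt{m_4}$, which forces $\sqrt{m_4}\leq 2$, hence $m_4\leq 4$ and then $m_2\leq 2$. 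Therefore $m_2+m_4\leq 6=:C$, which is exactly \eqref{eq-hc2-dec:ftr}. This bound is automatically uniform: it uses only $c(R)\leq 0$ and Cauchy--Schwarz, so it holds for every admissible $R$ and every lattice, i.e. every $\tau\in\C\setminus\R$, the restriction $R\geq R_0$ being needed only to guarantee $L_R\neq\{0\}$ through the quantization condition \eqref{eq-hc2-quantization}.

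I do not anticipate a serious obstacle. The existence part is a textbook direct-method argument once one notes that $L_R$ is finite dimensional, and the estimate collapses to the single clean fact $c(R)\leq 0$. If one prefers not to invoke $F_R(0)=0$ directly, the identical conclusion follows from the boundedness of $c(R)$ furnished by Theorem~\ref{thm-AS} (namely $c(R)\to -E_2$), fed into the same Cauchy--Schwarz step. The only point demanding mild care is that the coercivity constant $c_R$ in the existence argument is permitted to depend on $R$, which is harmless since no uniformity in $R$ is claimed for the existence statement itself.
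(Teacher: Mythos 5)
Your proposal is correct and follows essentially the same route as the paper: existence by the direct method on the finite-dimensional space $L_R$ (continuity plus a coercivity bound obtained from Cauchy--Schwarz, which makes $F_R$ positive outside a compact set), and then the uniform estimate \eqref{eq-hc2-dec:ftr} from $F_R(f_R)\leq 0$ combined with the same Cauchy--Schwarz inequality. The only cosmetic difference is your clean $m_2$--$m_4$ algebra yielding the explicit constant $C=6$, where the paper uses a Young-type inequality with cruder constants.
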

\begin{proof}
The functional $F_R$ is clearly continuous on the finite dimensional
space $L_R$. By applying the Cauchy-Schwarz inequality twice, we notice that,
\begin{align*}
F_R(v)&\geq
\frac1{|K_R|}\left(\frac12\int_{K_R}|v|^4\,dx-\sqrt{|K_R|}\left(\int_{K_R}|v|^4\,dx\right)^{1/2}\right)\\
&\geq \frac1{|K_R|}\left(\frac14\int_{K_R}|v|^4\,dx-10|K_R|\right)\,,
\quad\forall~v\in L_R\,.
\end{align*}  Hence, $F_R$ is positive outside a compact set
and therefore the the (negative) minimum exists in the finite dimensional space
$L_R$.

Noticing that $F_R(f_R)\leq0$, we get the estimate
\eqref{eq-hc2-dec:ftr} from the aforementioned Cauchy-Schwarz inequality.
\end{proof}

\section{The improved $L^\infty$-bound}\label{sec-hc2-Linfty}
This section is devoted to the proof of
Theorem~\ref{thm:Linfty-bulk}. Before we give the proof, we state
the following corollary to Theorem~\ref{thm:Linfty-bulk}, which will
be a key-ingredient in proving Theorem~\ref{thm-hc2-FK1}.

\begin{corol}\label{cor-hc2:L2}
Assume that the magnetic field satisfies
$H=\kappa+\nu$ with $|\nu|\ll\kappa$ as $\kappa\to\infty$.
Then there exist constants $C>0$ and $\kappa_0>0$ such that
$$\|\psi\|_{L^2(\Omega)}\leq
C\zeta(\kappa)\,,\quad\forall~\kappa\geq\kappa_0\,,$$ for all
critical points $(\psi,\Ab)$ of the energy $\mathcal E$ in
\eqref{eq-hc2-GL}.
Here
$$\zeta(\kappa)=\max\left\{\left|1-\frac\kappa{H}\right|^{1/2},
\kappa^{-1/4}\right\}\,.$$
\end{corol}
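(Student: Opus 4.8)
The plan is to bound $\|\psi\|_{L^2(\Omega)}^2$ by splitting $\Omega$ into the interior region $\omega_\kappa$, where the sharp $L^\infty$-estimate of Theorem~\ref{thm:Linfty-bulk} applies, and the thin boundary layer $\Omega\setminus\omega_\kappa$, where only the trivial bound $|\psi|\le 1$ from Lemma~\ref{lem-hc2-maxp} is available but where the area is small. First I would check that the hypotheses of Theorem~\ref{thm:Linfty-bulk} are met: since $H=\kappa+\nu$ with $|\nu|\ll\kappa$, we have $|H-\kappa|=|\nu|\le g(\kappa)$ for a suitable $g$ with $g(\kappa)/\kappa\to0$, and $\kappa/H\to1$. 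The parameter $\delta\in(0,1)$ in Theorem~\ref{thm:Linfty-bulk} will be fixed at a convenient value at the very end, once the two partial estimates have been assembled.

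On the interior region, Theorem~\ref{thm:Linfty-bulk} gives $\|\psi\|_{L^\infty(\omega_\kappa)}\le C\lambda(\kappa)$ with $\lambda(\kappa)=\max\{|\kappa/H-1|^{1/2},\kappa^{-1+\delta}\}$, and since $\omega_\kappa\subset\Omega$ has finite measure,
\[
\int_{\omega_\kappa}|\psi|^2\,dx\le |\Omega|\,\|\psi\|_{L^\infty(\omega_\kappa)}^2\le C|\Omega|\,\lambda(\kappa)^2.
\]
Comparing $\lambda(\kappa)^2=\max\{|\kappa/H-1|,\kappa^{-2+2\delta}\}$ with $\zeta(\kappa)^2=\max\{|1-\kappa/H|,\kappa^{-1/2}\}$ and using $|\kappa/H-1|=|1-\kappa/H|$, this contribution is $\le C\zeta(\kappa)^2$ as soon as $\kappa^{-2+2\delta}\le\kappa^{-1/2}$, that is, $\delta\le 3/4$.

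On the boundary layer the maximum principle yields $|\psi|\le1$, so
\[
\int_{\Omega\setminus\omega_\kappa}|\psi|^2\,dx\le|\Omega\setminus\omega_\kappa|.
\]
Since $\partial\Omega$ is smooth, the tubular neighbourhood of width $\kappa^{-1+\delta}$ satisfies $|\Omega\setminus\omega_\kappa|\le C|\partial\Omega|\,\kappa^{-1+\delta}$ for large $\kappa$. To absorb this into $\zeta(\kappa)^2\ge\kappa^{-1/2}$ I need $\kappa^{-1+\delta}\le\kappa^{-1/2}$, i.e. $\delta\le1/2$. Choosing, e.g., $\delta=1/2$ satisfies both constraints simultaneously ($1/2\le3/4$ and $1/2\le1/2$), so both contributions are bounded by $C\zeta(\kappa)^2$; adding them gives $\|\psi\|_{L^2(\Omega)}^2\le C\zeta(\kappa)^2$ for $\kappa$ large, which is the claim after taking square roots.

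Once Theorem~\ref{thm:Linfty-bulk} is in hand, the argument is essentially bookkeeping; the only point deserving attention is that the single parameter $\delta$ controls \emph{both} the strength $\lambda(\kappa)$ of the interior $L^\infty$-bound and the width (hence the area) of the boundary layer. Both requirements are upper bounds on $\delta$, so they are compatible and a single admissible choice such as $\delta=1/2$ suffices. The genuine analytic content sits entirely in Theorem~\ref{thm:Linfty-bulk}, which I take as already established.
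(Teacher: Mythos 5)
Your proposal is correct and is essentially identical to the paper's own proof: the paper also splits $\Omega$ into the layer $\{\dist(x,\partial\Omega)\leq\kappa^{-1+\delta}\}$ and the interior $\omega_\kappa$, uses $|\psi|\leq 1$ on the layer and Theorem~\ref{thm:Linfty-bulk} on the interior, and fixes $\delta=\tfrac12$. Your write-up is in fact slightly more careful, since you make explicit the two competing constraints on $\delta$ (namely $\delta\leq 3/4$ from the interior term and $\delta\leq 1/2$ from the layer's area) that the paper's choice $\delta=\tfrac12$ satisfies implicitly.
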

\begin{proof}
Let $\delta=\frac12$. We write,
\begin{align*}
\int_\Omega|\psi|^2\,dx&=\int_{\{{\rm dist}(x,\partial\Omega)\leq \kappa^{-1+\delta}\}}|\psi|^2\,dx+\int_{\{{\rm dist}(x,\partial\Omega)\geq \kappa^{-1+\delta}\}}|\psi|^2\,dx\\
&\leq c\kappa^{-1+\delta}\|\psi\|_{L^\infty(\Omega)}+C\|\psi\|_{L^\infty(w_\kappa)}\,.
\end{align*}
Here $C$ is a positive constant
and $w_\kappa=\{x\in\Omega~:~{\rm dist}(x,\partial\Omega)\geq
\kappa^{-1+\delta}\}$. Invoking the bound $|\psi|\leq 1$ together with
the estimate in Theorem~\ref{thm:Linfty-bulk} and our choice of
$\delta=\frac12$, we
get for some new constant $C$,
$$\int_\Omega|\psi|^2\,dx\leq C\left(\kappa^{-1/2}+\max\left\{\left|1-\frac{\kappa}{H}\right|,\kappa^{-1}\right\}\right)
\,,$$
which is the bound we wanted to prove.
\end{proof}

Now we proceed in proving Theorem~\ref{thm:Linfty-bulk}.

\begin{proposition}\label{prop:CurlLinfty}
Let $\delta\in(0,1)$. There exist positive constants $\kappa_0$ and
$C$ such that if $H \geq \kappa/2$, $\kappa \geq \kappa_0$, then
\begin{align}
  \label{eq:2}
  \| \curl {\bf A} - 1 \|_{L^{\infty}(\Omega)} \leq C H^{-1}
  (\kappa^{-1+\delta} + \| \psi \|_{L^{\infty}(\omega_{\kappa})})\,.
\end{align}
\end{proposition}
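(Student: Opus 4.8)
The plan is to read the self-induced field $B:=\curl\Ab-1=\curl(\Ab-\FF)$ directly off the second Ginzburg--Landau equation and then integrate a pointwise gradient bound inward from the boundary. Since $\curl\FF=1$, the second line of \eqref{eq-hc2-GLeq} together with the boundary condition $\curl\Ab=1$ on $\partial\Omega$ gives
\[
-\nabla^\bot B=(\kappa H)^{-1}\mathbf J\quad\text{in }\Omega,\qquad B=0\ \text{on }\partial\Omega,
\]
where $\mathbf J:={\rm Im}(\overline\psi\,(\nabla-i\kappa H\Ab)\psi)$ is the supercurrent. In particular $|\nabla B|=|\nabla^\bot B|=(\kappa H)^{-1}|\mathbf J|$ pointwise, so the whole problem reduces to estimating $\mathbf J$ and then propagating the Dirichlet condition $B|_{\partial\Omega}=0$ into $\Omega$.

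For the gradient bound, first I would use Cauchy--Schwarz to write $|\mathbf J(y)|\le|\psi(y)|\,|(\nabla-i\kappa H\Ab)\psi(y)|$ and invoke the a priori estimate $\|(\nabla-i\kappa H\Ab)\psi\|_{L^\infty(\Omega)}\le C\kappa$ from \eqref{eq-hc2-FoHe2} of Lemma~\ref{lem-hc2-FoHe} (valid since $H\ge\kappa/2$). This yields the pointwise bound
\[
|\nabla B(y)|=(\kappa H)^{-1}|\mathbf J(y)|\le C H^{-1}|\psi(y)|,\qquad y\in\Omega.
\]
The point of keeping $|\psi(y)|$ rather than passing immediately to $\|\psi\|_{L^\infty(\Omega)}$ is that it lets the boundary layer and the bulk be treated on separate scales in the next step.

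To conclude, fix $x\in\Omega$ and let $x_0\in\partial\Omega$ be a nearest boundary point. For $\kappa$ large the smoothness of $\partial\Omega$ provides a tubular neighborhood in which normal coordinates are defined, and $\omega_\kappa$ is connected with intrinsic diameter $O(1)$; I would therefore choose a path $\gamma$ from $x_0$ to $x$ whose portion inside the layer $\{\dist(\cdot,\partial\Omega)<\kappa^{-1+\delta}\}$ is an inward normal segment of length $O(\kappa^{-1+\delta})$, and whose remaining portion stays inside $\omega_\kappa$ and has length $O(1)$. Since $B(x_0)=0$, the fundamental theorem of calculus and the gradient bound give
\[
|B(x)|\le\int_\gamma|\nabla B|\,\md\ell\le C H^{-1}\int_\gamma|\psi|\,\md\ell.
\]
On the layer part $|\psi|\le1$ by Lemma~\ref{lem-hc2-maxp}, contributing $CH^{-1}\kappa^{-1+\delta}$, while on the $\omega_\kappa$ part $|\psi|\le\|\psi\|_{L^\infty(\omega_\kappa)}$, contributing $CH^{-1}\|\psi\|_{L^\infty(\omega_\kappa)}$; adding these yields \eqref{eq:2}.

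The a priori input is immediate from Lemma~\ref{lem-hc2-FoHe}, so I expect the only real care to go into the geometric construction of $\gamma$: one must ensure that the length spent in the boundary layer is genuinely $O(\kappa^{-1+\delta})$ and that the bulk part can be realized by a curve lying entirely in $\omega_\kappa$ of uniformly bounded length. Both are standard consequences of $\Omega$ being smooth, bounded and simply connected once $\kappa^{-1+\delta}$ is small enough, but they are exactly what make the two-scale splitting---and hence the precise form of the right-hand side of \eqref{eq:2}---work.
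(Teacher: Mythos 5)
Your proposal is correct and follows essentially the same route as the paper: both read the gradient bound $|\nabla(\curl\Ab-1)|\leq (\kappa H)^{-1}|\psi|\,\|(\nabla-i\kappa H\Ab)\psi\|_{L^\infty(\Omega)}\leq CH^{-1}|\psi|$ off the second Ginzburg--Landau equation via \eqref{eq-hc2-FoHe2}, and integrate inward from the boundary (where $\curl\Ab=1$), splitting the path into a layer piece of length $\kappa^{-1+\delta}$ where $|\psi|\leq 1$ and a bulk piece of bounded length where $|\psi|\leq\|\psi\|_{L^\infty(\omega_\kappa)}$. The only remark is that the geometric care you anticipate is unnecessary: the straight segment from $x$ to its nearest boundary point $x_0$ already does the job, since for any point $y$ on that segment one has $\dist(y,\partial\Omega)=|y-x_0|$, so the segment leaves the layer after exactly length $\kappa^{-1+\delta}$ and its remainder lies in $\omega_\kappa$ with length at most $\operatorname{diam}(\Omega)$.
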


\begin{proof}
Since $\curl {\bf A} = 1$ on $\partial \Omega$, we get by
integrating from the boundary and using the second Ginzburg-Landau
equation in \eqref{eq-hc2-GLeq},
\begin{align}
  \label{eq:3}
  |\curl {\bf A}(x) - 1| &\leq (\kappa H)^{-1} \big\{\kappa^{-1+\delta} +
  \dist(x,\partial \Omega) \| \psi
  \|_{L^{\infty}(\omega_{\kappa})}\big\} \| \nabla - i\kappa H {\bf A}
  \psi \|_{L^{\infty}(\Omega)}\nonumber \\
&\leq \frac{C}{H} \big\{\kappa^{-1+\delta} +
   \| \psi \|_{L^{\infty}(\omega_{\kappa})}\big\}\,,
\end{align}
where we used \eqref{eq-hc2-FoHe2} to get the second inequality.
\end{proof}

\begin{proof}[Proof of Theorem~\ref{thm:Linfty-bulk}]~\\
 We argue by contradiction. Assume that there exist sequences
 $\{\kappa_n\}$, $\{H_n\}$ and a sequence of critical points $\{(\psi_n,\Ab_n)\}$ such that,
 $$\kappa_n\to\infty\,,\quad H_n=\kappa_n + \nu_n,\quad \text{ where } \quad |\nu_n| \leq g(\kappa_n)\,,$$ and
\begin{equation}\label{hc2-FK-infty-hyp1}
\lambda_n^{-1} \|\psi_n\|_{L^\infty(\omega_n)}\to\infty\quad{\rm as
}~n\to\infty\,.\end{equation}
Here we have simplified notation by defining
\begin{align}
  \label{eq:1}
  \omega_n := \omega_{\kappa_n},\qquad \lambda_n := \lambda(\kappa_n).
\end{align}
Since $\lambda_n \geq \kappa_n^{-1+\delta}$, we have
\begin{align}
\label{eq:powerexp} 0<\lambda_n^{-1}
2^{-\kappa_n^{\delta/2}}\leq(\kappa_n^{\delta/2})^{2(\frac1\delta-1)}
2^{-\kappa_n^{\delta/2}}\to0\quad{\rm as }~n\to\infty.
\end{align}
Let us pick $N_0\in\mathbb N$ sufficiently large such that, for all
$n\geq N_0$ we have,
\begin{equation}\label{hc2-FK-infty-hyp2}
\lambda_n^{-1}\|\psi_n\|_{L^\infty(\omega_n)}\geq2 \quad{\rm
and}\quad \lambda_n^{-1}2^{-\kappa_n^{\delta/2}}\leq 1\,.
\end{equation}

For $M>0$ and $n\in\mathbb N$, we define
$$\omega_{M,n}=\{x\in \Omega~:~{\rm dist}(x,\omega_n)\leq
\frac{M \kappa_n^{\delta/2}}{2 \kappa_n}\}\,.$$ We claim that for
each $n\geq N_0$, there exists $M_n\in[0,\kappa_n^{\delta/2}] \cap
{\mathbb N}$ such that,
\begin{align}
    \label{eq:52}
    \| \psi_n \|_{L^{\infty}(\omega_{M_n,n})} \geq \frac{1}{2} \| \psi_n \|_{L^{\infty}(\omega_{M_n+1,n})}.
  \end{align}
Otherwise,  there exists $n\geq N_0$ such that for all
$M\in[0,\kappa_n^{\delta/2}]$ we have,
\begin{align}
    \label{eq:50}
    \| \psi_n \|_{L^{\infty}(\omega_{M,n})} < \frac{1}{2} \| \psi_n \|_{L^{\infty}(\omega_{M+1,n})}
  \end{align}
Then, using the {\it a priori} bound $\| \psi _n\|_{\infty} \leq 1$,
we get
\begin{align}
  \label{eq:53}
  \|\psi_n\|_{L^\infty(\omega_n)} \leq \left( \frac{1}{2}\right)^{
  \kappa_n^{\delta/2}}\,.
\end{align}
But, since $n\geq N_0$, the above bound is impossible in light of
\eqref{hc2-FK-infty-hyp2}. Therefore, \eqref{eq:52} holds for some
$M_n\in[0,\kappa_n^{\delta/2}] \cap {\mathbb N}$.

We choose now
  $P_n \in \omega_{M_n,n}$ such that
  \begin{align}
    \label{eq:54}
    |\psi_n(P_n)| =  \| \psi_n \|_{L^{\infty}(\omega_{M_n,n})},
  \end{align}
and we define
\begin{align}
  \label{eq:55}
  \Lambda_n := |\psi_n(P_n)|.
\end{align}
Then, by assumption,
\begin{itemize}
\item $\lambda_n^{-1}\,\Lambda_n \to \infty$ as $n\to\infty$.
\item $\dist(P_n, \partial \Omega) \geq \frac12\kappa_n^{-1+\delta}$.
\item $\Lambda_n \geq \frac{1}{2} \| \psi_n \|_{L^{\infty}(B(P_n, \frac{1}{2}\kappa_n^{-1+\delta/2}))}.$
\end{itemize}
Moreover, from Lemma~\ref{lem-FH-jems} we know that $\Lambda_n \to
0$ as $n\to\infty$.

Define now the following re-scaled functions, on $|x| \leq
\frac{1}{4} \kappa_n^{\delta/2}$:
\begin{align}
  \label{eq:56}
  \widetilde \varphi_n(x) &= \Lambda_n^{-1} e^{-i\sqrt{\kappa_nH_n}\,\Ab_n(P_n)} \psi_n\left(P_n+
  \frac{x}{\sqrt{\kappa_n H_n}}\right) ,\\
  \widetilde {\bf a}_n(x) &= \sqrt{\kappa_n H_n} \left(\Ab_n\left(P_n +
  \frac{x}{\sqrt{\kappa_n H_n}}\right) -\Ab_n(P_n)\right)\,.
\end{align}
Using Proposition~\ref{prop:CurlLinfty} and the assumption on
$\Lambda_n$, we know that
\begin{align}
  \label{eq:4}
  |\curl {\bf A}_n - 1| \leq \frac{C}{\kappa_n}
  (\kappa_n^{-1+\delta} + \| \psi_n \|_{L^{\infty}(\omega_n)}) \leq\frac{C}{\kappa_n}
  (\kappa_n^{-1+\delta} + \Lambda_n) \leq \frac{2C}{\kappa_n} \Lambda_n.
\end{align}
Therefore, with a new constant $C$,
\begin{align}
|\curl \widetilde {\bf a}_n - 1| = \left| (\curl {\bf A_n})\left(P_n
+
  \frac{x}{\sqrt{\kappa_n H_n}}\right) - 1\right| \leq \frac{C}{\kappa_n} \Lambda_n.
\end{align}
So we can choose a gauge function $g_n$ such that ${\bf a}_n :=
\widetilde {\bf a}_n - \nabla g_n$ satisfies
\begin{align}\label{eq:a-F}
|{\bf a}_n(x) -{\bf A}_0(x)| \leq \frac{C |x| }{ \kappa_n}
\Lambda_n\,.
\end{align}
with ${\bf A}_0$ is the magnetic potential introduced in
\eqref{eq-hc2-mpA0} corresponding  for unit constant magnetic field.

We define
\begin{align}
  \label{eq:5}
  \varphi_n := e^{-i g_n} \widetilde{\varphi}_n.
\end{align}
Using \eqref{eq:a-F} we get
\begin{align}
  \label{eq:6}
  |{\bf a}_n(x) -{\bf A}_0(x)| \leq \frac{C |x| }{ \kappa_n}
  \Lambda_n,\qquad
  |{\bf a}_n(x) +{\bf A}_0(x)| \leq |x|.
\end{align}
Furthermore, by \eqref{eq-hc2-FoHe2}, we get $|(\nabla - i {\bf
  a}_{n})\varphi_n| \leq C$, so combined with \eqref{eq:6} we get
\begin{align}
  \label{eq:7}
  |\nabla \varphi_n(x) | \leq C(1 + |x|),\\
|\varphi_n(0)| = 1,\qquad |\varphi_n(x)| \leq 2,
\end{align}
for all $|x| \leq \frac{1}{4} \kappa_n^{\delta/2}$.

The equation for $\varphi_n$ is
\begin{align}
\label{eq:phi-2} -\Delta \varphi_n - 2i {\bf a}_n \cdot \nabla
\varphi_n + |{\bf a}_n|^2 \varphi_n = \frac{\kappa_n}{H_n} (1 -
\Lambda_n^2 |\varphi_n|^2)\varphi_n,
\end{align}
on $|x| \leq \frac{1}{4} \kappa_n^{\delta/2}$. We reformulate this
as
\begin{align}
  \label{eq:57}
  [(-i\nabla + {\bf A}_0)^2-1] \varphi_n &= 2i ({\bf a}_n - {\bf
    A}_0)\nabla \varphi_n  - ({\bf
    a}_n - {\bf A}_0)({\bf a}_n + {\bf A}_0) \varphi_n \nonumber \\
&\quad+
  (\frac{\kappa_n}{H_n}-1 - \Lambda_n^2 |\varphi_n|^2) \varphi_n.
\end{align}

By elliptic estimates we get, exactly as in
\cite[(2.16)-(2.20)]{FH08}, that there exists a function
$\varphi_{\infty}\in L^{\infty}(\R^2)$ such that, up to the
extraction of a subsequence,
$$\varphi_n\stackrel{n\rightarrow\infty}{\longrightarrow}\varphi_\infty$$
holds in $C^1(K)$ on any compact subset $K\subset\R^2$. Moreover,
$\varphi_\infty$ satisfies
 $|\varphi_{\infty}(0)|=1$ and
\begin{align}
  \label{eq:59}
  [(-i\nabla + {\bf A}_0)^2-1] \varphi_{\infty} = 0\quad{\rm in}~\R^ 2\,.
\end{align}

Consider now a localization function $\chi \in C^{\infty}(\R)$ with
$\chi(t)=1$ for $t \leq 1/2$, $\chi(t) = 0$ for $t\geq 1$, and
define $\chi_n(x) = \chi(\kappa_n^{-\eta} |x|)$ for some
$\eta\in(0,\frac\delta2)$. We have the following equation for
$\chi_n \varphi_n$:
\begin{align}
  \label{eq:60}
  [(-i\nabla + {\bf A}_0)^2&-1] (\chi_n\varphi_n) \nonumber \\
&= \chi_n [(-i\nabla +
  {\bf A}_0)^2-1] \varphi_n -2i (\nabla \chi_n) \cdot (-i\nabla +
  {\bf A}_0) \varphi_n - (\Delta \chi_n) \varphi_n \nonumber \\
  &=2i ({\bf a}_n - {\bf
    A}_0)\chi_n \nabla \varphi_n  - ({\bf
    a}_n - {\bf A}_0)({\bf a}_n + {\bf A}_0) \chi_n\varphi_n \nonumber \\
&\quad+
  (\frac{\kappa_n}{H_n}-1 - \Lambda_n^2 |\varphi_n|^2) \chi_n \varphi_n
  -2i (\nabla \chi_n) \cdot (-i\nabla +
  {\bf A}_0) \varphi_n - (\Delta \chi_n) \varphi_n\,.
\end{align}

We introduce the projector $\Pi_0$ on the lowest Landau level. This
projector is given explicitly by the integral kernel,
$$\Pi_0(x,y)=\frac1{2\pi}e^{\frac{i}2(x_1y_2-x_2y_1)}e^{-\frac12(x-y)^2}\,,$$
and is continuous on $L^p(\R^2)$ for all $p\in[2,\infty]$.

Fix a function $f \in C_0^{\infty}(\R^2)$. We will prove that
\begin{align}\label{eq:weak}
\int_{\R^2} \big(\Pi_0 f(x)\big) |\varphi_{\infty}(x)|^2
\varphi_{\infty}(x)\,dx = 0.
\end{align}
Since $f$ is arbitrary, this implies that
\begin{align}
\label{eq:complementary} \Pi_0\left(
|\varphi_{\infty}|^2\varphi_{\infty}\right) =0\quad{\rm in}~\R^2.
\end{align}
But a result from \cite{FH08} says that \eqref{eq:complementary}
combined with \eqref{eq:59} implies that $\varphi_{\infty} \equiv
0$. This is in contradiction to the fact that
$|\varphi_{\infty}(0)|=1$. Therefore we have reached a
contradiction.

Thus, it only remains to prove \eqref{eq:weak}.

By definition of $\Pi_0$ we have
\begin{align}
\int_{\R^2} \big(\Pi_0 f(x)\big)  [(-i\nabla + {\bf A}_0)^2-1]
(\chi_n\varphi_n) \,dx= 0.
\end{align}
By consequence,
\begin{align}\label{eq:limit}
  \lim_{n\rightarrow \infty} \Lambda_n^{-2} \int_{\R^2} \big(\Pi_0 f(x)\big)  [(-i\nabla + {\bf A}_0)^2-1]
   (\chi_n\varphi_n) \,dx= 0.
\end{align}
We insert \eqref{eq:60} in \eqref{eq:limit}. Consider first the term
with $(\frac{\kappa_n}{H_n}-1 - \Lambda_n^2 |\varphi_n|^2) \chi_n
\varphi_n$. By assumption
$$
\Lambda_n^{-2} |\frac{\kappa_n}{H_n} -1 | \leq \Lambda_n^{-2}
\lambda_n^2 \to 0\quad{\rm as~}n\to\infty\,.
$$
So one readily gets the convergence,
\begin{align}
 \Lambda_n^{-2} \int_{\R^2} \big(\Pi_0 f(x)\big) (\frac{\kappa_n}{H_n}-1 - \Lambda_n^2 |\varphi_n|^2) \chi_n \varphi_n\,dx &
 \stackrel{n \rightarrow \infty}{\longrightarrow}
- \int_{\R^2} \big(\Pi_0 f(x)\big) |\varphi_{\infty}|^2
\varphi_{\infty}\,dx.
\end{align}
So in order to obtain \eqref{eq:weak} we only have to prove that the
other terms from \eqref{eq:60} vanish in the limit.

By \eqref{eq:6} and \eqref{eq:7}
\begin{align}
 \Big| \Lambda_n^{-2} \int_{\R^2} \big(\Pi_0 f(x)\big) 2i ({\bf a}_n - {\bf
    A}_0)\chi_n \nabla \varphi_n  &- (({\bf
    a}_n - {\bf A}_0)({\bf a}_n + {\bf A}_0) \chi_n\varphi_n \,dx\Big| \nonumber \\
    &\leq
     \Lambda_n^{-2} \int_{\R^2} \big|\Pi_0 f(x)\big| C  \frac{(1+|x|^2)\Lambda_n}{\kappa_n} \,dx \nonumber \\
     &\to0\quad{\rm as}~n\to\infty\,,
\end{align}
where we used the assumption that $\Lambda_n \gg \kappa_n^{-1}$ as
$n\to\infty$.

Notice, using the compact support of $f$ and the off-diagonal decay
of $\Pi_0$, that $\Pi_0 f(x)$ is exponentially small on $\{ |x| \geq
\kappa^{-\eta}\}$. Therefore, it is easy to see that also
\begin{align}
 \Lambda_n^{-2} \int_{\R^2} \big(\Pi_0 f(x)\big) \{2i (\nabla \chi_n) \cdot (-i\nabla +
  {\bf A}_0) \varphi_n +(\Delta \chi_n) \varphi_n\} \,dx \to 0
\quad{\rm as~}n\to\infty.
\end{align}
This finishes the proof of \eqref{eq:weak} and therefore of
Theorem~\ref{thm:Linfty-bulk}.
\end{proof}

\section{Upper bound of the energy}\label{hc2-sec-ub}
In this section we construct test configurations and compute their
energies, obtaining thus upper bounds for the functional
$\mathcal E$ in (\ref{eq-hc2-GL}).
Recall the definition of the ground state energy
$C_0(\kappa,H)$ in \eqref{eq-GL-GS}. We will prove the following theorem.

\begin{thm}\label{thm-hc2-FK-ub}
Assume that the magnetic field satisfies
$$H=\kappa-\mu(\kappa)\sqrt{\kappa}\,,$$
with  $\mu:\R_+\mapsto\R$ a function such that
$$\lim_{\kappa\to\infty}\frac{\mu(\kappa)}{\sqrt{\kappa}}=0\,.$$

Then, as $\kappa\to\infty$, the following upper bound  holds
for the ground state energy introduced in  (\ref{eq-GL-GS}),
\begin{equation}\label{eq-GS-UB}
C_0(\kappa,H)\leq
-E_1|\partial\Omega|\kappa
-E_2|\Omega|\,[\mu(\kappa)]_+^2\kappa+o(\max(1,[\mu(\kappa)]_+^2)\kappa)\,.
\end{equation}
Here $E_1>0$ and $E_2>0$ are the constants
introduced in (\ref{eq-hc2-E1}) and (\ref{eq-hc2-E2}) respectively.
\end{thm}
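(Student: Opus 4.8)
The plan is to construct an explicit test configuration $(\psi,\Ab)$ that separately captures the boundary contribution $-E_1|\partial\Omega|\kappa$ and the bulk contribution $-E_2|\Omega|[\mu]_+^2\kappa$, then to show the cross-terms and error terms are of lower order. Since the two contributions live on different length scales and in different regions of $\Omega$, I would build the trial function as a sum (or gluing) of two pieces with disjoint supports: a boundary layer piece $\psi_{\mathrm{bd}}$ supported in a tubular neighborhood $\{\dist(x,\partial\Omega)\leq L/\kappa\}$, and a bulk piece $\psi_{\mathrm{bulk}}$ supported in the interior away from the boundary. The vector potential $\Ab$ I would simply take to be $\FF$ (so that $\curl\Ab\equiv1$ and the field-energy term vanishes), which is legitimate for an upper bound and avoids having to solve the second Ginzburg--Landau equation.

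\emph{Boundary piece.} For $\psi_{\mathrm{bd}}$ I would follow Pan's construction underlying Theorem~\ref{thm-hc2-Pa02}: tile a boundary neighborhood by cells of tangential width $2\ell/\kappa$, and in each cell place a rescaled copy of the minimizer $\phi_\ell$ of the reduced functional $\mathcal E_\ell$ from \eqref{eq-hc2-redGL}, using boundary coordinates $(s,t)$ in which the operator $(\nabla-i\kappa H\FF)$ locally reduces, after rescaling by $\sqrt{\kappa H}\approx\kappa$ and a gauge transformation, to $(\nabla-i\mathbf E)$. Summing over roughly $|\partial\Omega|\kappa/(2\ell)$ cells and using part~(3) of Theorem~\ref{thm-hc2-Pa02}, namely $d(\ell)/(2\ell)\to -E_1$, yields an energy $\approx -E_1|\partial\Omega|\kappa$; the curvature corrections and the $\ell\to\infty$ limit contribute only $o(\kappa)$ after first sending $\kappa\to\infty$ and then $\ell\to\infty$.

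\emph{Bulk piece.} Here I would exploit the Abrikosov picture of Section~\ref{hc2-sec-lbulkp}. With $b:=\kappa/H\approx 1$, note $H=\kappa-\mu\sqrt\kappa$ gives $1-b\sim\mu/\sqrt\kappa$, so $(1-b)^2\kappa\sim\mu^2$ is exactly the right order. I would tile the interior by lattice cells $K_R$ (with $R$ chosen so $|K_R|\in2\pi\mathbb N$ and $R\to\infty$ slowly), place in each cell the rescaled Abrikosov minimizer $f_R\in L_R$ from Proposition~\ref{prop-hc2-exmin}, and rescale by a factor $\sqrt{1-b}\sim(\mu/\sqrt\kappa)^{1/2}$ as in the second Remark following \eqref{eq-hc2-E2-TDL}. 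After the change of variables $x\mapsto\sqrt{\kappa H}\,x$, the local operator is $(\nabla-i\Ab_0)$ and the reduced energy per unit area is $(1-b)^2$ times the Abrikosov energy $F_R$; summing over $\approx|\Omega|/R^2$ cells and invoking Theorem~\ref{thm-AS} ($-c(R)\to E_2$) gives a bulk energy $\approx -E_2|\Omega|(1-b)^2\kappa H \approx -E_2|\Omega|\mu^2\kappa = -E_2|\Omega|[\mu]_+^2\kappa$ (when $\mu\leq0$, i.e. $H\geq\kappa$, the bulk term is dropped and the piece is simply set to zero). The positive gradient cost $\frac{b}{1-b}\int|(\nabla-i\Ab_0)v|^2$ is absorbed because $v$ lies in the lowest Landau level $L_R$ where this term equals $\frac{b}{1-b}\|v\|^2$, matching against the $-|v|^2$ term at leading order; tracking this cancellation carefully is what fixes the precise constant.

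\emph{Main obstacle.} The delicate point will be controlling the interface between the two pieces and the matching of the periodic lattice to the fixed domain $\Omega$. First, the periodicity/quasi-periodicity conditions defining $E_R$ in \eqref{eq-hc2-space1} are not globally compatible with a bounded domain, so one must cut off the Abrikosov function near cell boundaries (or near $\partial\Omega$ and the boundary layer), and the resulting cut-off gradient errors must be shown to be $o(\max(1,\mu^2)\kappa)$ — this forces $R$ to grow but not too fast relative to $\kappa$, and requires the decay/boundedness estimate \eqref{eq-hc2-dec:ftr}. Second, one must verify the two length scales genuinely decouple: the boundary layer has thickness $O(1/\kappa)$ while the bulk tiling has cell size $O(R/\kappa)$ with $R\to\infty$, so choosing the bulk region to start at distance, say, $\kappa^{-1/2}$ from $\partial\Omega$ makes the overlap energetically negligible while leaving $|\Omega|-o(1)$ available for the bulk tiling, so that $|\Omega\cap\{\text{interior}\}|\to|\Omega|$. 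Once these geometric bookkeeping issues are handled, the two leading terms add and the cross-terms vanish since the supports are disjoint, giving \eqref{eq-GS-UB}.
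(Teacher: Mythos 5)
Your proposal is correct in substance and follows the same overall strategy as the paper: a test configuration with two disjointly supported pieces, the vector potential taken to be $\FF$ so that the field term vanishes, a boundary piece built from the half-plane minimizer $\phi_\ell$ of \eqref{eq-hc2-redGL} in boundary coordinates, a bulk piece built from the Abrikosov minimizer $f_R\in L_R$ with amplitude of order $|1-\kappa/H|^{1/2}$, the kinetic cost absorbed through the lowest-Landau-level eigenvalue relation, and the bulk piece set to zero when $\mu\leq 0$. Two implementation differences are worth recording. First, for the boundary term the paper does not tile $\partial\Omega$ by many cells of fixed rescaled width $2\ell$ followed by a double limit; it takes a single $\kappa$-dependent choice $\ell=|\partial\Omega|/(4\varepsilon)$, $\varepsilon=(\kappa H)^{-1/2}$, so that exactly two rescaled copies of $\phi_\ell$ cover $\partial\Omega$ (the split into two halves $U_1,U_2$ is forced by Lemma~\ref{eq-hc2-gaugeT}, whose gauge normalization cannot be performed globally around the closed boundary curve), and part (3) of Theorem~\ref{thm-hc2-Pa02} is invoked once; your tiling variant also works, since the Dirichlet condition $\phi_\ell(\pm\ell,\cdot)=0$ lets the copies glue in $H^1$ and the per-cell defect $M$ sums to $O(\kappa/\ell)$. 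Second --- and this is where your ``main obstacle'' largely dissolves --- no cutoff near interior cell boundaries is needed at all: because $f_R$ lies in the magnetically periodic space $E_R$, the configuration $(f_R,\Ab_0)$ extends by (quasi-)periodicity to a globally defined function on $\R^2$ satisfying $-(\nabla-i\Ab_0)^2f_R=f_R$ everywhere, so the paper's bulk test function is simply $\kappa^{-1/4}\,h\bigl(R^{\rho}\dist(x,\partial\Omega)/2\bigr)f_R(x/\varepsilon)$ with a single cutoff $h$ near $\partial\Omega$ only, and a single global gauge function $\varphi_0$ with $\FF=\Ab_0-\nabla\varphi_0$ replaces your per-cell gauges; the IMS/eigenvalue computation then yields the main term $\mu^2\kappa\int_\Omega\bigl(\tfrac12|\widetilde f_R|^4-|\widetilde f_R|^2\bigr)\,dx$ plus errors $C\mu(\kappa^{-1/2}R^{\rho}+\kappa^{3/2}R^{-\rho})$. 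Your per-cell cutoff scheme can be pushed through (the IMS formula preserves the eigenvalue identity under multiplication by cutoffs, and the extra gradient cost is $o(\max(1,[\mu]_+^2)\kappa)$ for $R$ growing suitably with $\kappa$), but it is strictly more bookkeeping than the construction the paper uses.
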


\subsection{Boundary configuration}

\subsubsection{Boundary coordinates}
In order to treat the surface (boundary) energy contribution, we
shall frequently pass to a coordinate system valid in a tubular
neighborhood of $\partial\Omega$.  For more details on these
coordinates, see for instance \cite[Appendix F]{FoHe08}.

For a sufficiently small $t_0>0$, we introduce the open set
$$\Omega(t_0)=\{x\in\mathbb R^2~:~{\rm
  dist}(x,\partial\Omega)<t_0\}.$$
Let $s\mapsto\gamma(s)$ be the  parametrization of $\partial\Omega$
by arc-length and $\nu(s)$ the unit inward normal of
$\partial\Omega$ at
  $\gamma(s)$.

Define the transformation
\begin{align}
  \label{eq:19}
\Phi:
\left[-\frac{|\partial\Omega|}2,\frac{|\partial\Omega|}2\right[\,\times
]-t_0,t_0[\ni(s,t)\mapsto \gamma(s)+t\nu(s)\in \Omega(t_0)
\end{align}
and extend it to $\R\times]-t_0,t_0[$ by periodicity with respect to
$s$. The resulting transformation
becomes a  local  diffeomorphism whose Jacobian is $|D\Phi|=1-tk
 (s)$, where $k$ denotes the curvature of $\partial\Omega$. For $x\in\Omega(t_0)$, we put
$$\Phi^{-1}(x)=(s(x),t(x))$$
and we get in particular that \begin{equation} \label{eq-hc2-t(x)}
t(x)={\rm dist}(x,\partial\Omega)\,.\end{equation}

Using the coordinate transformation $\Phi$, we associate to any
function $u\in L^2(\Omega)$, a function $\widetilde u$ defined in $
[-\frac{|\partial\Omega|}2,
\frac{|\partial\Omega|}2[\,\times\,[0,t_0]$ by,
\begin{equation}\label{VI-utilde}
\widetilde u(s,t)=u(\Phi^{-1}(s,t))\,,
\end{equation}
and we will use the symbol $U_{\Phi}$ for the operator that maps $u$
to $\widetilde u$. Notice also that the function $\widetilde u$
extends naturally to a $|\partial\Omega|$-periodic function in
$s\in\mathbb R$.

We get then the following change of variable formulae.
\begin{proposition}\label{hc2-App:transf}
Let $u\in H^1(\Omega(t_0))$ and $\Ab\in H^1(\Omega;\R^2)$. We write
$\widetilde u(s,t)=u(\Phi(s,t))$,
\begin{align*}
\widetilde \Ab_1 = \Ab_1 \circ \Phi,\qquad \widetilde \Ab_2 = \Ab_2 \circ
\Phi\,,\quad g(s,t)=1-tk(s)\,.
\end{align*}
Then we have~:
\begin{multline}\label{App:qfstco}
\int_{\Omega(t_0)}\left|(\nabla-i\Ab)u\right|^2dx =
\int_{-\frac{|\partial\Omega|}2}^{\frac{|\partial\Omega|}2}
\int_{0}^{t_0}\left[ [g(s,t)]^{-2}|(\partial_s-i\widetilde
\Ab_1)\widetilde u|^2+|(\partial_t-i\widetilde \Ab_2)\widetilde
u|^2\right]g(s,t)\,dtds,
\end{multline}
and
\begin{equation}\label{App:nostco}
\int_{\Omega(t_0)}
|u(x)|^2\,dx=\int_{-\frac{|\partial\Omega|}2}^{\frac{|\partial\Omega|}2}
\int_{0}^{t_0} |\widetilde u(s,t)|^2g(s,t)\,dtds.
\end{equation}
\end{proposition}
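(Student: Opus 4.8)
The plan is to treat Proposition~\ref{hc2-App:transf} as a change-of-variables computation for the diffeomorphism $\Phi$, the only genuinely geometric input being the Riemannian metric that $\Phi$ induces on the parameter strip. First I would record the structure of $\Phi$: since $\Phi(s,t)=\gamma(s)+t\nu(s)$ we have $\partial_t\Phi=\nu(s)$ and $\partial_s\Phi=\gamma'(s)+t\,\nu'(s)$. Writing $\mathbf T(s)=\gamma'(s)$ for the unit tangent and using the planar Frenet relations $\gamma'=\mathbf T$ and $\nu'=-k\,\mathbf T$ (with the sign of $k$ fixed so that $\nu$ is the inward normal), this gives $\partial_s\Phi=(1-tk(s))\,\mathbf T(s)=g(s,t)\,\mathbf T(s)$. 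Because $\{\mathbf T(s),\nu(s)\}$ is an orthonormal frame of $\R^2$, the pulled-back Euclidean metric is diagonal, equal to $\mathrm{diag}(g^2,1)$, and in particular the Jacobian is $|D\Phi|=g=1-tk$. Shrinking $t_0$ so that $1-tk>0$ throughout $\Omega(t_0)$ makes $\Phi$ a diffeomorphism with positive Jacobian, which legitimizes the substitution $x=\Phi(s,t)$, $dx=g\,ds\,dt$.

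With the Jacobian in hand, the second identity \eqref{App:nostco} is immediate: it is just this substitution applied to $\int_{\Omega(t_0)}|u|^2\,dx$, since $|u(\Phi(s,t))|^2=|\widetilde u(s,t)|^2$.

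For the first identity \eqref{App:qfstco} I would avoid a brute-force chain-rule expansion of $\partial_{x_1},\partial_{x_2}$ via the inverse Jacobian, and instead use the coordinate-invariant description of the integrand: $|(\nabla-i\Ab)u|^2$ is the squared Euclidean norm of the complex-valued $1$-form $\eta=du-iu\,(\Ab_1\,dx_1+\Ab_2\,dx_2)$. Pulling back by $\Phi$ sends $du$ to $d\widetilde u=\partial_s\widetilde u\,ds+\partial_t\widetilde u\,dt$, and sends $\Ab_1\,dx_1+\Ab_2\,dx_2$ to $\widetilde\Ab_1\,ds+\widetilde\Ab_2\,dt$, where the components of the pulled-back potential are $\widetilde\Ab_1=(\Ab\circ\Phi)\cdot\partial_s\Phi=g\,(\Ab\cdot\mathbf T)\circ\Phi$ and $\widetilde\Ab_2=(\Ab\circ\Phi)\cdot\partial_t\Phi=(\Ab\cdot\nu)\circ\Phi$; this is the meaning of the definitions of $\widetilde\Ab_1,\widetilde\Ab_2$ in the statement. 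Hence the components of $\Phi^*\eta$ are exactly $(\partial_s-i\widetilde\Ab_1)\widetilde u$ and $(\partial_t-i\widetilde\Ab_2)\widetilde u$. Since the squared norm of a $1$-form $\beta_s\,ds+\beta_t\,dt$ with respect to the metric $\mathrm{diag}(g^2,1)$ is $g^{-2}|\beta_s|^2+|\beta_t|^2$, and since the volume form is $g\,ds\,dt$, integrating reproduces precisely the right-hand side of \eqref{App:qfstco}.

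I expect the main obstacle to be the correct bookkeeping of the vector potential rather than the scalar geometry: one must remember that $\Ab$ transforms as a $1$-form, so its tangential component acquires the factor $g=1-tk$ coming from $\partial_s\Phi=g\,\mathbf T$, whereas its normal component is unchanged. This asymmetry is exactly what forces the weights $g^{-2}$ and $1$ in front of the two terms, and it is the step where confusing $\Ab\cdot\mathbf T$ with $g\,(\Ab\cdot\mathbf T)$, or flipping the Frenet sign, would corrupt the formula; a useful internal check is gauge covariance, namely that under $u\mapsto e^{i\phi}u$, $\Ab\mapsto\Ab+\nabla\phi$ one has $\partial_s(\phi\circ\Phi)=g\,(\nabla\phi\cdot\mathbf T)\circ\Phi$, which matches the transformation of $\widetilde\Ab_1$. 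A secondary, purely technical, point is to justify the computation first for $u,\Ab$ smooth and then pass to $H^1$ by density, using $1-tk>0$ on the chosen tubular neighborhood.
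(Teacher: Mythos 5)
Your proof is correct. Note that the paper itself gives no proof of this proposition: it is stated as a standard fact, with the reader referred to \cite[Appendix~F]{FoHe08}, where the identity is obtained by the brute-force route you deliberately avoided, namely writing $D\Phi$ in the frame $\{\gamma'(s),\nu(s)\}$, inverting it, and expanding $|(\nabla-i\Ab)u|^2$ by the chain rule. Your formulation via the pullback metric $\mathrm{diag}(g^2,1)$ and the pullback of the $1$-form $du-iu\,(\Ab_1\,dx_1+\Ab_2\,dx_2)$ is an equivalent but better-organized version of that computation: the weights $g^{-2}$ and $1$, and the volume factor $g$, come out of the metric structure rather than from tracking inverse-Jacobian entries. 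You also correctly resolved the one genuine trap here: the statement's notation $\widetilde\Ab_1=\Ab_1\circ\Phi$, $\widetilde\Ab_2=\Ab_2\circ\Phi$ cannot be read literally as Cartesian components composed with $\Phi$ (with that reading the formula is false, as your gauge-covariance check detects); it must mean the $1$-form pullback components $\widetilde\Ab_1=g\,(\Ab\cdot\gamma'(s))\circ\Phi$ and $\widetilde\Ab_2=(\Ab\cdot\nu(s))\circ\Phi$, which is exactly how the reference \cite[Appendix~F]{FoHe08} defines them. Your Frenet sign convention $\nu'=-k\gamma'$ is consistent with the paper's Jacobian $|D\Phi|=1-tk$, and the closing density argument disposes of the $H^1$ regularity issue, so the argument is complete.
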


Recall the vector field $\FF$ introduced in \eqref{eq-hc2-F}.
Another feature of the coordinate system $(s,t)$ is that it permits
us to express $\FF$ in a more explicit form (up to a gauge
transformation).

Let us introduce the following two subsets of $\Omega(t_0)$,
\begin{equation}\label{eq-hc2-U12}
U_1=\Phi^{-1}\left(\left[-\frac{|\partial\Omega|}2,0\right)\times[0,t_0)\right)\,,\quad
U_2=\Phi^{-1}\left(\left[0,\frac{|\partial\Omega|}2\right)\times[0,t_0)\right)\,.
\end{equation}

\begin{lem}\label{eq-hc2-gaugeT}
There exist two functions $\chi_1\in C^2(U_1;\R)$ and $\chi_2\in
C^2(U_2;\R)$ such that upon setting
$$\FF^1=\FF+\nabla\chi_1\quad{\rm in~}U_1\,,\quad \FF^2=\FF+\nabla\chi_2\quad{\rm in~}U_2\,,$$
then we have in the $(s,t)$ coordinates,
$$\left(U_\Phi\FF^j\right)(s,t)=\left(-t+k(s)\frac{t^2}2,0\right)\quad{\rm
in}~\Phi^{-1}(U_j)\,,\quad j=1,2\,.
$$
\end{lem}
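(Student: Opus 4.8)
The plan is to phrase everything in terms of the magnetic $1$-form and its pullback under $\Phi$, so that the statement becomes a purely local question about closed $1$-forms. Associate to $\FF$ the $1$-form $\omega=\FF_1\,\md x_1+\FF_2\,\md x_2$; since $\curl\FF=1$ we have $\md\omega=\md x_1\wedge \md x_2$. Adding a gradient $\nabla\chi_j$ to $\FF$ amounts, after pullback, to adding the exact form $\md(\chi_j\circ\Phi)$, and by definition the components $(U_\Phi\FF^j)$ are exactly the coefficients of $\Phi^*(\omega+\md\chi_j)$ in the $(s,t)$ coframe $\md s,\md t$. Thus the lemma asks us to exhibit, on each $U_j$, a gauge killing the difference between $\Phi^*\omega$ and the target form $\beta:=\left(-t+k(s)\frac{t^2}{2}\right)\md s$.

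First I would compute $\Phi^*\omega$. Using $\partial_s\Phi=g(s,t)\gamma'(s)$ and $\partial_t\Phi=\nu(s)$ (the Frenet relations, consistent with $|D\Phi|=g=1-tk$), the covariant components are $\widetilde\FF_1=g(s,t)\,(\FF\circ\Phi)\cdot\gamma'(s)$ and $\widetilde\FF_2=(\FF\circ\Phi)\cdot\nu(s)$, so that $\Phi^*\omega=\widetilde\FF_1\,\md s+\widetilde\FF_2\,\md t$. Pulling back the field equation gives $\md(\Phi^*\omega)=\Phi^*(\md x_1\wedge \md x_2)=g(s,t)\,\md s\wedge \md t$, i.e. $\partial_s\widetilde\FF_2-\partial_t\widetilde\FF_1=1-tk(s)$. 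The target form $\beta$ is designed to have the same exterior derivative: indeed $\partial_s(0)-\partial_t(-t+k(s)\frac{t^2}{2})=1-tk(s)$. Consequently $\Phi^*\omega-\beta$ is a closed $1$-form on $\Phi^{-1}(U_j)$.

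Since each $\Phi^{-1}(U_j)$ is a rectangle, hence simply connected, a closed form is exact, and I would take $\widetilde\chi_j$ with $\md\widetilde\chi_j=\Phi^*\omega-\beta$ and set $\chi_j=\widetilde\chi_j\circ\Phi^{-1}$; then $U_\Phi\FF^j=\beta$ in components, which is the claim. Concretely one can integrate: solving $\partial_t\widetilde\chi_j=\widetilde\FF_2$ forces $\widetilde\chi_j(s,t)=\int_0^t\widetilde\FF_2(s,t')\,\md t'+h(s)$, and substituting the curl relation reduces the remaining equation $\partial_s\widetilde\chi_j=\widetilde\FF_1-\alpha_1$ (with $\alpha_1=-t+k(s)\frac{t^2}{2}$) to $h'(s)=\widetilde\FF_1(s,0)$, solved by $h(s)=\int_0^s\widetilde\FF_1(s',0)\,\md s'$. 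Smoothness of $\FF$, $k$ and $\Phi$ then gives $\chi_j\in C^2$.

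The main obstacle is precisely why the domain must be split into the two halves $U_1,U_2$ rather than handled on all of $\Omega(t_0)$. On the full tubular neighborhood the coordinate domain is the cylinder $(\R/|\partial\Omega|\mathbb Z)\times[0,t_0)$, whose first cohomology is nontrivial, and the period of the closed form $\Phi^*\omega-\beta$ around the $s$-circle need not vanish; indeed it equals $\oint_{t=0}\widetilde\FF_1\,\md s=\oint_{\partial\Omega}\FF\cdot\md\gamma=\int_\Omega\curl\FF\,\md x=|\Omega|\neq0$ by Stokes, which is exactly the obstruction to finding a single global potential $h$. Restricting to $U_1$ and $U_2$, where $s$ runs over an interval, makes $\Phi^{-1}(U_j)$ simply connected and removes the obstruction, at the cost of two a priori different gauge functions. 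I would also note that the boundary condition $\nu\cdot\FF=0$ yields $\widetilde\FF_2|_{t=0}=0=\alpha_2|_{t=0}$, a consistency check matching the target form along $\partial\Omega$.
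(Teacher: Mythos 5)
The paper itself offers no proof of this lemma: it is quoted as known, and the remark that follows it points to \cite[Lemma~F.1.1]{FoHe08}, whose proof is exactly the explicit integration you carry out. So your route is essentially the standard one, and its substantive content is sound: the interpretation of $U_\Phi\FF^j$ as the coefficients of the pullback $1$-form (the only reading under which \eqref{App:qfstco} is true); the computation $\Phi^*\omega=\widetilde\FF_1\,\md s+\widetilde\FF_2\,\md t$ with $\widetilde\FF_1=g\,(\FF\circ\Phi)\cdot\gamma'$, $\widetilde\FF_2=(\FF\circ\Phi)\cdot\nu$; the identity $\partial_s\widetilde\FF_2-\partial_t\widetilde\FF_1=1-tk(s)=-\partial_t\left(-t+k(s)\frac{t^2}2\right)$, which makes $\Phi^*\omega-\beta$ closed; exactness on the simply connected rectangles $\Phi^{-1}(U_j)$; and the period computation $\oint_{t=0}\widetilde\FF_1\,\md s=|\Omega|\neq0$, which is precisely the obstruction (the paper's geometric constant $\gamma_0$) that forces the splitting into $U_1$ and $U_2$.

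There is, however, one sign slip that makes your concluding equation false as written, and you should fix it. With your choice $\md\widetilde\chi_j=\Phi^*\omega-\beta$ and $\chi_j=\widetilde\chi_j\circ\Phi^{-1}$, the pullback of the $1$-form of $\FF+\nabla\chi_j$ is $\Phi^*\omega+\md\widetilde\chi_j=2\,\Phi^*\omega-\beta$, not $\beta$; what your $\chi_j$ gauges correctly is $\FF-\nabla\chi_j$. Since the lemma is stated with a plus sign, you must take $\widetilde\chi_j$ to be a primitive of $\beta-\Phi^*\omega$ (equivalently, replace your $\chi_j$ by $-\chi_j$); correspondingly the explicit formulas become $\partial_t\widetilde\chi_j=-\widetilde\FF_2$, $\partial_s\widetilde\chi_j=\alpha_1-\widetilde\FF_1$ and $h'(s)=-\widetilde\FF_1(s,0)$. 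With that single change the argument is complete, including the regularity claim: $\FF$, $k$ and $\Phi$ are smooth, so the $\chi_j$ are in fact $C^\infty$, in particular $C^2$.
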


The  gauge
transformation in Lemma~\ref{eq-hc2-gaugeT} can not be applied
 globally in $\Omega(t_0)$, or otherwise one has to add a geometric
constant $\gamma_0$ in the expression of the obtained field, see
\cite[Lemma~F.1.1]{FoHe08}. In order to avoid the presence of such a
geometric constant, we partition the domain into two different subsets
and work seperately in each of them.

\subsubsection{The test configuration}

Let us introduce, for reasons of convenience  that will become
clear, the following small parameter,
\begin{equation}\label{eq-hc2-epsilon}
\varepsilon=\frac1{\sqrt{\kappa H}}\,.
\end{equation}
Let, for $\ell>0$, $\phi_\ell$ be
a minimizer of
the reduced Ginzburg-Landau energy $\mathcal E_\ell$ from
(\ref{eq-hc2-redGL}), see Theorem~\ref{thm-hc2-Pa02}.
We make the following choice of $\ell$,
\begin{equation}\label{eq-hc2-ell}
\ell=\frac{|\partial\Omega|}{4\varepsilon}\,,
\end{equation}
Define, for  $(s,t)\in [-\frac{|\partial\Omega|}2,\frac{|\partial\Omega|}2)$,

\begin{equation}\label{eq-hc2-tf:bnd}
\varphi_{\rho,\varepsilon}(s,t)= \left\{
\begin{array}{lll}
\chi\left(\displaystyle\frac{t}{\varepsilon^\rho}\right)\phi_\ell\left(\displaystyle
\frac{s}\varepsilon+\ell,\displaystyle\frac{t}\varepsilon\right)\,,&{\rm
if}&-\displaystyle\frac{|\partial\Omega|}2\leq s<0\,,\\
\chi\left(\displaystyle\frac{t}{\varepsilon^\rho}\right)\phi_\ell\left(\displaystyle
\frac{s}\varepsilon-\ell,\displaystyle\frac{t}\varepsilon\right)\,,&{\rm
if}&0\leq s<\displaystyle\frac{|\partial\Omega|}2\,.
\end{array}\right.
\end{equation}
 The parameter $\rho\in(0,1)$ is to be chosen
later, and $\chi\in C_0^\infty(\R)$ is a standard cut-off function
satisfying,
$$0\leq\chi\leq 1~{\rm in}~\R\,,\quad \chi=1~{\rm in}~
\big{(}-\frac12,\frac12\big{)}\,,\quad{\rm and}~{\rm
supp}\,\chi\subset[-1,1]\,.
$$
The function $\varphi_{\rho,\varepsilon}$ is clearly in $H^1$, since
$\phi_\ell$ vanishes for $s=\pm \ell$. Using the coordinate
transformation \eqref{eq:19}, we get from
$\varphi_{\rho,\varepsilon}$ a test function in $\Omega$\,,
\begin{equation}\label{eq-hc2-tf:bnd'}
\psi_{\rho,\varepsilon}^{\rm bnd}(x)=\left(\mathbf
1_{U_1}(x)\,e^{-i\kappa H\chi_1(x)}+\mathbf 1_{U_2}(x)
\,e^{-i\kappa H\chi_2(x)}\right)\varphi_{\rho,\varepsilon}(\Phi^{-1}(x))\,,\quad
\forall~x\in\Omega(t_0)\,,
\end{equation}
and extended by $0$ on $\Omega\setminus\Omega(t_0)$. Here the gauges
$\chi_1$ and $\chi_2$ are introduced in Lemma~\ref{eq-hc2-gaugeT}
above.

\begin{lem}\label{lem-hc2-ub:bnd}
Given two positive constants $m$ and $M$ with $m<M$, there exist
positive constants $\varepsilon_0$ and $C$ such that if the magnetic
field satisfies $m\kappa\leq H\leq M\kappa$, then for all
$\varepsilon\in(0,\varepsilon_0]$, $\rho\in(0,1)$ and
$\delta\in(0,1)$, the following estimate holds,
$$\mathcal E(\psi_{\rho,\varepsilon}^{\rm bnd},\FF;\Omega)\leq
2d(\ell) +C\left(\varepsilon^{4-5\rho}+\varepsilon^{\rho/2}
+\varepsilon^{1-\rho}+\left|\frac{\kappa}{H}-1\right|\right)\ell\,.$$
Here $\ell=|\partial\Omega|/(4\varepsilon)$, $d(\ell)$ is introduced
in (\ref{eq-hc2-d(ell)}), $\mathcal E$, $\FF$ and
$\psi_{\rho,\varepsilon}^{\rm bnd}$ are the energy functional, the
vector field and the test configuration introduced in
(\ref{eq-hc2-GL}), (\ref{eq-hc2-F}) and (\ref{eq-hc2-tf:bnd'})
respectively.
\end{lem}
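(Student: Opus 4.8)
The plan is to compute $\mathcal E(\psi_{\rho,\varepsilon}^{\rm bnd},\FF;\Omega)$ by passing to the boundary coordinates $(s,t)$ of Proposition~\ref{hc2-App:transf}, recognizing that after rescaling $s=\varepsilon\sigma$, $t=\varepsilon\tau$ the leading term reproduces the reduced energy $\mathcal E_\ell$ whose minimum is $d(\ell)$, and then bounding every error coming from (i) the Jacobian factor $g(s,t)=1-tk(s)$, (ii) the cutoff $\chi(t/\varepsilon^\rho)$, and (iii) the replacement of the true magnetic field $\kappa H\FF$ by the model field $\mathbf E$. Because the test function is supported in $\Omega(t_0)$ and vanishes for $s=\pm\ell$ after rescaling, only the two pieces $U_1,U_2$ contribute, and by the symmetry built into \eqref{eq-hc2-tf:bnd} each contributes $d(\ell)$, giving the main term $2d(\ell)$.

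First I would substitute \eqref{eq-hc2-tf:bnd'} into the energy and use Lemma~\ref{eq-hc2-gaugeT}: the gauge factors $e^{-i\kappa H\chi_j}$ turn $\kappa H\FF$ into $\kappa H\,U_\Phi\FF^j = \kappa H(-t+k(s)t^2/2,0)$, so that in the rescaled variables $(\sigma,\tau)=(s/\varepsilon,t/\varepsilon)$ the phase $\kappa H\varepsilon\cdot(-\tau) = -\tau/\varepsilon\cdot\varepsilon^2\kappa H$ matches $\mathbf E(\sigma,\tau)=(-\tau,0)$ exactly up to the curvature correction $k(s)t^2/2$, whose rescaled size is $O(\varepsilon)$ relative to the leading term. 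Next I would expand the Jacobian $g(s,t)=1-\varepsilon\tau k$ and its inverse in the quadratic form \eqref{App:qfstco}; writing $g^{-2}=1+O(\varepsilon\tau)$ and $g=1-O(\varepsilon\tau)$ produces error integrals weighted by $\tau$ against $|(\nabla-i\mathbf E)\phi_\ell|^2$, $|\phi_\ell|^2$, $|\phi_\ell|^4$, which are controlled by the decay estimate in Theorem~\ref{thm-hc2-Pa02}(2). That decay bound, $\int_{U_\ell}\tau^2(\cdots)\le M\ell$, is precisely what keeps these weighted errors $O(\varepsilon)\ell$-sized and yields the term $\varepsilon^{1-\rho}\ell$ (and $\varepsilon^{\rho/2}\ell$ from the region where $\chi'\ne0$).

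The error terms arise as follows: the factor $\varepsilon^{4-5\rho}\ell$ comes from the quartic term in the energy once one accounts for the normalization $\mathcal E_\ell$ uses coefficient $1$ while $\mathcal E$ carries $\kappa^2/2$ and the rescaling of $dx=\varepsilon^2g\,d\sigma d\tau$; the $\varepsilon^{\rho/2}\ell$ term is the contribution of $\chi'(t/\varepsilon^\rho)$ and $\chi''$ on the annular region $t\sim\varepsilon^\rho$, estimated again via the $\tau$-weighted decay; and $|\kappa/H-1|\,\ell$ measures the mismatch between the physical normalization of the $|\psi|^2$ and $|\psi|^4$ coefficients in \eqref{eq-hc2-GL} (scaled by $\kappa^2$) and the unit coefficients in \eqref{eq-hc2-redGL} once the factor $\varepsilon^2\kappa H=\kappa H/(\kappa H)=1$ is used, so that the residual discrepancy is exactly of order $|\kappa H\varepsilon^2-1|$ combined with $|\kappa/H-1|$. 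Finally I would note the magnetic-field term $(\kappa H)^2|\curl(\FF-\FF)|^2=0$ since we use the exact potential $\FF$, so that piece drops out entirely.

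\emph{The main obstacle} I anticipate is bookkeeping the cutoff region $t\sim\varepsilon^\rho$ together with the curvature-induced cross terms in a way that simultaneously yields \textbf{all four} error exponents sharply; in particular the quartic error $\varepsilon^{4-5\rho}$ requires tracking how the $L^4$-mass of $\phi_\ell$ interacts with the truncation scale, and one must verify that each weighted integral genuinely converges by invoking Theorem~\ref{thm-hc2-Pa02}(2) rather than merely the $L^\infty$ bound $|\phi_\ell|\le 1$. The delicate point is that $\ell=|\partial\Omega|/(4\varepsilon)\to\infty$, so every estimate must be uniform in $\ell$ with the stated linear-in-$\ell$ control, which is exactly the role played by the decay statement; once that is in hand the proof reduces to collecting the error contributions and reading off their orders.
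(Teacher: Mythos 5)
Your skeleton coincides with the paper's: pass to the boundary coordinates of Proposition~\ref{hc2-App:transf}, use the gauges of Lemma~\ref{eq-hc2-gaugeT} so the energy splits into two identical pieces over $U_1$ and $U_2$, rescale by $\varepsilon$ (so that $\varepsilon^2\kappa H=1$ turns the coefficients of $|u|^2$ and $|u|^4$ into $\kappa/H$), recognize $\mathcal E_\ell(\phi_\ell)=d(\ell)$ as the main term of each piece, control the curvature correction, the Jacobian and the truncation tails by Theorem~\ref{thm-hc2-Pa02}(2), and note that the field term vanishes because $\curl(\FF-\FF)=0$. All of this is exactly what the paper does.

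The gap is in the error bookkeeping, which is the actual content of the lemma: two of your four attributions are wrong, and the device that really produces those exponents is absent from your plan. The term $\varepsilon^{4-5\rho}$ does \emph{not} come from the quartic term: since $|u|\leq|\phi_\ell|$ and the quartic coefficient is positive, the quartic term only costs $C\left|\frac{\kappa}{H}-1\right|\ell$ (from $\kappa^2\varepsilon^2=\kappa/H\neq1$) plus a Jacobian error $O(\varepsilon^\rho\ell)$; your auxiliary quantity $|\kappa H\varepsilon^2-1|$ is identically zero by the definition \eqref{eq-hc2-epsilon}. In the paper, $\varepsilon^{4-5\rho}$ arises from the cutoff derivative: writing $|\partial_\tau(\chi_\varepsilon\phi_\ell)|^2$ and splitting the cross term by a Young inequality with a free parameter $\delta$ gives $(1+\delta)\int|\partial_\tau\phi_\ell|^2+C\delta^{-1}\varepsilon^{2-2\rho}\int_{\tau\geq\frac12\varepsilon^{\rho-1}}|\phi_\ell|^2$, and the decay theorem bounds the tail by $C\varepsilon^{2-2\rho}|\ln\varepsilon|\ell$, hence an error $C\delta^{-1}\varepsilon^{4-4\rho}|\ln\varepsilon|\ell$ (similarly the curvature term costs $C\delta^{-1}\varepsilon^{2\rho}|\ln\varepsilon|\ell$). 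Likewise $\varepsilon^{\rho/2}$ is \emph{not} ``the contribution of the region where $\chi'\neq0$'': it is the price of the factors $(1+\delta)$ multiplying the order-$\ell$ gradient terms, i.e. an error $C\delta\ell$. Only the final optimization $\delta=\varepsilon^{\rho/2}$, absorbing $|\ln\varepsilon|$ into a further $\varepsilon^{-\rho/2}$, converts these into $\varepsilon^{\rho/2}\ell$ and $\varepsilon^{4-5\rho}\ell$ — this also explains why the lemma quantifies over a $\delta$ that never appears in the estimate. (Incidentally, no $\chi''$ can occur: this is a quadratic-form evaluation of a test function, not an IMS operator localization; and $\varepsilon^{1-\rho}$ is the $L^2$-tail of $\phi_\ell$ entering through the negative $-|u|^2$ term, not a Jacobian error.) Without the parameterized splitting, plain Cauchy--Schwarz on the cross terms gives errors of size $\varepsilon^{2-2\rho}|\ln\varepsilon|^{1/2}\ell$, which fit inside the stated budget only after a case analysis ($\rho<4/5$ use $\varepsilon^{\rho/2}$, $\rho>2/3$ use $\varepsilon^{4-5\rho}$) that your proposal does not contain; as written, your plan would not deliver the stated right-hand side.
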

\begin{proof}
Let us introduce the vector field
$$A(s,t)=(A_1(s,t),A_2(s,t))=
\left(-t+k(s)\frac{t^2}2,0\right)\,,\quad\forall~(s,t)\in\R^2\,,$$
together with the two energies,
$$A_\varepsilon=\int_{-\frac{|\partial\Omega|}2}^0\int_0^{t_0}\Big(
[g(s,t)]^{-2}|(\partial_s-i\varepsilon^{-2} A_1)
\phi_{\rho,\varepsilon}|^2
+|\partial_t\phi_{\rho,\varepsilon}|^2-\kappa^2|\phi_{\rho,\varepsilon}|^2
+\frac{\kappa^2}2|\phi_{\rho,\varepsilon}|^4\Big)g(s,t)\,dtds\,,
$$
and
$$
B_\varepsilon=\int_{0}^{-\frac{|\partial\Omega|}2}\int_0^{t_0}\Big(
[g(s,t)]^{-2}|(\partial_s-i\varepsilon^{-2}A_1)
\phi_{\rho,\varepsilon}|^2
+|\partial_t\phi_{\rho,\varepsilon}|^2-\kappa^2|\phi_{\rho,\varepsilon}|^2
+\frac{\kappa^2}2|\phi_{\rho,\varepsilon}|^4 \Big)g(s,t)\,dtds\,.
$$
Here $g(s,t)=1-tk(s)$  and $t_0$ a sufficiently small constant as
previously. Proposition~\ref{hc2-App:transf},
Lemma~\ref{eq-hc2-gaugeT} and the definition of the function
$\psi^{\rm bnd}_{\rho,\varepsilon}$ all together give,
$$\mathcal E(\psi^{\rm bnd}_{\rho,\varepsilon},\mathbf
F;\Omega)=A_\varepsilon+B_\varepsilon\,.
$$
We claim that
$$A_\varepsilon\leq d(\ell)+C\left(\varepsilon^{4-5\rho}+\varepsilon^{\rho/2}
+\varepsilon^{1-\rho}+\left|\frac{\kappa}{H}-1\right|\right)\ell\,.$$
and $$B_\varepsilon\leq
d(\ell)+C\left(\varepsilon^{4-5\rho}+\varepsilon^{\rho/2}
+\varepsilon^{1-\rho}+\left|\frac{\kappa}{H}-1\right|\right)\ell\,.$$
Let us prove the upper bound for $A_\varepsilon$. The upper bound
for $B_\varepsilon$ follows in the same way as for $A_\varepsilon$.

Define the rescaled variables $\sigma=\frac{s}{\varepsilon}+\ell$,
$\tau=\frac{t}{\varepsilon}$ and the rescaled function
$$u(\sigma,\tau)=\phi_{\rho,\varepsilon}(s,t)\,,\quad\forall~(\sigma,\tau)\in(-\ell,\ell)\times(0,\varepsilon^{\rho-1})\,,
$$ which is extended by zero for $(\sigma,\tau)\in(-\ell,\ell)\times[\varepsilon^{\rho-1},\infty)$.

In the new scale, the expression for $A_\varepsilon$ becomes,
\begin{equation}\label{a-epsilon}
A_\varepsilon=\int_{-\ell}^{\ell}\int_0^{\varepsilon^{\rho-1}}
\left(
[g_\varepsilon(\sigma,\tau)]^{-2}|(\partial_\sigma+i\tau-i\varepsilon
a_\varepsilon ) u|^2 +|\partial_\tau u|^2-\frac\kappa{H}|u|^2
+\frac{\kappa}{2H}|u|^4 \right)g_\varepsilon(\sigma,\tau)\,d\tau
d\sigma\,.
\end{equation}
Here
$$g_\varepsilon(\sigma,\tau)=1-\varepsilon k_\varepsilon(\sigma)
\tau\,,\quad
k_\varepsilon(\sigma)=k(\varepsilon(\sigma-\ell))\,,\quad
a_\varepsilon(\sigma,\tau)=k_\varepsilon(\sigma)\frac{\tau^2}2\,.$$
There exists a positive constant $C$ such that, for $\varepsilon$
sufficiently small, the following estimate holds,
\begin{equation}\label{eq-hc2-ub-E1}
\frac12<1-C\varepsilon^\rho\leq g_\varepsilon(\sigma,\tau)\leq
1+C\varepsilon^\rho\,,\quad\forall~(\sigma,\tau)\in(-\ell,\ell)\times(0,\varepsilon^{\rho-1})\,.\end{equation}
Replacing $C$ by a larger constant, it holds for $\delta\in(0,1)$,
\begin{align*}
\int_{-\ell}^\ell\int_0^{\varepsilon^{\rho-1}}|\partial_\tau
u|^2d\tau
d\sigma&=\int_{-\ell}^\ell\int_0^{\varepsilon^{\rho-1}}|\chi_\varepsilon(\tau)\partial_\tau\phi_\ell+\phi_\ell\partial_\tau
\chi_\varepsilon(\tau)|^2d\tau d\sigma\\
&\leq
(1+\delta)\int_{-\ell}^\ell\int_0^{\infty}|\partial_\tau\phi_\ell|^2d\tau
d\sigma+C\delta^{-1}\varepsilon^{2-2\rho}
\int_{-\ell}^\ell\int_{\frac12\varepsilon^{\rho-1}}^{\varepsilon^{\rho-1}}|\phi_\ell|^2d\tau
d\sigma\,.\end{align*} Here
$\chi_\varepsilon(\tau)=\chi(\frac{\varepsilon
\tau}{\varepsilon^{\rho}})$ and $\phi_\ell$ a minimizer of the
functional $\mathcal E_\ell$ in \eqref{eq-hc2-redGL}.

Using the decay of $\phi_\ell$ in Theorem~\ref{thm-hc2-Pa02}, we
get, \begin{equation}\label{eq-hc2-ub-E2}
\int_{-\ell}^\ell\int_0^{\varepsilon^{\rho-1}}|\partial_\tau
u|^2d\tau
d\sigma\leq(1+\delta)\int_{-\ell}^\ell\int_0^{\infty}|\partial_\tau\phi_\ell|^2d\tau
d\sigma+C\delta^{-1}\varepsilon^{4-4\rho}|\ln\varepsilon|\ell\,.\end{equation}
In a similar fashion we get,
\begin{align}\label{eq-hc2-ub-E3}
&\int_{-\ell}^\ell\int_0^{\varepsilon^{\rho-1}}|(\partial_\sigma+i\tau-i\varepsilon
a_\varepsilon ) u|^2d\tau d\sigma\nonumber\\
&\hskip2cm\leq
\int_{-\ell}^\ell\int_0^{\varepsilon^{\rho-1}}\left((1+\delta)
|(\partial_\sigma+i\tau)\phi_\ell|^2+C\delta^{-1}\varepsilon^2|
a_\varepsilon  u|^2\right)d\tau d\sigma\nonumber\\
&\hskip2cm\leq(1+\delta)\int_{-\ell}^\ell\int_0^{\varepsilon^{\rho-1}}
|(\partial_\sigma+i\tau)\phi_\ell|^2d\tau
d\sigma+C\delta^{-1}\varepsilon^{2\rho}|\ln\varepsilon|\ell\,.
\end{align}
Writing,
\begin{align*}
\frac{\kappa}{H}\int_{-\ell}^\ell\int_0^{\varepsilon^{\rho-1}}|u|^2d\tau
d\sigma&=\int_{-\ell}^\ell\int_0^{\infty}|\phi_\ell|^2d\tau
d\sigma+\int_{-\ell}^\ell\int_{\frac12\varepsilon^{\rho-1}}^{\infty}(|\chi_\varepsilon|^2-1)|\phi_\ell|^2d\tau
d\sigma\\
&+\left(\frac{\kappa}{H}-1\right)\int_{-\ell}^\ell\int_0^{\varepsilon^{\rho-1}}|u|^2d\tau
d\sigma\,,
\end{align*}
we get by the decay of $\phi_\ell$,
\begin{equation}\label{eq-hc2-ub-E4}
\frac{\kappa}{H}\int_{-\ell}^\ell\int_0^{\varepsilon^{\rho-1}}|u|^2d\tau
d\sigma\geq\int_{-\ell}^\ell\int_0^{\infty}|\phi_\ell|^2d\tau
d\sigma-C\varepsilon^{1-\rho}\ell-C\left|\frac\kappa{H}-1\right|\ell\,.
\end{equation}
Similarly, we have the upper bound,
\begin{equation}\label{eq-hc2-ub-E5}
\frac{\kappa}{H}\int_{-\ell}^\ell\int_0^{\varepsilon^{\rho-1}}|u|^4d\tau
d\sigma\leq\int_{-\ell}^\ell\int_0^{\infty}|\phi_\ell|^2d\tau
d\sigma+C\left|\frac\kappa{H}-1\right|\ell\,.
\end{equation}
Inserting the estimates \eqref{eq-hc2-ub-E1}-\eqref{eq-hc2-ub-E5}
into \eqref{a-epsilon} we get,
$$A_\varepsilon\leq \mathcal E_\ell(\phi_\ell)
+C\left(\delta^{-1}|\ln\varepsilon|(\varepsilon^{2\rho}+\varepsilon^{4-4\rho})+\delta+\varepsilon^\rho
+\varepsilon^{1-\rho}+\left|\frac{\kappa}{H}-1\right|\right)\ell\,.$$
Here $\mathcal E_\ell$ is the functional in \eqref{eq-hc2-redGL}.
Remembering that $\phi_\ell$ is a minimizer of $\mathcal E_\ell$,
the definition of $d(\ell)$, and choosing
$\delta=\varepsilon^{\rho/2}$, we get the desired upper bound on
$A_\varepsilon$.
\end{proof}

\subsection{Bulk configuration}
In this section we construct a test configuration $(\psi^{\rm
int},\Ab)$ using the limiting problem (\ref{eq-hc2-eneAb}).  Let us
take $R>1$ (that will be chosen as a function of $\varepsilon$ such
that $\varepsilon R\to0$ as $\varepsilon\to0$).
Then, thanks to Proposition~\ref{prop-hc2-exmin}, the functional
$F_{R}$ in (\ref{eq-hc2-eneAb}) admits a minimizer $f_{R}$ in
$L_{R}$, and we denote by,
$$c(R)=F_{R}(f_{R}).$$
Recall the magnetic potential $\Ab_0$ introduced in
(\ref{eq-hc2-mpA0}). The configuration $(f_{R}, \Ab_0)$,
defined initially on the unit lattice $K_{R}$, can be defined by
periodicity in all $\R^2$.

Let us define now the following test configuration in $\Omega$,
\begin{equation}\label{eq-hc2-tf:int}
\psi_{\rho,R,\varepsilon}^{\rm
int}(x)=\kappa^{-1/4}\,h\left(R^{\rho}\frac{\dist(x,\partial\Omega)}{2}\right)\,
f_{R}\left(\frac{x}{\varepsilon}\right)\,,\quad\forall~x\in\Omega\,.
\end{equation}
Here $\rho>0$ is to be fixed later, and $h\in C^\infty(\R)$ is a
cut-off function such that
$$0\leq h\leq 1{\rm ~in~}\R\,,\quad h=1{~\rm
in~}[1,\infty)\,,\quad{\rm supp}\,h\subset(0,\infty)\,.$$ Notice
that we can cover $\Omega$ by $N_\varepsilon$ squares of the lattice
$\varepsilon R(\mathbb Z\oplus i\mathbb Z)$, where $N_\varepsilon$
satisfies,
\begin{equation}\label{eq-hc2-N(ep)}
\lim_{\varepsilon\to 0}\left[N_\varepsilon \times(\varepsilon^2R^2
)\right]=|\Omega|\,.
\end{equation}

The next lemma gives an estimate of the energy of the test
configuration \eqref{eq-hc2-tf:int}.

\begin{lem}\label{lem-hc2-up:int}
Let $\rho\in(0,1)$ be a given constant.  Assume that $R=R(\varepsilon)$
is a function satisfying  $|K_{R}|\in2\pi\mathbb N$ and $1\ll R\ll
\varepsilon^{-\frac1{1+\rho}}$ as $\varepsilon\to0$. There exist positive
constants $\varepsilon_0$, $C$, and a function
$$\R_+\ni t\mapsto\delta(t)\in\R_+\,,\quad \lim_{t\to\infty}\delta(t)=0\,,$$ such that, for all
$\varepsilon\in(0,\varepsilon_0]$,  $\mu>0$, and if the
magnetic field satisfies,
$$H=\kappa-\mu\sqrt{\kappa}\,,$$ then we have
the following estimate,
$$\mathcal E\left(\mu^{1/2}\psi_{\rho,R,\varepsilon}^{\rm
int},\Ab_0 \right)\leq \mu^2\kappa|\Omega|\, c(R)
+C\mu\left(\kappa^{-1/2}R^\rho+\kappa^{3/2}R^{-\rho}\right)
+\mu^2\kappa\delta(\kappa)\,.$$ Here $c(R)$ is introduced in
\eqref{eq-hc2-c(r,t)}.
\end{lem}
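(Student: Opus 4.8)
The plan is to substitute the configuration $\left(\mu^{1/2}\psi_{\rho,R,\varepsilon}^{\rm int},\Ab_0\right)$ directly into $\mathcal E$ and to exploit three structural facts: that $\Ab_0$ carries exactly the background field, that $f_R$ lies in the lowest Landau level $L_R$, and that $|f_R|$, $|(\nabla-i\Ab_0)f_R|$ and $|f_R|^4$ are $K_R$-periodic. First, since $\curl\Ab_0=1=\curl\FF$, the field energy $(\kappa H)^2\int_\Omega|\curl(\Ab_0-\FF)|^2\,\md x$ vanishes identically, so only the kinetic, quadratic and quartic terms of \eqref{eq-hc2-GL} survive. After pulling out the scalar factors (a $\mu$ from the kinetic and quadratic terms, a $\mu^2$ from the quartic, together with the $\kappa^{-1/2}$ and $\kappa^{-1}$ coming from the prefactor $\kappa^{-1/4}$), I would pass to the rescaled variable $y=x/\varepsilon$ with $\varepsilon=(\kappa H)^{-1/2}$. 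The degree-one homogeneity of $\Ab_0$ gives $\kappa H\Ab_0(x)=\varepsilon^{-1}\Ab_0(y)$, so the magnetic gradient transforms cleanly: if $g(x)=f_R(x/\varepsilon)$ then $(\nabla-i\kappa H\Ab_0)g(x)=\varepsilon^{-1}\big[(\nabla-i\Ab_0)f_R\big](x/\varepsilon)$. This scaling identity is the computational heart of the estimate.

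Next, on the region $\{h=1\}$ I would tile $\Omega/\varepsilon$ by $N_\varepsilon$ translates of the fundamental cell $K_R$ and use the $K_R$-periodicity of the densities above to replace each $\int_{\Omega/\varepsilon}(\cdots)\,dy$ by $N_\varepsilon\int_{K_R}(\cdots)\,dy$. The decisive cancellation is that $f_R\in L_R$ forces $Q_R(f_R)=\|f_R\|_{L^2(K_R)}^2$: the kinetic term $\mu\kappa^{-1/2}N_\varepsilon\|f_R\|_{L^2(K_R)}^2$ and the quadratic term $-\mu\kappa^{1/2}H^{-1}N_\varepsilon\|f_R\|_{L^2(K_R)}^2$ merge, via $\kappa^{-1/2}(1-\tfrac{\kappa}{H})=-\mu/H$ (which follows from $H-\kappa=-\mu\sqrt{\kappa}$), into $-\tfrac{\mu^2}{H}N_\varepsilon\|f_R\|_{L^2(K_R)}^2$. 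Adding the quartic term and using $\kappa\varepsilon^2=H^{-1}$ together with $N_\varepsilon\varepsilon^2R^2\to|\Omega|$ from \eqref{eq-hc2-N(ep)}, the bulk contribution collapses to $\tfrac{\mu^2}{H}N_\varepsilon|K_R|\,F_R(f_R)=\mu^2\kappa|\Omega|\,c(R)\,(1+o(1))$, the advertised leading term. The relative $o(1)$, the defect from the $O\big(|\partial\Omega|/(\varepsilon R)\big)$ cells straddling $\partial(\Omega/\varepsilon)$, and the normalizations $\int_{K_R}|f_R|^2\,dx,\ \int_{K_R}|f_R|^4\,dx\le C|K_R|$ from Proposition~\ref{prop-hc2-exmin} are all absorbed into $\mu^2\kappa\delta(\kappa)$; here the hypothesis $R^{1+\rho}\ll\varepsilon^{-1}$ is precisely what makes the straddling-cell error (of size $\sim\mu\kappa^{1/2}R$) negligible against $\mu\kappa^{3/2}R^{-\rho}$.

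Finally, I would estimate the cutoff layer $\{0<\dist(x,\partial\Omega)<2R^{-\rho}\}$, of area $O(R^{-\rho})$. There the quadratic term has a favourable sign, so dropping $-\kappa^2|\mu^{1/2}\psi^{\rm int}|^2\le 0$ only raises the energy and leaves the kinetic and quartic parts to bound. The term $|\nabla h|^2|f_R(\cdot/\varepsilon)|^2$ carries a factor $R^{2\rho}$ over a layer of width $R^{-\rho}$, yielding the error $\mu\kappa^{-1/2}R^\rho$; the remaining (uncancelled) kinetic density $h^2\varepsilon^{-2}|(\nabla-i\Ab_0)f_R|^2$, integrated over the $O\big(R^{-\rho}/(\varepsilon^2R^2)\big)$ cells of the layer, produces the error $\mu\kappa^{3/2}R^{-\rho}$, and the cross term is dominated by these two via Cauchy--Schwarz. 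The quartic contribution of the layer, of order $\mu^2\kappa R^{-\rho}$, goes into $\delta(\kappa)$.

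The step I expect to be the main obstacle is the tiling/periodization bookkeeping: cleanly reducing the integrals over the scaled domain $\Omega/\varepsilon$ and its boundary layer to integrals over $K_R$ while preserving the exact $L_R$-cancellation, and verifying that the straddling-cell and transition-region defects all fall within the stated budget $C\mu(\kappa^{-1/2}R^\rho+\kappa^{3/2}R^{-\rho})+\mu^2\kappa\delta(\kappa)$. The constraint $1\ll R\ll\varepsilon^{-1/(1+\rho)}$ enters exactly here, guaranteeing that the cell size $\varepsilon R$ is small compared with the layer width $R^{-\rho}$, so that even the boundary layer can be tiled with controlled error.
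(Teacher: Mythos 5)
Your proposal is correct and follows essentially the same route as the paper's proof: the field term drops since $\curl\Ab_0=1$, the lowest-Landau-level property of $f_R$ cancels the kinetic term against the quadratic one (the paper implements this via an IMS localization identity and the pointwise eigenvalue equation $-(\nabla-i\varepsilon^{-2}\Ab_0)^2\widetilde f_R=\kappa H\widetilde f_R$, you via the form identity $Q_R(f_R)=\|f_R\|^2_{L^2(K_R)}$ cell by cell plus Cauchy--Schwarz on the cross term), and the tiling/periodicity argument with Proposition~\ref{prop-hc2-exmin} reduces the bulk to $\mu^2\kappa|\Omega|\,c(R)$ with exactly the stated error budget, including the correct sources for the $\mu\kappa^{-1/2}R^\rho$ and $\mu\kappa^{3/2}R^{-\rho}$ terms. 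The remaining differences (global treatment with $\underline{\mathcal I}$, $\overline{\mathcal I}$ versus your bulk/layer split) are purely organizational.
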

\begin{proof}
Let us denote by $\widetilde f_{R}(x)=f_{R}(x/\varepsilon)$ and
$h_R(x)=h(R^\rho \frac{{\rm dist}(x,\partial\Omega)}2)$. Notice that we have the following
localization formula,
\begin{multline}\label{eq-hc2-ims}
\int_{\Omega}|(\nabla-i\varepsilon^{-2}\Ab_0
)\psi_{\rho,R,\varepsilon}^{\rm int}|^2\,dx=\kappa^{-1/2}\,\Re
\langle -h^2_R(\nabla-i\varepsilon^{-2}\Ab_0 )^2\widetilde f_{R}\,,\,\widetilde
f_{R}\rangle_{L^2(\Omega)}\\+\kappa^{-1/2}\int_{\Omega}|\,|\nabla
h_R|^2\,\widetilde f_{R}|^2\,dx\,.
\end{multline}
From the definition of $\widetilde f_{R}$ and $\Ab_0$, and  a simple
scaling, we may check that
$$-(\nabla-i\varepsilon^{-2}\Ab_0 )^2\widetilde
f_{R}=\varepsilon^{-2} \widetilde f_{R}=\kappa H\widetilde f_{R}\,.
$$
Therefore, \eqref{eq-hc2-ims} becomes,
\begin{align}
&\hskip-1cm\kappa^{1/2}\int_{\Omega}|(\nabla-i\varepsilon^{-2}\Ab_0)\psi_{\rho,R,\varepsilon}^{\rm int}|^2\,dx\nonumber \\
&= \kappa H \int_\Omega|h_R \widetilde
f_{R}|^2\,dx+\int_\Omega|\nabla h_R|^2\,|\widetilde
f_{R}|^2\,dx\nonumber\\
&\leq \kappa H\int_\Omega|\widetilde
f_{R}|^2\,dx+2\|h'\|_{L^\infty(\R)}^2R^{2\rho}\int_{\{2R^{-\rho}\leq\dist(x,\partial\Omega)\leq
4R^{-\rho}\}}|\widetilde f_{R}|^2\,dx\,.\label{eq-hc2-ims'}
\end{align}
Let us estimate the last term in \eqref{eq-hc2-ims'}, which is in
fact a remainder term. Recall that $\widetilde f_{R}$ is periodic
with respect to the lattice $\varepsilon R(\mathbb Z\oplus i\mathbb
Z)$. Using the condition $R\ll\varepsilon^{-\frac1{1+\rho}}$, we
cover $\{x\in\Omega~:~ 2R^{-\rho}\leq\dist(x,\partial\Omega)\leq
4R^{-\rho}\}$ by $N'_\varepsilon$ squares of the lattice
$\varepsilon R(\mathbb Z\oplus i\mathbb Z)$, with
$N_{\varepsilon}'\leq C\frac{R^{-\rho}}{\varepsilon^2R^2 }$.
Therefore, 
$$
\int_{\{2R^{-\rho}\leq\dist(x,\partial\Omega)\leq
4R^{-\rho}\}}|\widetilde f_{R}|^2\,dx\leq
C\frac{R^{-\rho}}{\varepsilon^2R^2 }\int_{K_{\varepsilon
R}}|\widetilde f_{R}(x)|^2\,dx=CR^{-2-\rho}\int_{K_{
R}}|f_{R}(x)|^2\,dx\,.
$$
Invoking the estimate of
Proposition~\ref{prop-hc2-exmin}, we get,
\begin{equation}\label{eq-hc2-ims:error}
\int_{\{2R^{-\rho}\leq\dist(x,\partial\Omega)\leq
4R^{-\rho}\}}|\widetilde f_{R}|^2\,dx\leq C R^{-\rho}\,.
\end{equation}
Next we estimate $\|h_R\widetilde f_{R}\|_{L^2(\Omega)}$ from below.
Notice that, since $0\leq h_R\leq 1$ and $1-h_R$ is supported in a
thin neighborhood near the boundary, we have,
\begin{align*}
\int_\Omega|h_R(x)\widetilde f_{R}(x)|^2\,dx&=
\int_\Omega|\widetilde
f_{R}(x)|^2\,dx-\int_\Omega(1-h_R^2(x))|\widetilde
f_{R}(x)|^2\,dx\\
&\geq \int_\Omega|\widetilde
f_{R}(x)|^2\,dx-\int_{\{x\in\Omega~:~\dist(x,\partial\Omega)\leq
2R^{-\rho}\}}|\widetilde f_{R}(x)|^2\,dx\,.
\end{align*}
Similarly as for \eqref{eq-hc2-ims:error}, the estimate of
Proposition~\ref{prop-hc2-exmin} gives
$$\int_{\{x\in\Omega~:~\dist(x,\partial\Omega)\leq
2R^{-\rho}\}}|\widetilde f_{R}(x)|^2\,dx\leq C R^{-\rho}\,,$$ and
therefore,
\begin{equation}\label{eq-hc2-ims:error'}
\int_\Omega|h_R(x)\widetilde f_{R}(x)|^2\,dx\geq
\int_\Omega|\widetilde f_{R}(x)|^2\,dx -C R^{-\rho}\,.
\end{equation}
Collecting \eqref{eq-hc2-ims'},
\eqref{eq-hc2-ims:error} and \eqref{eq-hc2-ims:error'}, and
remembering that $\curl \Ab_0 =1$ by construction, we get finally,
\begin{align}\label{eq-hc2-ims:collection}
\mathcal E(\mu^{1/2}\psi_{\rho,R,\varepsilon}^{\rm int},\Ab_0) &\leq
\sqrt{\kappa}\,( H-\kappa)\mu\int_{\Omega}|\widetilde
f_{R}|^2\,dx+\frac{\kappa \mu^2}2\int_\Omega|\widetilde
f_{R}|^4\,dx+C\mu(\kappa^{-1/2}R^{\rho}+\kappa^{3/2}R^{-\rho})
\nonumber\\
&=\mu^2\kappa\int_\Omega\left(\frac12|\widetilde
f_{R}|^4-|\widetilde f_{R}|^2\right)\,dx+
C\mu\left(\kappa^{-1/2}R^\rho+\kappa^{3/2}R^{-\rho}\right)\,.
\end{align}
We have to estimate the integral in (\ref{eq-hc2-ims:collection}).
Toward that end, we define two sets $\underline{\mathcal I}$ and
$\overline{\mathcal I}$ as follows. A square $K$ of the lattice
$\varepsilon R(\mathbb Z\oplus i\mathbb Z)$ belongs to
$\underline{\mathcal I}$ if $K\subset\Omega$; if
$K\cap\Omega\not=\emptyset$ then $K$ belongs to $\overline{\mathcal
  I}$. Let us introduce the two integers,
$$
\underline N_\varepsilon={\rm Card}( \underline{\mathcal I})\,,\quad
\overline
N_\varepsilon={\rm Card}(\overline{\mathcal  I})\,.
$$
The formula in \eqref{eq-hc2-N(ep)} still holds for both $\underline
N_\varepsilon$ and $\overline N_\varepsilon$. Furthermore, by
periodicity of $|\widetilde f_R|$, we get,
\begin{align*}
\int_\Omega\frac12|\widetilde f_{R}|^4-|\widetilde f_{R}|^2& \leq
\overline N_\varepsilon\frac12\int_{K_{\varepsilon R}}|\widetilde
f_{R}|^4-
\underline N_\varepsilon\int_{K_{\varepsilon R}}|\widetilde f_{R}|^2\\
&=\frac{|\Omega|}{|K_{R}|}\left(\frac{1+o(1)}2\int_{K_{R}}|f_{R}|^4
-(1+o(1))\int_{K_{R}}|f_{R}|^2\right)\\
&=|\Omega|c(R)+o(1)\quad{\rm as }~\varepsilon\to0\,.
\end{align*}
In the last step above we used the definition of $f_R$ and
Proposition~\ref{prop-hc2-exmin}.

Upon substitution in \eqref{eq-hc2-ims:collection}, we get,
\begin{align*}
&\hskip-1cm \mathcal E(\mu^{1/2}\psi_{\rho,R,\varepsilon}^{\rm
int},\Ab_0) \\
&\leq \mu^2\kappa\big{(}|\Omega|\,c(R) 1+o(1)\big{)}+
C\mu\left(\kappa^{-1/2}R^\rho+\kappa^{3/2}R^{-\rho}\right)\quad{\rm
as}~\varepsilon\to0\,,
\end{align*}
which is what we wanted to prove.
\end{proof}

\subsection{Proof of Theorem~\ref{thm-hc2-FK-ub}}
Let $\mu=\mu(\kappa)$ be given by
$H=\kappa-\mu(\kappa)\sqrt{\kappa}$. Let us define the following
test function,
$$\psi(x)=\psi_{\rho,\varepsilon}^{\rm
bnd}(x)+[\mu]^{1/2}_+e^{-i\kappa
H\varphi_0}\psi_{\rho,R,\varepsilon}^{\rm bnd}(x)\,,
$$
and evaluate the energy $\mathcal E(\psi,\FF;\Omega)$. Here
$\psi^{\rm bnd}_{\rho,\varepsilon}$ is introduced in
(\ref{eq-hc2-tf:bnd'}), $\psi_{\rho,R,\varepsilon}^{\rm int}$ in
(\ref{eq-hc2-tf:int}), $\FF$ the vector field introduced in
\eqref{eq-hc2-F}, and the function $\varphi_0$ is to be
specified later. Since $1\ll R\ll\varepsilon^{-\frac1{1+\rho}}$, we see that
$\psi_{\rho,\varepsilon}^{\rm bnd}$ and
$\psi_{\rho,R,\varepsilon}^{\rm int}$ have disjoint supports, hence
$$\mathcal E(\psi,\FF;\Omega)=\mathcal
E\left(\psi_{\rho,\varepsilon}^{\rm bnd},\FF;\Omega\right)+\mathcal
E\left([\mu]^{1/2}_+e^{-i\kappa
H\varphi_0}\psi_{\rho,R,\varepsilon}^{\rm
int},\FF;\Omega\right)\,.$$ We impose the condition
$\rho\in(0,\frac12)$. Then, thanks to Lemma~\ref{lem-hc2-ub:bnd}, we
get the following upper bound, 
$$\mathcal
E\left(\psi_{\rho,\varepsilon}^{\rm
bnd},\FF;\Omega\right)\leq-E_1|\partial\Omega|\kappa+o(\kappa).$$ So
we need to estimate the  term $E\left([\mu]^{1/2}_+e^{-i\kappa
H\varphi_0}\psi_{\rho,R,\varepsilon}^{\rm int},\FF;\Omega\right)$.

Recall the vector potential $\Ab_0$ introduced in
(\ref{eq-hc2-mpA0}). Notice that $\curl \FF=\curl \Ab_0=1$. So,
defining the function $\varphi_0$ by
\begin{equation}\label{eq-hc2-F=A}
-\Delta\varphi_0={\rm div}\,\Ab_0=0\quad{\rm in}~\Omega\,,\quad
\nu\cdot\nabla\varphi_0=\nu\cdot\Ab_0\quad{\rm
on}~\partial\Omega\,,\end{equation} we get
$\FF=\Ab_0-\nabla\varphi_0$. Therefore,
$$\mathcal E\left([\mu]^{1/2}_+e^{-i\varphi_0}\psi_{\rho,R,\varepsilon}^{\rm
int},\FF;\Omega\right)=\mathcal
E\left([\mu]^{1/2}_+\psi_{\rho,R,\varepsilon}^{\rm
int},\Ab_0;\Omega\right)\,.$$ Thanks to Lemma~\ref{lem-hc2-up:int}
and the definition of $E_2$ in \eqref{eq-hc2-E2}, we get,
$$\mathcal E\left([\mu]^{1/2}_+\psi_{\rho,R,\varepsilon}^{\rm
int},\Ab_0\right)\leq -[\mu]^2_+E_2|\Omega|\kappa+
C[\mu]_+\left(\kappa^{-1/2}R^\rho+\kappa^{3/2}R^{-\rho}\right)
+o([\mu]_+^2\kappa)\,.$$ Remembering the condition  $\rho\in
(0,\frac12)$ and taking $R=2\pi\sqrt{[ \kappa^\rho]+1}$, we get,
$$\mathcal E\left([\mu]^{1/2}_+\psi_{\rho,R,\varepsilon}^{\rm
int},\Ab_0\right)\leq
-E_2|\Omega|\,[\mu]^2_+\kappa+o(\max(1,[\mu]^2_+)\kappa)\,.$$ This
finishes the proof of Theorem~\ref{thm-hc2-FK-ub}.

\section{Lower bound of the energy}\label{hc2-sec-lb}

Let us pick a minimizer $(\psi,\Ab)$ of the Ginzburg-Landau energy
(\ref{eq-hc2-GL}). Our aim in this section is to give a lower bound
of the  energy $\mathcal E(\psi,\Ab;\Omega)$. We recall the convention that 
an open subset $D\subset\Omega$ is smooth if there exists an open and 
smooth set
$\widetilde D\subset\R^2$ such that $D=\widetilde D\cap\Omega$. For
all $a>0$, we
associate with a subset $D\subset\Omega$ the following subset of
$\Omega$,
\begin{equation}\label{eq-hc2-Da}
D_a=\{x\in\Omega~:~{\rm dist}(x,D)\leq a\}\,.\end{equation}
We will prove the following theorem.

\begin{thm}\label{thm-hc2-lb-loc}
Assume the magnetic field satisfies
$H=\kappa-\mu(\kappa)\sqrt{\kappa}$ with
$\displaystyle\lim_{\kappa\to\infty}
\frac{\mu(\kappa)}{\sqrt{\kappa}}=0$ and $\displaystyle\liminf_{\kappa\to\infty}\mu(\kappa)>-\infty$. Let $D\subset\Omega$ be  smooth,
open, and $\R_+\ni\kappa\mapsto a(\kappa)\in\R_+$ a function
satisfying $\displaystyle\lim_{\kappa\to\infty}a(\kappa)=0$. 

Then, for
any minimizer $(\psi,\Ab)$ of the energy $\mathcal E$ in
\eqref{eq-hc2-GL} and any continuous function $h\in C(\Omega)$  satisfying
$\|h\|_{L^\infty(\Omega)}\leq1$ and ${\rm supp}\,h\subset \overline D_a$, 
the following asymptotic lower bound
holds,
\begin{equation}\label{eq-hc2-lb-loc}
\mathcal E(h\psi,\Ab;\Omega)\geq-E_1|\overline
D\cap\partial\Omega|\kappa-E_2|D|\,[\mu(\kappa)]_+^2\kappa+o(\max(1,[\mu(\kappa)]_+^2)\kappa)\,,\quad{\rm
  as~}\kappa\to\infty\,.
\end{equation}
Here $E_1$ and $E_2$ are the constants 
introduced in (\ref{eq-hc2-E1}) and (\ref{eq-hc2-E2}) respectively.
\end{thm}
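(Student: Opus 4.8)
The plan is to establish a lower bound matching, up to the allowed remainder, the upper bound of Theorem~\ref{thm-hc2-FK-ub}, by localizing the energy of $h\psi$ into a \emph{boundary} part producing the term $-E_1|\overline D\cap\partial\Omega|\kappa$ and a \emph{bulk} part producing $-E_2|D|[\mu]_+^2\kappa$. First I would discard the nonnegative magnetic field energy $(\kappa H)^2\int|\curl(\Ab-\FF)|^2$ and replace the vector potential $\Ab$ by $\FF$: by Lemma~\ref{lem-hc2-FoHe} the variation of $\Ab-\FF$ over a region of cell-size is controlled by $\|\curl(\Ab-\FF)\|_{C^1(\Omega)}=O(\kappa^{-1})$, so after subtracting its local mean by a gauge the residual potential error is small, and the associated error in the energy is absorbed using the $L^2$-bound $\|\psi\|_{L^2(\Omega)}\le C\zeta(\kappa)$ of Corollary~\ref{cor-hc2:L2}. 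I would then introduce an IMS partition of unity subordinate to the splitting $\Omega=\Omega(t_0)\cup(\Omega\setminus\Omega(t_0))$, and inside the interior to the lattice cells described below; the localization error $\sum\||\nabla\chi_j|\,\psi\|_{L^2}^2$ is again estimated by Corollary~\ref{cor-hc2:L2}.

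For the boundary part, supported in $\Omega(t_0)\cap\overline D_a$, I would pass to the coordinates $(s,t)$ of Proposition~\ref{hc2-App:transf} and gauge $\FF$ into the explicit form of Lemma~\ref{eq-hc2-gaugeT}. After rescaling by $\varepsilon=(\kappa H)^{-1/2}$ as in \eqref{eq-hc2-epsilon}, the localized energy is, up to curvature errors coming from $g(s,t)=1-tk(s)$ and from the cut-offs, bounded below by the reduced energy $\mathcal E_\ell$ of \eqref{eq-hc2-redGL} on strips covering $\partial\Omega\cap\overline D$. Invoking the sharp estimate $|d(\ell)/(2\ell)+E_1|\le M/\ell$ of Theorem~\ref{thm-hc2-Pa02}(3) and summing over the arcs yields $\ge-E_1|\overline D\cap\partial\Omega|\kappa+o(\kappa)$; this part essentially reproduces Pan's analysis in \cite{Pa02}.

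The bulk part is the heart of the argument. I would tile $\{x\in D:\dist(x,\partial\Omega)\ge t_0\}$ by squares $K_j$ of the lattice $\varepsilon R(\mathbb Z\oplus i\mathbb Z)$ with $|K_R|\in 2\pi\mathbb N$, $1\ll R$ and $\varepsilon R\to0$, rescale each cell by $\varepsilon$ to the reference cell $K_R$, and use $\varepsilon^{-2}=\kappa H$ to rewrite the cell energy as $(\kappa/H)\,G_{K_R}(u)$ with $b=H/\kappa$ and $1-b=\mu/\sqrt\kappa$. Writing $u=\Pi_1u+\Pi_2u$ relative to the lowest Landau eigenspace $L_R$ of $P_R$, the spectral gap $\mu_2(P_R)\ge3$ of Proposition~\ref{prop-hc2-poperator} makes the $\Pi_2$-component energetically expensive: on cells of negative energy one has $bQ_R(u)\le\|u\|_{L^2(K_R)}^2$, so the hypothesis of Lemma~\ref{prop-hc2-poperator'} holds with $\gamma=b^{-1}-1=O(\mu/\sqrt\kappa)$, giving $\|\Pi_2u\|_{L^p(K_R)}\le C_p\sqrt\gamma\,\|u\|_{L^2(K_R)}$. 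Combined with the uniform cell bound $\|u\|_{L^2(K_R)}^2=O((1-b)|K_R|)$ that follows from the $L^\infty$-estimate of Theorem~\ref{thm:Linfty-bulk}, this forces all $\Pi_2$-cross terms in $\int|u|^4$ to be of lower order. Performing the substitution $\Pi_1u=\sqrt{1-b}\,v$ of the second Remark, the reduced energy on $L_R$ becomes $(1-b)^2|K_R|\,F_R(v)\ge(1-b)^2|K_R|\,c(R)$, with $c(R)$ as in \eqref{eq-hc2-c(r,t)}. Summing over the cells, using $\sum|K_R|\sim|D|/\varepsilon^2=|D|\kappa H$ together with $(1-b)^2=[\mu]_+^2/\kappa$, and finally letting $R\to\infty$ so that $c(R)\to-E_2$ by Theorem~\ref{thm-AS}, produces $\ge-E_2|D|[\mu]_+^2\kappa+o(\max(1,[\mu]_+^2)\kappa)$.

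The main obstacle is this bulk reduction, precisely because, in contrast with the regime of Theorem~\ref{thm-hc2-Pan}, the order parameter does not decay in the interior, so one must genuinely compare the \emph{non-periodic} minimizer with the periodic Abrikosov problem. The delicate points are: controlling the cells that meet $\partial D$ or the boundary tube (their number $O((\varepsilon R)^{-1})$ is negligible against the $O((\varepsilon R)^{-2})$ interior cells); verifying the low-energy hypothesis of Lemma~\ref{prop-hc2-poperator'} cell-by-cell by splitting cells according to the sign and size of their energy; and choosing the joint growth of $\kappa\to\infty$ (hence $b\to1$) and $R\to\infty$ so that the $\Pi_2$-cross terms, of relative size $O((1-b)^{1/2}R^{1/2})$, vanish while every remainder stays $o(\max(1,[\mu]_+^2)\kappa)$. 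When $\mu\le0$ one has $[\mu]_+=0$ and the bulk term is absent, and the hypothesis $\liminf_{\kappa\to\infty}\mu(\kappa)>-\infty$ is then used only to ensure that the (positive) bulk energy does not spoil the $o(\kappa)$ remainder.
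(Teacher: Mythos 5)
Your proposal is correct and follows essentially the same route as the paper's proof: an IMS splitting into a boundary tube (treated via Pan's reduced functional $d(\ell)$ and Theorem~\ref{thm-hc2-Pa02}) and a bulk region tiled by magnetic-lattice cells, where the spectral gap of $P_{R}$ together with Lemma~\ref{prop-hc2-poperator'} projects the localized order parameter onto the lowest Landau level $L_{R}$ and reduces each cell energy to the Abrikosov minimum $c(R)\to-E_2$, with all localization and gauge errors absorbed by Theorem~\ref{thm:Linfty-bulk} and Corollary~\ref{cor-hc2:L2}. The only immaterial differences are presentational: you compare $\Ab$ with $\FF$ via a cell-wise mean-subtracting gauge where the paper gauges $\Ab$ locally to $\Ab_0$, and the paper additionally runs the endgame by contradiction along subsequences $\mu(\kappa_n)\to\mu_0$ and removes the overlap factor $c_\alpha=1+\alpha(1-\alpha)$ of its covering by letting $\alpha\to0$ at the very end.
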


Recall the
ground state energy $C_0(\kappa,H)$ introduced in \eqref{eq-GL-GS}.
As corollary from Theorem~\ref{thm-hc2-lb-loc}, we get an asymptotic
lower bound for $C_0(\kappa,H)$.

\begin{corol}\label{thm-hc2-lb}
Assume $H=\kappa-\mu(\kappa)\sqrt{\kappa}$ and
$\frac{\mu(\kappa)}{\sqrt{\kappa}}\to0$ as $\kappa\to\infty$.
The following lower bound holds for the ground state energy in
\eqref{eq-GL-GS},
\begin{equation}\label{eq-hc2-lb-new}
C_0(\kappa,H)\geq
-E_1|\partial\Omega|\kappa
-E_2|\Omega|\,[\mu(\kappa)]_+^2\kappa+o(\max(1,[\mu(\kappa)]_+^2)\kappa)
\,,\quad{\rm as~}\kappa\to\infty\,.\end{equation}
Here $E_1>0$ and $E_2>0$ are the constants
introduced in (\ref{eq-hc2-E1}) and (\ref{eq-hc2-E2}) respectively.
\end{corol}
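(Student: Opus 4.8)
The plan is to deduce the global lower bound \eqref{eq-hc2-lb-new} directly from the localized estimate of Theorem~\ref{thm-hc2-lb-loc} by specializing it to the whole domain. First I would fix, for each $\kappa$, a minimizer $(\psi,\Ab)$ of $\mathcal E$, so that by the definition of the infimum in \eqref{eq-GL-GS} one has $\mathcal E(\psi,\Ab;\Omega)=C_0(\kappa,H)$. The key observation is that Theorem~\ref{thm-hc2-lb-loc} applies with the choice $D=\Omega$ (which is smooth and open) and the trivial cut-off $h\equiv1$ (which satisfies $\|h\|_{L^\infty(\Omega)}\leq1$ and $\supp h\subset\overline\Omega=\overline{D_a}$ for any admissible $a$, since $D_a=\Omega$ when $D=\Omega$). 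For this choice $|\overline D\cap\partial\Omega|=|\partial\Omega|$, $|D|=|\Omega|$, and $\mathcal E(h\psi,\Ab;\Omega)=\mathcal E(\psi,\Ab;\Omega)=C_0(\kappa,H)$, so the conclusion of Theorem~\ref{thm-hc2-lb-loc} reads exactly as \eqref{eq-hc2-lb-new}.

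The only discrepancy is that Theorem~\ref{thm-hc2-lb-loc} carries the extra hypothesis $\liminf_{\kappa\to\infty}\mu(\kappa)>-\infty$, whereas the corollary assumes only $\mu(\kappa)/\sqrt{\kappa}\to0$, which permits $\mu(\kappa)\to-\infty$. To remove this hypothesis I would argue by contradiction along subsequences. Suppose \eqref{eq-hc2-lb-new} fails; then there are $\varepsilon_0>0$ and $\kappa_n\to\infty$ with $C_0(\kappa_n,H)< -E_1|\partial\Omega|\kappa_n -E_2|\Omega|[\mu(\kappa_n)]_+^2\kappa_n -\varepsilon_0\max(1,[\mu(\kappa_n)]_+^2)\kappa_n$. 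Passing to a further subsequence I may assume $\mu(\kappa_n)\to L\in[-\infty,+\infty]$. If $L>-\infty$, then $\liminf_n\mu(\kappa_n)>-\infty$, so Theorem~\ref{thm-hc2-lb-loc} applies along this subsequence with $D=\Omega$ and $h\equiv1$, yielding $C_0(\kappa_n,H)\geq -E_1|\partial\Omega|\kappa_n -E_2|\Omega|[\mu(\kappa_n)]_+^2\kappa_n +o(\max(1,[\mu(\kappa_n)]_+^2)\kappa_n)$, which contradicts the strict inequality above for $n$ large.

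It remains to treat $L=-\infty$, where I expect the only genuine input beyond bookkeeping to enter. Here $\mu(\kappa_n)\to-\infty$ forces $[\mu(\kappa_n)]_+=0$ for large $n$, so $\max(1,[\mu(\kappa_n)]_+^2)=1$ and the target reduces to $C_0(\kappa_n,H)\geq -E_1|\partial\Omega|\kappa_n+o(\kappa_n)$. Moreover $H-\kappa=-\mu(\kappa_n)\sqrt{\kappa_n}\to+\infty$ while $H/\kappa\to1$, so the field satisfies hypothesis \eqref{eq-hc2-hypH}, and Pan's Theorem~\ref{thm-hc2-Pan} applies with $D=\Omega$, giving $C_0(\kappa_n,H)=\mathcal E(\psi,\Ab;\Omega)=-E_1|\partial\Omega|\kappa_n+o(\kappa_n)$, again contradicting the strict inequality. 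Thus both cases are impossible and \eqref{eq-hc2-lb-new} holds. The substantive mathematics is entirely contained in Theorem~\ref{thm-hc2-lb-loc} and, for the deep surface regime $\mu\to-\infty$, in Pan's Theorem~\ref{thm-hc2-Pan}; the main point to get right is the reduction via $D=\Omega$, $h\equiv1$ together with the subsequence splitting that routes each regime of $\mu$ to the correct available result.
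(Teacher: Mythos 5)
Your proposal is correct and follows essentially the same route as the paper: the paper also argues by contradiction, extracts a subsequence along which $\mu(\kappa_n)\to\mu_0\in\R\cup\{\pm\infty\}$, applies Theorem~\ref{thm-hc2-lb-loc} with $D=\Omega$ and $h\equiv1$ when $\mu_0\in\R\cup\{+\infty\}$, and falls back on Pan's Theorem~\ref{thm-hc2-Pan} when $\mu_0=-\infty$. Your write-up merely makes explicit the details the paper leaves implicit (the margin $\varepsilon_0$, the verification of hypothesis \eqref{eq-hc2-hypH}, and the reduction to $\max(1,[\mu]_+^2)=1$ in the case $\mu_0=-\infty$).
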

\begin{proof}
Assume the conclusion of the corollary were false and let
$\{\kappa_n\}$ be a sequence such that $\kappa_n\to\infty$ and
$\mu(\kappa_n)\to\mu_0$ as $n\to\infty$, with
$\mu_0\in\R\cup\{\pm\infty\}$.

 If $\mu_0=-\infty$, we apply Theorem~\ref{thm-hc2-Pan} (which is
 proved by Pan in \cite{Pa02}), with $\kappa=\kappa_n$ and $D=\Omega$.  Otherwise, if
 $\mu_0\in
\R\cup\{\infty\}$,
we apply Theorem~\ref{thm-hc2-lb-loc} with $D=\Omega$, 
$h\equiv1$ and $\kappa=\kappa_n$. In both cases we get a contradiction
to the assumption that the conclusion of the corollary were fasle.\end{proof}

The general strategy for proving Theorem~\ref{thm-hc2-lb-loc} 
is the following. Using a partition
of unity we may split the energy into a {\it boundary} component and
a {\it bulk} component. To control the  boundary component of the
energy we follow essentially the argument of Pan \cite{Pa02}.
However, the control of the bulk component of the energy is novel.
Finally, we make use of the {\it a priori} estimates recalled in
Lemma~\ref{lem-hc2-FoHe} and the improved estimate in
Corollary~\ref{cor-hc2:L2} in order to control the errors resulting
from the approximations (this is one additional key point that
replaces the implementation of the exponential decay of $\psi$ in
Pan's argument \cite{Pa02}).

In what follows, we consider a domain $D=\widetilde D\cap\Omega$, with
$\widetilde D$ a smooth and open domain in $\R^2$. We assume for
simplicity the following condition on $D$,
\begin{equation}\label{eq-hc2-cond-D}
\overline D 
{\rm ~is~ connected,}\quad 
\overline D\cap\partial\Omega=\partial\Omega,\quad |D|\not=|\Omega|\,.\end{equation} 
In general, $\overline D$ consists 
of a finite number of connected components.
By working with each component separately and adding up the
corresponding lower bounds, one can reduce to the connected case.
Furthermore,
$\overline D\cap\partial\Omega$ consists of a finite union of 
closed curves (possibly of zero length). 
A simple modification of the argument below will handle this case as
well, so we therefore only treat
domains $D$ satisfying the condition in \eqref{eq-hc2-cond-D}.

\subsection{Splitting of the energy}
Let us consider a parameter $\eta=\eta(\kappa)>0$ such that
$\eta\to0$ as $\kappa\to\infty$. We will take $\eta$  in the following form
\begin{equation}\label{eq-hc2-eta0}
\eta=\kappa^{-\rho}\,,\quad\rho\in(\frac14,1)\,,
\end{equation}
and we will fix a choice of $\rho$ at
the end of the proof.

Let us also consider  a partition of unity
$$\chi_1^2+\chi_2^2=1\quad{\rm in}~\R\,,\quad{\rm
supp}\,\chi_1\subset(-\infty,2]\,,\quad{\rm
supp}\,\chi_2\subset[1,\infty)\,.$$
We define,
\begin{equation}\label{eq-hc2-partition}
\chi_{1,\eta}(x)=\chi_1\left(\frac{\dist(x,\partial\Omega)}{\eta}\right)
\,,\quad
\chi_{2,\eta}(x)=\chi_2\left(\frac{\dist(x,\partial\Omega)}{\eta}\right)
\,,\quad\forall~x\in\Omega\,.
\end{equation}
Then we get the following localization formula,
\begin{equation}\label{eq-hc2-localization}
\int_\Omega|(\nabla-i\kappa H\Ab)h\psi|^2\,dx=
\sum_{j=1}^2\int_\Omega\left(|(\nabla-i\kappa
H\Ab)\chi_{j,\eta}h\psi|^2-\big{|}\,|\nabla\chi_{j,\eta}|h\psi\,\big{|}^2\right)\,dx\,.
\end{equation}
Defining the reduced energy,
\begin{equation}\label{eq-hc2-GL-A=0}
\mathcal E_0(\psi,\Ab;\Omega)=\int_\Omega\left(|(\nabla-i\kappa
H\Ab)\psi|^2-\kappa^2|\psi|^2+\frac{\kappa^2}{2}|\psi|^4\right)\,dx\,,\end{equation}
we then get in light of (\ref{eq-hc2-localization}),
\begin{equation}\label{eq-hc2-split:energy}
\mathcal E(h\psi,\Ab;\Omega)\geq \sum_{j=1}^2\mathcal
E_0(\chi_{j,\eta}h\psi,\Ab;\Omega)-\mathcal R(h\psi,\Ab)\,,
\end{equation}
where
\begin{equation}\label{eq-hc2-split:error}
\mathcal
R(h\psi,\Ab)=\sum_{j=1}^2\int_\Omega\big{|}\,|\nabla\chi_{j,\eta}|h\psi\,\big{|}^2\,
dx\,.\end{equation} This last term corresponds to an error which we
will estimate using Theorem~\ref{thm:Linfty-bulk}. Also, recall the
assumption that $|h|\leq1$ in $\Omega$. In this  way we get,
$$
\mathcal R(h\psi,\Ab)\leq
\|\nabla\chi_{j,\eta}\|^2_{L^\infty(\Omega)}\|\psi\|^2_{L^\infty(\omega_\kappa)}\int_{\{\eta\leq{\rm dist}(x,\partial\Omega)\leq2\eta\}}dx
\leq
C\,.
$$
Here $\omega_\kappa=\{x\in\Omega~:~{\rm dist}(x,\partial\Omega)\geq\kappa^{-\rho}\}$ and
$$\zeta=\zeta(\kappa)=\max\left\{\left|1-\frac{\kappa}H\right|^{1/2},
\kappa^{-1/4}\right\}\,.$$ Upon substitution into
(\ref{eq-hc2-split:energy}) we get,
\begin{equation}\label{eq-hc2-split:energy'}
\mathcal E(h\psi,\Ab;\Omega)\geq \sum_{j=1}^2\mathcal
E_0(\chi_{j,\eta}h\psi,\Ab;\Omega)-\frac{C}{\eta}\zeta^2\,.
\end{equation}
We proceed to estimate separately  the terms $\mathcal
E_0(\chi_{1,\eta}h\psi,\Ab;\Omega)$ and $\mathcal
E_0(\chi_{2,\eta}h\psi,\Ab;\Omega)$.

\subsection{Estimating the boundary energy}
Let us now introduce a further partition of unity\,,
$$h_1^2+h_2^2=1\quad{\rm
in~}\R\,,\quad {\rm supp}\,h_1\subset(-\infty,1)\,,\quad
{\rm supp}\,h_2\subset (-1,\infty)\,,$$
and such that
$$
h_1=1\quad{\rm in~}(-\infty,-1]\,,\quad h_2=1\quad{\rm in}~[1,\infty)\,.$$
Let $s_0=|\partial\Omega|/4$. Recall the  coordinate
transformation $\Phi$ in (\ref{eq:19}) valid in the neighborhood
$\Omega(t_0)$ of $\partial\Omega$. By defining,
$$h_{1,\eta}(x)=h_1\left(\frac{|s(x)|-s_0}{\eta}\right)\,,\quad
h_{2,\eta}(x)=h_2\left(\frac{|s(x)|-s_0}{\eta}\right)
\quad\forall~x\in\Omega(\eta)\,,$$ we get a partition of unity in
$\Omega(\eta)$. Using the localization  formula, the
energy splits one more time as follows,
\begin{equation}\label{eq-hc2-split:bnd00}
\mathcal E_0(\chi_{j,\eta}h\psi,\Ab;\Omega)= \sum_{j=1}^2\left(\mathcal
E_0(\psi_{j,\eta},\Ab;\Omega)
-\int_\Omega\big{|}|\nabla h_{j,\eta}|\chi_{1,\eta}h\psi\big{|}^2\,dx\right) \,,
\end{equation}
where
\begin{equation}\label{eq-hc2-partition'}
\psi_{j,\eta}(x)=h_{j,\eta}(x)\chi_{1,\eta}(x)h(x)\psi(x)\,,
\quad\forall~x\in\Omega(\eta)\,.
\end{equation}
Noticing that the supports of the
functions $|\nabla h_{j,\eta}|\chi_{1,\eta}\psi$, $j=1,2$, are
contained in
$\{x\in\Omega~:~{\rm dist}(x,\partial\Omega)\leq2\eta,~0\leq \big{|}
|s(x)|-s_0\big{|}\leq{\eta}\}$, we obtain for a possibly new constant $C$,
$$\sum_{j=1}^2\int_\Omega\big{|}|\nabla h_{j,\eta}|
\chi_{1,\eta}h\psi\big{|}^2\,dx\leq C\eta\,.$$
Substituting this into \eqref{eq-hc2-split:bnd00}, we obtain,
\begin{equation}\label{eq-hc2-split:bnd}
\mathcal E_0(\chi_{j,\eta}h\psi,\Ab;\Omega)\geq \sum_{j=1}^2\mathcal
E_0(\psi_{j,\eta},\Ab;\Omega)-C\,.
\end{equation}

Let us now bound the term $\mathcal
E_0(\psi_{1,\eta},\Ab;\Omega)$ from below. Since $\curl A=1$ on
$\partial\Omega$, Lemma~F.1.1 of \cite{FoHe08} yields that up to a
gauge transformation, we may write in $(s,t)$-coordinates,
$$\widetilde{\Ab}
(s,t)=\left(-t+\frac{t^2}2k(s)+t^2b(s,t)\,,\,0\right)\quad{\rm
in}~{\rm supp}\,h_{1,\eta}\,,$$ where the function $b$ satisfies
$$|b(s,t)|\leq \|\curl\Ab-1\|_{C^1(\Omega)}\leq
C\kappa^{-1}\,,\quad\forall~(s,t)\in{\rm supp}\,h_{1,\eta}\,,
$$
and where we used Lemma~\ref{lem-hc2-FoHe} in the last step. We also
remind the reader that $\widetilde \Ab=U_\Phi\Ab$ is the vector
field associated to $\Ab$ by the coordinate transformation $\Phi$.

It is more convenient in this part to introduce the parameters,
$$\varepsilon=\frac1{\sqrt{\kappa
H}}\,,\quad\gamma=\frac{\kappa}H-1
=\mu(\kappa)\frac{\sqrt{\kappa}}{H}=\mathcal
O\left(\mu\sqrt{\varepsilon}\right)\,.$$ Then, applying the change
of variables formula in Proposition~\ref{hc2-App:transf}, we get,
\begin{equation}\label{eq-hc2-lb:bnd}
\mathcal E_0(\psi_{1,\eta},\Ab;\Omega)=\mathcal
J_\varepsilon\left(\widetilde\psi_{1,\eta}\right)\,,
\end{equation}
where
\begin{multline}\label{eq-hc2-lb:Q}
\mathcal
J_\varepsilon\left(\widetilde\psi_{1,\eta}\right)=\int_{-\frac{|\partial\Omega|}2}^{\frac{|\partial\Omega|}2}
\int_0^{t_0} \left[
(1-tk(s))^{-2}\left|\left(\partial_s+\frac{i}{\varepsilon^2}(t+\widetilde
b(s,t))\right)\widetilde \psi_{1,\eta}\right|^2\right.\\
\left.+\left|\partial_t\widetilde\psi_{1,\eta}\right|^2-\frac{1+\gamma}{\varepsilon^2}
|\widetilde\psi_{1,\eta}|^2+\frac{1+\gamma}{2\varepsilon^2}
|\widetilde\psi_{1,\eta}|^4\right](1-tk(s))\,dsdt\,,
\end{multline}
and
\begin{equation}\label{eq-hc2-b(s,t)}
\widetilde b(s,t)=-t^2\left(\frac{k(s)}2+b(s,t)\right)=\mathcal O
(t^2)\,. \end{equation} Let us introduce another parameter
$\delta=\delta(\varepsilon)\in(0,1)$ to be fixed later. Applying a
Cauchy-Schwarz inequality, we get,
$$\left|\left(\partial_s+\frac{i}{\varepsilon^2}(t+\widetilde
b(s,t))\right)\widetilde \psi_{1,\eta}\right|^2\geq
(1-\delta)\left|\left(\partial_s+\frac{i}{\varepsilon^2}t\right)\widetilde
\psi_{1,\eta}\right|^2-\delta^{-1}\frac{1}{\varepsilon^4}\left|\,|\widetilde
b(s,t)|^2\,\widetilde \psi_{1,\eta}\right|^2\,.$$
Substituting the above estimate into \eqref{eq-hc2-lb:Q}, we get,
\begin{equation}\label{eq-hc2-bnd:mainterm}
\mathcal J_\varepsilon\left(\widetilde\psi_{1,\eta}\right)\geq
(1-\delta)\mathcal
Q_\varepsilon\left(\widetilde\psi_{1,\eta}\right) -C\,\mathcal
R_{\rm bnd}\left(\widetilde\psi_{1,\eta}\right)\,,
\end{equation}
where
$$\mathcal Q_\varepsilon\left(\widetilde\psi_{1,\eta}\right)
= \int_{-\frac{|\partial\Omega|}2}^{\frac{|\partial\Omega|}2}
\int_0^{t_0} \left(
\left|\left(\partial_s+\frac{i}{\varepsilon^2}t\right)\widetilde
\psi_{1,\eta}\right|^2+\left|\partial_t\widetilde\psi_{1,\eta}\right|^2-\frac{1}{\varepsilon^2}
|\widetilde\psi_{1,\eta}|^2+\frac1{2\varepsilon^2}|\widetilde\psi_{1,\eta}|^4\right)\,dsdt\,,
$$
and \begin{multline*} \mathcal R_{\rm
bnd}\left(\widetilde\psi_{1,\eta}\right)=\frac{1}{\varepsilon^2}\int\left[
|\gamma|+t+\frac{\delta^{-1}}{\varepsilon^2}t^4\right]|\widetilde\psi_{1,\eta}|^2\,dsdt+\\
\int
t\left(\left|\left(\partial_s+\frac{i}{\varepsilon^2}(t+\widetilde
b(s,t))\right)\widetilde
\psi_{1,\eta}\right|^2+\left|\partial_t\widetilde
\psi_{1,\eta}\right|^2\right)\,dsdt\,.
\end{multline*}
From the definition of $\psi_{1,\eta}$, we know that
$${\rm
supp}\,\psi_{1,\eta}\subset\{x\in\Omega~:~\dist(x,\partial\Omega)\leq\eta\,,
~
|s(x)|\leq\frac{|\partial\Omega|}4+\eta\}\,,
\quad|\psi_{1,\eta}(x)|\leq|\psi(x)|\leq1\,.$$
With this point on the one  hand, and (\ref{eq-hc2-FoHe2}) on the
other hand, we deduce that,
\begin{equation}\label{eq-hc2-lb:bnd-error}
\mathcal R_{\rm bnd}\left(\widetilde\psi_{1,\eta}\right)\leq
\frac{C}{\varepsilon^2}\left(\eta|\gamma|+\eta^2+\frac{\delta^{-1}}{\varepsilon^2}\eta^5\right)\,.
\end{equation}
Let us define the re-scaled function,
$$g_\eta(\sigma,\tau)=\left\{
\begin{array}{ll}
\widetilde\psi_{1,\eta}\left(\displaystyle\varepsilon{\sigma},
\displaystyle\varepsilon{\tau}\right)&{\rm
if}~ (\sigma,\tau)\in (-\ell,\ell)\times(0,\eta\varepsilon^{-1})\,,\\
0&{\rm otherwise}\,,\end{array}\right.$$ where
$$\ell=\frac{|\partial\Omega|}{4\varepsilon}+\frac{\eta}{\varepsilon}\,.$$
In the new scale, we may write,
$$\mathcal
Q_\varepsilon(\widetilde\psi_{1,\eta})=\mathcal
E_\ell(g_\eta)\,,$$ where $\mathcal E_\ell$ is the functional
introduced in \eqref{eq-hc2-redGL}. Invoking
Theorem~\ref{thm-hc2-Pa02}, we get a new constant $M>0$ such that,
\begin{equation}\label{eq-hc2-lb:bnd-refGL}
\mathcal Q_\varepsilon(\widetilde\psi_{1,\eta})\geq d(\ell)\geq
-E_1\left(\frac{|\partial\Omega|}{2\varepsilon}+2\frac\eta\varepsilon\right)-M\,.
\end{equation}
Summing up the estimates in \eqref{eq-hc2-lb:bnd},
\eqref{eq-hc2-bnd:mainterm}, \eqref{eq-hc2-lb:bnd-error} and
\eqref{eq-hc2-lb:bnd-refGL}, we get finally,
$$\mathcal E_0(\psi_{1,\eta},\Ab;\Omega)\geq
-\left(E_1\frac{|\partial\Omega|}{2\varepsilon}+C\frac\eta\varepsilon
+M\right)(1-\delta)-
\frac{C}{\varepsilon^2}\left(\eta^2+\eta|\gamma|+\frac{\delta^{-1}}{\varepsilon^2}\eta^5\right)
\,.
$$
In a similar fashion, we establish that,
$$
\mathcal E_0(\psi_{2,\eta},\Ab;\Omega)\geq
-\left(E_1\frac{|\partial\Omega|}{2\varepsilon}+C\frac\eta\varepsilon
+M\right)(1-\delta)-
\frac{C}{\varepsilon^2}\left(\eta^2+\eta\gamma+\frac{\delta^{-1}}{\varepsilon^2}\eta^5\right)
\,.$$
Invoking \eqref{eq-hc2-split:bnd}, and  recalling the
definition of $\gamma$ and $\varepsilon=\kappa^{-1}(1+o(1))$, we
get,
\begin{equation}\label{eq-hc2-lb:bnd*}
\mathcal E_0(\chi_{1,\eta}h\psi,\Ab;\Omega)\geq
-E_1|\partial\Omega|\kappa- C\kappa^2 \left(\eta^2
+\frac{\eta|\mu(\kappa)|}{\sqrt{\kappa}}+\delta^{-1}\kappa^2\eta^5\right)
-C(\delta+\eta)\kappa\,.
\end{equation}

\subsection{Estimating the bulk energy}
We recall that, for a given $R>0$, we denote by $K_{R}$ the unit
square of the lattice $R(\mathbb Z\oplus i\mathbb Z)$.

For $x\in\R^2$ and $R>0$, we denote by $K_{R}(x)$ a square of {\bf
center} $x$ and side length $R$,
\begin{equation}\label{eq-hc2-unitsquare}
K_{R}(x)=\left(x_1-\frac{R}2\,,\,x_1+\frac{R}2\right)\times\left(x_2-\frac{R}2\,,\,x_2+\frac{R}2\right)\,,
\quad\forall~x=(x_1,x_2)\in\R^2,~R>0\,.
\end{equation}
Let us consider a fixed number $\alpha\in(0,\frac12)$, and cover
$\R^2$ by the squares $(K_1(x_{j,\alpha}))_{j\in\mathbb
Z\oplus\mathbb Z}$ where
\begin{equation}\label{eq-hc2-xa}
x_{j,\alpha}=(1-\alpha)j\,,\quad\forall~j=(j_1,j_2)\in\mathbb
Z\times\mathbb Z\,.
\end{equation}
Let us take a partition of unity in $\R^2$ associated with the
squares $K_1(x_{j,\alpha})$,
$$\sum_{j}u_j^2=1\,,\quad {\rm supp}\,u_j\subset
K_1(x_{j,\alpha})\,.
$$
Recall the parameter $\varepsilon=1/\sqrt{\kappa H}$. Let us
consider a further parameter $R=R(\varepsilon)>1$  such that
$$R(\varepsilon)\to\infty,\quad \varepsilon
R(\varepsilon)\to0\quad{\rm as~}\varepsilon\to0,\quad{\rm and}~ R^2\in2\pi\mathbb N\,.$$
Then,  defining,
$$u_{j,R}(x)=u_j\left(\frac{
    x}{\varepsilon R}\right)\,,\quad\forall~x\in\Omega\,,$$
we get a partition of unity associated with the re-scaled squares $K_{\varepsilon R}(x^\varepsilon_{j,\alpha})$
\begin{equation}\label{eq-hc2-cover:e}
x^\varepsilon_{j,\alpha}=\varepsilon R
x_{j,\alpha}=(1-\alpha)\varepsilon R j\,,\quad j\in\mathbb
Z\times\mathbb Z\,,
\end{equation}
 such that,
$${\rm supp}\,u_{j,R}\subset K_{\varepsilon
R}(x^\varepsilon_{j,\alpha}).$$ In order to simplify notation, we
will skip the dependence on $\varepsilon$ and $\alpha$ from the
squares and write
$K_{\varepsilon R}^j$
instead of $K_{\varepsilon R}(x^\varepsilon_{j,\alpha})$,
$j\in\mathbb Z^2$.

Let us introduce
 further,
\begin{equation}\label{eq-hc2-index}
 \mathcal J=\mathcal
 J_{\varepsilon,D_a}=\{j~:~D_a\cap{\rm supp}\,u_{j,R}\not=\emptyset\}\,,\quad
 N_\varepsilon={\rm Card}\,\mathcal J\,.\end{equation}
Here $a=a(\kappa)\to0$ as $\kappa\to\infty$ and $D_a$ is the
neighborhood of $D$ introduced in \eqref{eq-hc2-Da}. 

We notice that,
\begin{equation}\label{eq-hc2-Nindex}
\lim_{\varepsilon\to0} \left(N_\varepsilon\times
  \varepsilon^2 R^2\right)=(1+v(\alpha))|D|\,,
\end{equation}
where $v(\alpha)$ is positive and verifies 
(actually $v(\alpha)=\alpha(1-\alpha)$),
\begin{equation}\label{eq-hc2-v(alpha)}
\lim_{\alpha\to0}v(\alpha)=0\,.
\end{equation}
 Implementing the partition of unity $u_{j,R}$, we get a splitting of the interior energy,
\begin{equation}\label{eq-hc2-lb:int}
\mathcal E_0(\chi_{2,\eta}h\psi,\Ab;\Omega)\geq \sum_{j\in\mathcal J}\mathcal
E_0(u_{j,R}\chi_{2,\eta}h\psi,\Ab;\Omega) -\frac{C}{(\varepsilon
R)^2}\|\psi\|^2_{L^\infty(\{x\in\Omega~:~{\rm dist}(x,\partial\Omega)\geq\kappa^{-\rho}\})}\,.
\end{equation}
Setting
$$\varphi_{j,R}(x)=
u_{j,R}(x)\chi_{2,\eta}(x)h(x)\psi(x)\,,\quad\forall~x\in\Omega\,,
$$
and invoking again Theorem~\ref{thm:Linfty-bulk}, we infer from
\eqref{eq-hc2-lb:int},
\begin{equation}\label{eq-hc2-lb:int'}
\mathcal E_0(\chi_{2,\eta}h\psi,\Ab;\Omega)\geq \sum_{j\in\mathcal J}\mathcal
E_0(\varphi_{j,R},\Ab;\Omega) -\frac{C}{(\varepsilon
R)^2}\zeta^2\,,
\end{equation}
and we point out that the constant $C$ depends on the parameter
$\alpha$, but we will not need to make this dependence explicit in
the notation as $\alpha$  remains fixed in the limit
$\varepsilon\to0$. We also remind the reader that
$\zeta=\max\{|1-\frac{\kappa}H|^{1/2},\kappa^{-1/4}\}$.

Let us proceed to estimate $\mathcal
E_0(\varphi_{j,R},\Ab;\Omega)$. We apply first a gauge transformation
that allows us to approximate the
vector field $\Ab$ locally.
 Setting
$$\mathbf B(x)=\curl\,\Ab(x)\,,$$
then there exists a real-valued function $\phi_0$
such that we may write,
$$\Ab(x)-\nabla\phi_0
=\int_0^1 sB(sx)(-x_2,x_1)\,ds\,,\quad\forall~x=(x_1,x_2)\in\Omega\,.$$
Notice that $\varphi_{j,R}$ is
supported in a ball $B(x_j,C\varepsilon R)$ with $C$ sufficiently
large.
We may write,
$$\Ab_0(x)-\nabla\phi_0=\frac{\mathbf
  B(x_j)}2(-x_2,x_1)+a(x)\,,\quad{\rm in}~B(x_j,C\varepsilon R)\,,$$
where the vector field $a(x)$ satisfies the uniform estimate,
$$|a(x)|\leq C\|\nabla\mathbf
B\|_{L^\infty(\Omega)}|x-x_j|\,,\quad{\rm in~}B(x_j,C\varepsilon R)\,.$$
Therefore, we can find a real-valued function $\phi_{j,R,\varepsilon}$
such  that,
$$\Ab_0(x)-\nabla\phi_{j,R,\varepsilon}=\frac{\mathbf
  B(x_j)}2(x-x_j)^\bot+a(x)\,,\quad{\rm in}~B(x_j,C\varepsilon R)\,,$$
where $x^\bot=(-x_2,x_1)$ for all $x=(x_1,x_2)\in\R^2$.

By Lemma~\ref{lem-hc2-FoHe}, the magnetic field $\mathbf B$ is
almost equal to $1$, hence we get,
\begin{equation}\label{eq-hc2-lb:A=A0}
\left|\Ab(x)-\nabla\phi_{j,R,\varepsilon}(x)-\Ab_0(x)\right|
\leq C\kappa^{-1}|x-x_j|\,,\quad{\rm in~}B(x_j,C\varepsilon R)\,.
\end{equation}
Here $\Ab_0$ is the  vector field  introduced in
\eqref{eq-hc2-mpA0}. Therefore, setting
$$
\Ab_{j,R}(x)=\Ab(x)-\nabla\phi_{j,R,\varepsilon}( x)\,,$$ we infer
from \eqref{eq-hc2-lb:A=A0},
\begin{equation}\label{eq-hc2-lb:A=A0'}
\left|\Ab_{j,R}(x)-\Ab_{0}(x)\right|\leq
C\varepsilon|x-x_j|\,,\quad{\rm in~}{\rm
    supp}\,
\varphi_{j,R}\,.
\end{equation}
Furthermore, we notice that,
\begin{equation}\label{eq-hc2-lb:gt}
\mathcal E_0(\varphi_{j,R},\Ab;\Omega)= \mathcal
E_0(e^{i\phi_{j,R,\varepsilon}}\varphi_{j,R},\Ab_{j,R};\Omega)\,.
\end{equation}
Using a Cauchy-Schwarz inequality, we get for any
$\beta\in(0,1)$,
\begin{multline*}
\mathcal
E_0(e^{i\phi_{j,R,\varepsilon}}\varphi_{j,R},\Ab_{j,R};\Omega)\geq
(1-\beta)\mathcal
E_0(e^{i\phi_{j,R,\varepsilon}}\varphi_{j,R},\Ab_0;\Omega)
\\-C\beta^{-1}\varepsilon^{-4}\int_\Omega|\Ab_{j,R}(x)-\Ab_0(x)|^2\,|\varphi_{j,R}|^2\,dx\,.\end{multline*}
We implement \eqref{eq-hc2-lb:A=A0'} in the above estimate and
we use the bound $|\varphi_{j,\eta}(x)|\leq \zeta$. That way we get,
\begin{equation}\label{eq-hc2-lb:gt'}
\mathcal
E_0(e^{i\phi_{j,R,\varepsilon}}\varphi_{j,R},\Ab_{j,R};\Omega)
\geq(1-\beta)\mathcal
E_0(e^{i\phi_{j,R,\varepsilon}}\varphi_{j,R},\Ab_{0};\Omega)
-C\beta^{-1}\varepsilon^2R^4\zeta^2\,.
\end{equation}
We proceed to obtain a lower bound for $\mathcal E_0
 (e^{i\phi_{j,R,\varepsilon}}\varphi_{j,R},\Ab_{0};\Omega)$. Modulo a translation, we
 may assume that ${\rm supp}\,\varphi_{j,R}$  is contained in the unit
 square $K_{\varepsilon R}$ of the lattice
 $\varepsilon R(\mathbb Z\oplus i\mathbb Z)$. We therefore define the
 re-scaled function,
\begin{equation}\label{eq-hc2-lb:newf}
f_j(\widetilde x)=\left(e^{i\phi_{j,R,\varepsilon}}\varphi_{j,R}\right)
\left(\varepsilon \widetilde x\right)\,,\quad\forall~\widetilde x\in K_{R}\,.
\end{equation}
That way the energy becomes (after omitting the tildes from the notation),
$$
\mathcal E_0
 (e^{i \phi_{j,R,\varepsilon}}\varphi_{j,R},\Ab_{0};\Omega)
=\int_{K_{R}} \left(|(\nabla-i\Ab_0)f_j|^2-\frac\kappa{H}|f_j|^2+
\frac{\kappa}{2H}|f_j|^4\right)\,dx\,.
$$
Invoking \eqref{eq-hc2-lb:gt} and \eqref{eq-hc2-lb:gt'}, we deduce
that,
\begin{multline}\label{eq-hc2-lb:newEn}
\mathcal E_0
 (\varphi_{j,R},\Ab;\Omega)
=(1-\beta)\int_{K_{R}}
\left(|(\nabla-i\Ab_0)f_j|^2-\frac\kappa{H}|f_j|^2+
\frac{\kappa}{2H}|f_j|^4\right)\,dx\\
-C\beta^{-1}\varepsilon^2R^4\zeta^2\,.
\end{multline}

Up to now, we are able to prove the following lemma, whose proof is
actually a simple application of the Cauchy-Schwarz inequality.

\begin{lem}\label{thm-hc2-lb:1}
Assume that $H=\kappa-\mu(\kappa)\sqrt{\kappa}$ with
$\displaystyle\limsup_{\kappa\to\infty}\mu(\kappa)=\mu_0$ and 
$\mu_0\in[-\infty,0]$.
Then, as $\kappa\to\infty$,
$$\mathcal E_0(\chi_{2,\eta}h\psi,\Ab;\Omega)\geq
-C\left(\beta^{-1}R^2+\frac1{(\varepsilon
R)^{2}}\right)\zeta^2+o(\kappa)\,.$$
\end{lem}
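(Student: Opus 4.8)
The plan is to sum the cell-wise lower bound \eqref{eq-hc2-lb:newEn} over the index set $\mathcal J$ and combine it with the localization estimate \eqref{eq-hc2-lb:int'}, so that the whole matter reduces to controlling the sum of the rescaled cell energies. First I would dispose of the error terms. There are $N_\varepsilon=\mathrm{Card}\,\mathcal J$ cells, and since $N_\varepsilon\varepsilon^2R^2\to(1+v(\alpha))|D|$ by \eqref{eq-hc2-Nindex}, the total gauge-approximation error obeys $N_\varepsilon\,C\beta^{-1}\varepsilon^2R^4\zeta^2=C(N_\varepsilon\varepsilon^2R^2)\beta^{-1}R^2\zeta^2\le C'\beta^{-1}R^2\zeta^2$. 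Together with the localization error $C(\varepsilon R)^{-2}\zeta^2$ from \eqref{eq-hc2-lb:int'}, this already produces the two explicit error terms in the statement, and it remains to show that the main sum $(1-\beta)\sum_{j\in\mathcal J}\int_{K_R}\bigl(|(\nabla-i\Ab_0)f_j|^2-\tfrac{\kappa}{H}|f_j|^2+\tfrac{\kappa}{2H}|f_j|^4\bigr)\,dx$ is bounded below by $o(\kappa)$.

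The crucial point for the main sum is the spectral lower bound furnished by the lowest Landau level: the bottom of the spectrum of $-(\nabla-i\Ab_0)^2$ on $L^2(\R^2)$ equals $1$ (cf. Proposition~\ref{prop-hc2-poperator}). Since each $f_j$ is compactly supported in $K_R$, extending it by zero and applying this bound gives $\int_{K_R}|(\nabla-i\Ab_0)f_j|^2\,dx\ge\int_{K_R}|f_j|^2\,dx$. Hence each cell energy is bounded below by $\bigl(1-\tfrac{\kappa}{H}\bigr)\int_{K_R}|f_j|^2\,dx+\tfrac{\kappa}{2H}\int_{K_R}|f_j|^4\,dx$. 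This is exactly the gain that matters: the dangerous coefficient $-\tfrac{\kappa}{H}\approx-1$ is replaced by the small coefficient $1-\tfrac{\kappa}{H}$, which is what renders the bulk negligible near $H_{C_2}$.

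Next I would invoke the Cauchy-Schwarz inequality in the form $\int_{K_R}|f_j|^4\,dx\ge|K_R|^{-1}\bigl(\int_{K_R}|f_j|^2\,dx\bigr)^2$ and minimize the scalar function $X\mapsto\bigl(1-\tfrac{\kappa}{H}\bigr)X+\tfrac{\kappa}{2H|K_R|}X^2$ over $X=\int_{K_R}|f_j|^2\,dx\ge0$. The minimum is $0$ when $\kappa/H\le1$ and equals $-\tfrac{(\kappa/H-1)^2H|K_R|}{2\kappa}$ otherwise, so in all cases each cell contributes at least $-\tfrac{[\mu(\kappa)]_+^2R^2}{2H}$, where I have used $(\kappa/H-1)_+^2=[\mu(\kappa)]_+^2\kappa/H^2$ and $|K_R|=R^2$. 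Summing over the $N_\varepsilon$ cells and using $N_\varepsilon\varepsilon^2R^2\to(1+v(\alpha))|D|$ together with $\varepsilon^{-2}=\kappa H$ yields the bound $(1-\beta)\sum_{j\in\mathcal J}\int_{K_R}(\cdots)\,dx\ge-C[\mu(\kappa)]_+^2\kappa$.

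Finally, the hypothesis $\limsup_{\kappa\to\infty}\mu(\kappa)=\mu_0\le0$ forces $[\mu(\kappa)]_+\to0$, so that $[\mu(\kappa)]_+^2\kappa=o(\kappa)$, and the main sum is therefore $\ge o(\kappa)$. Combining this with the error bookkeeping of the first paragraph gives the claimed estimate. I expect the only genuinely substantive step to be the spectral lower bound by the lowest Landau level, since it is precisely this bound that supplies the decisive factor $(1-\kappa/H)$; the remaining steps (Cauchy-Schwarz, scalar minimization, and verifying that $[\mu]_+^2\kappa=o(\kappa)$ under the $\limsup$ hypothesis) are elementary.
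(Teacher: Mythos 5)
Your proposal is correct and takes essentially the same route as the paper: the decisive step in both is the spectral lower bound by the ground-state energy $1$ of the Landau operator (you apply it on $\R^2$ to the zero-extension of $f_j$, the paper views $f_j$ as lying in the form domain of the periodic operator $P_R$ and uses Proposition~\ref{prop-hc2-poperator}), which turns the coefficient $-\kappa/H$ into $1-\kappa/H$, after which the quartic term absorbs the quadratic one via Cauchy--Schwarz, giving a per-cell bound of order $-R^2[\mu]_+^2/H$ that sums to $-C[\mu]_+^2\kappa=o(\kappa)$, with the same bookkeeping of the errors from \eqref{eq-hc2-lb:int'} and \eqref{eq-hc2-lb:newEn}. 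The only differences are cosmetic: you minimize the scalar function exactly and treat both signs of $\mu$ at once through $[\mu]_+$, where the paper uses a weighted AM--GM inequality and splits into the cases $\mu\leq 0$ and $\mu\to 0$.
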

\begin{proof}
Notice that, for any $j$, $f_j$ can be considered as a function
 in the domain of the
periodic operator $P_{R}$, see \eqref{eq-hc2-poperator}. Using
Proposition~\ref{prop-hc2-poperator} and the variational min-max
principle, we write, \begin{multline}\label{eq-hc2-mu=0}\int_{K_{R}}
\left(|(\nabla-i\Ab_0)f_j|^2-\frac\kappa{H}|f_j|^2+
\frac{\kappa}{2H}|f_j|^4\right)\,dx\\
\geq\left(1-\frac{\kappa}{H}\right)
\int_{K_R}|f_j|^2\,dx+\frac{\kappa}{2H}\int_{K_R}|f_j|^4\,dx\,.\end{multline}
If $\mu(\kappa)\leq0$, i.e. $H\geq \kappa$, we have nothing to prove
since  the right hand side of \eqref{eq-hc2-mu=0} is positive, and
we only need to collect the estimates \eqref{eq-hc2-lb:int'} and
\eqref{eq-hc2-lb:gt}-\eqref{eq-hc2-lb:newEn}.

Now we assume that $\mu(\kappa)\to0$ as $\kappa\to\infty$. Using the
following Cauchy-Schwarz inequality
$$|\mu(\kappa)|\sqrt{\kappa}\int_{K_R}|f_j|^2\,dx\leq\frac14\kappa\int_{K_R}|f_j|^4+4R^2\mu^2(\kappa)\,,$$
we obtain from \eqref{eq-hc2-mu=0},
\begin{align*}
\int_{K_R}\left(|(\nabla-i\Ab_0)f_j|^2-\frac\kappa{H}|f_j|^2+
\frac{\kappa}{2H}|f_j|^4\right)\,dx&\geq
\frac{\kappa}{4H}\int_{K_R}|f_j|^4\,dx-\frac{4R^2\mu^2(\kappa)}{H}\\
&\geq -\frac{5R^2\mu^2(\kappa)}{H}\,.
\end{align*}
Summing over $j$ (recall that the number of indices $j$ is
proportional to $\varepsilon^{-2}R^{-2}\sim \kappa^2R^{-2}$), we get
$$\sum_j\left(|(\nabla-i\Ab_0)f_j|^2-\frac\kappa{H}|f_j|^2+
\frac{\kappa}{2H}|f_j|^4\right)\,dx\geq -C\mu^2(\kappa)\kappa\,.$$ To finish
the proof of the lemma, it suffices to collect the estimates
\eqref{eq-hc2-lb:int'} and
\eqref{eq-hc2-lb:gt}-\eqref{eq-hc2-lb:newEn}.
\end{proof}

We assume from now on that $ H=\kappa-\mu(\kappa)\sqrt{\kappa}$ with
$$\mu(\kappa)>0\,.$$
Let us introduce,
$$\mathcal J_+=\left\{j\in\mathcal J~:~\int_{K^j_{R}}
\left(|(\nabla-i\Ab_0)f_j|^2-\frac\kappa{H}|f_j|^2\right)\,dx\geq0\right\}\,,$$
and set
\begin{equation}\label{eq-hc2-lb:nep}
n_\varepsilon ={\rm Card}\,\mathcal J_+\,.
\end{equation}
We shall obtain a lower bound of $\mathcal
E_0(\chi_{2,\eta}h\psi,\Ab;\Omega)$ in terms of the `local energies'
associated with the indices $j$ that are not in $\mathcal J_+$\,.

Let us pick an arbitrary $j\not\in\mathcal J_+$. Then,
$$\int_{K_{R}}
\left(|(\nabla-i\Ab_0)f_j|^2-\frac\kappa{H}|f_j|^2\right)\,dx<0\,.$$
Notice that the function $f_j$ belongs to the domain of the periodic
operator $P_{R}$, whose first eigenvalue equals to $1$. Let us
recall that we introduced the following parameter
$$\gamma=\frac{\kappa}H-1\,.$$
With this in hand we may write,
$$Q_{R}(f_j)-(1+\gamma)\int_{K_{R}}|f_j|^2\,dx< 0\,.$$
Invoking Lemma~\ref{prop-hc2-poperator'}, we get,
\begin{equation}\label{eq-hc2-lb:f=proj'}
\|f_j-\Pi_1f_j\|_{L^4(K_{R})}\leq
C\sqrt{\gamma}\|f_j\|_{L^2(K_R)}.
\end{equation}
Here, recall the space $L_{R}$ introduced in \eqref{eq-hc2-space2}
and $\Pi_1$ the orthogonal projector on $L_{R}$.

It results from the triangle inequality and \eqref{eq-hc2-lb:f=proj'}
that
$$\|\Pi_1f_j\|_{L^4(K_R)}\leq
\|f_j\|_{L^4(K_R)}+C\sqrt{\gamma}\|f_j\|_{L^2(K_R)} \,.$$
Applying Cauchy-Schwarz inequality twice, we get for any $\sigma\in(0,1)$ the
following estimate,
\begin{equation}\label{eq-hc2-lb:f=proj''}
\|\Pi_1f_j\|^4_{L^4(K_R)}\leq
(1+\sigma)\|f_j\|^4_{L^4(K_R)}+C\sigma^{-3}
\gamma^2\|f_j\|^4_{L^2(K_R)}\,.
\end{equation}
Using the definition of $f_j$ and Theorem~\ref{thm:Linfty-bulk}, we
get that $|f_j|\leq\zeta$. Hence, we infer from
\eqref{eq-hc2-lb:f=proj''},
\begin{equation}\label{eq-hc2-lb:f=proj}
\|\Pi_1f_j\|^4_{L^4(K_R)}\leq
(1+\sigma)\|f_j\|^4_{L^4(K_R)}+C\sigma^{-3}\gamma^2
R^4\zeta^4\,.
\end{equation}
Using the mini-max principle and \eqref{eq-hc2-lb:f=proj}, we get,
\begin{align*}
&\hskip -1cm\int_{K_{R}}
\left(|(\nabla-i\Ab_0)f_j|^2-\frac\kappa{H}|f_j|^2+
\frac{\kappa}{2H}|f_j|^4\right)\,dx\\
&\geq\int_{K_{R}}\left[\left(1-\frac{\kappa}{H} \right)|\Pi_1f_j|^2
+\frac{\kappa}{2H}(1-\sigma) |\Pi_1f_j|^4\right]\, dx-C\sigma^{-3}
\gamma^2R^4\zeta^2\,.
\end{align*}
We choose $\sigma$ as function of $\varepsilon$ and we impose on it
the following condition,
\begin{equation}\label{eq-hc2-const-sigma}
\sigma\to0\quad{\rm as~}\varepsilon\to0\,.
\end{equation}
Therefore, defining
$$c=\sqrt
{\frac{\frac{\kappa}{H}(1-\sigma)}{\frac{\kappa}{H}-1}}
\,,\quad g(x)=c\, \Pi_1f_j\,,$$ we get,
\begin{align}\label{eq-hc2-lb:conc}
\int_{K_{R}} \left(|(\nabla-i\Ab_0)f_j|^2-\frac\kappa{H}|f_j|^2+
\frac{\kappa}{2H}|f_j|^4\right)\,dx&=
\frac{|K_{R}|}{c^2}\left(\frac{\kappa}{H}-1\right)F_{R}(g)\nonumber\\
&\geq |K_{R}|\frac{|\mu(\kappa)|^2}{\kappa}\big(1+o(1)\big)
c(R)\quad{\rm as~} \kappa\to\infty\,.
\end{align}
Here recall the energy $F_{R}$ and the constant $c(R)$ introduced in
\eqref{eq-hc2-eneAb} and \eqref{eq-hc2-c(r,t)} respectively.

Consequently, collecting \eqref{eq-hc2-lb:int'},
\eqref{eq-hc2-lb:gt}, \eqref{eq-hc2-lb:gt'},
\eqref{eq-hc2-lb:newEn}, \eqref{eq-hc2-lb:conc}, we get,
\begin{align}\label{eq-hc2-concl'0}
\mathcal
E_0(\chi_{2,\eta}h\psi,\Ab;\Omega)
\geq& (1-\beta)\bigg{[}(N_\varepsilon-n_\varepsilon)
\frac{|K_{R}|}{\kappa}\,|\mu(\kappa)|^2c(R)(1+o(1))
+\sum_{j\in \mathcal
J_+}\frac{\kappa}{2H}\int|f_j|^4\,dx\bigg{]}\nonumber\\
&%
-C\left[\left(\beta^{-1}R^2+\frac1{(\varepsilon R)^2}\right)
\zeta^2+\sigma^{-3}\varepsilon^{-2}\gamma^2R^2\zeta^4\right]\,.
\end{align}
Notice that, as a result of \eqref{eq-hc2-Nindex}, we have,
$$\frac{N_\varepsilon|K_{R}|}{\kappa}=\kappa\,c_\alpha|D|+o(\kappa)~{\rm
  as~} \kappa\to\infty\,,\quad
{\rm and }\quad c(R)=-E_2+o(1){~\rm as~}R\to\infty\,,$$ where
$c_\alpha=1+\alpha(1-\alpha)$
and the constant $\alpha\in(0,1)$ is introduced in
connection with the partition of unity $u_{j,R}$. Since $E_2>0$, we
deduce the following lower bound from \eqref{eq-hc2-concl'0},
\begin{align}\label{eq-hc2-concl'}
\mathcal
E_0(\chi_{2,\eta}h\psi,\Ab;\Omega)
\geq&
(1-\beta)\bigg{(}-c_\alpha|D|
E_2[\mu(\kappa)]_+^2\kappa+o(|\mu(\kappa)|^2\kappa)\bigg{)}
\nonumber\\
&
-C\left[\left(\beta^{-1}R^2+\frac1{(\varepsilon R)^2}\right)
\zeta^2+\sigma^{-3}\varepsilon^{-2}\gamma^2R^2\zeta^4\right].
\end{align}

\subsection{Proof of Theorem~\ref{thm-hc2-lb-loc}}
We recall that we only treat the case when the domain $D$ satisfies
the condition in \eqref{eq-hc2-cond-D}. We also recall the condition 
$\displaystyle\liminf_{\kappa\to\infty}\mu(\kappa)>-\infty$.

Assume that the conclusion of Theorem~\ref{thm-hc2-lb-loc} were
false. Then there exist sequences $\{\kappa_n\}$, $\{H_n\}$, 
$\mu_0\in\R\cup\{+\infty\}$, $c>0$ and a sequence of minimizers
$\{(\psi_n,\Ab_n)\}$ of the energy $\mathcal E$ such that,
$$\kappa_n\to\infty\quad\quad\frac{\kappa_n}{H_n}\to1\quad
\mu(\kappa_n)\to\mu_0\quad{\rm as ~}n\to\infty\,,$$
and
\begin{equation}\label{eq-hc2-assumption-n}
\mathcal E(h\psi_n,\Ab_n;\Omega)\leq -E_1|\Omega|\kappa_n
-E_2|D|\,[\mu(\kappa_n)]_+^2\kappa_n
-c\max(1,[\mu(\kappa_n)]_+^2)\kappa_n\,. \end{equation}
We treat seperately the two cases $\mu_0\in(-\infty,0]$ and
$\mu_0\in(0,\infty]$.
Assume that $\mu_0\in(-\infty,0]$.
Then
$\zeta=\max\{|1-\frac{\kappa_n}{H_n}|^{1/2},\kappa_n^{-1/4}\}=\kappa_n^{-1/4}$.
Combining \eqref{eq-hc2-split:energy'}, \eqref{eq-hc2-lb:bnd*}
and Lemma~\ref{thm-hc2-lb:1}, we get,
\begin{align*}
\mathcal E_0(h\psi_n,\Ab_n;\Omega)\geq&
-E_1|\partial\Omega|\kappa_n
-C\left[\kappa^2_n\left(\eta^2+\eta\frac{|\mu(\kappa_n)|}{\sqrt{\kappa_n}}+\delta^{-1}\kappa^2_n\eta^5\right)+(\delta+\eta)\kappa_n
+\frac{1}{\eta}\kappa^{-1/2}_n\right]\\
&-C\left(\beta^{-1}R^2+\frac1{(\varepsilon_n
R)^2}\right)\varepsilon^{1/2}_n +o(\kappa_n)\,.
\end{align*}
 Here $\varepsilon_n=\frac1{\sqrt{\kappa_nH_n}}$. We choose
$$\eta=\kappa^{-\rho}_n,\quad \delta=\sqrt{\kappa^3_n\eta^5}\,,\quad
\frac35<\rho<1\,,$$
$$\beta=\kappa^{-1/4}_n\,,$$
and $R=2\pi[\kappa^{3/4}_n]$, with $[x]$ denotes the integer
part of $x$.  In this
way 
we get,
$$\mathcal E_0(h\psi_n,\Ab_n;\Omega)\geq
-E_1|\partial\Omega|\kappa_n+o(\kappa_n)$$
thereby contradicting \eqref{eq-hc2-assumption-n}.

We  now treat  the  case $\mu_0\in(0,\infty]$.
In this case
$\zeta\approx \sqrt{\mu(\kappa_n)}\kappa^{-1/4}_n$ and
$\gamma=\frac{\kappa_n}{H_n}-1\approx\zeta^2$.

We  make  the following choice of the parameters:
$$\eta=\mu^{2/5}(\kappa_n)\kappa^{-\rho}_n,\quad \delta=\frac{\sqrt{\kappa^3_n\eta^5}}{\mu(\kappa_n)}\,,\quad
\frac35<\rho<1\,,$$
$$\beta=(\mu^2_n\varepsilon_n)^{3/8}\,,\quad\sigma=(\mu^2_n\varepsilon_n)^{1/16}\,,$$
and  $R=2\pi[(\mu^2_n\varepsilon_n)^{-3/8}]$. Here
$\mu_n=\mu(\kappa_n)$.
With this choice of parameters, we have,
$$
\frac1{(\varepsilon_n
  R)^2}\zeta^2+\beta^{-1}R^2\zeta^2+\sigma^{-3}\varepsilon^{-2}_n
\gamma^2R^2\zeta^4=o(\mu^2\varepsilon^{-1}_n)\quad{\rm as ~}
\varepsilon_n\to0\,.$$
Therefore, we get by
 combining \eqref{eq-hc2-split:energy'}, \eqref{eq-hc2-lb:bnd*}
and  \eqref{eq-hc2-concl'},
\begin{equation}\label{eq-hc2-conc''}
\mathcal
E_0(h\psi_n,\Ab_n;\Omega)\geq-E_1|\partial\Omega|\kappa_n\nonumber\\
-c_\alpha
E_2|D|\, [\mu(\kappa_n)]^2\kappa_n+o(\mu^2_n\kappa_n)\,.
\end{equation}
Actually, we have proved the following lower bound,
\begin{equation}\label{eq-hc2-lb:gl'}
\liminf_{n\to\infty}\frac{\mathcal
E_0(h\psi_n,\Ab_n;\Omega)}{\mu^2(\kappa_n)\kappa_n}\geq\left\{
\begin{array}{ll} -E_2|D|c_{\alpha}&{\rm
if}~\mu_0=+\infty\\
-E_1|\partial\Omega|\mu_0^{-2}-E_2|D|c_{\alpha}&{\rm
if}~\mu_0\in(0,\infty)\,.
\end{array}\right.\end{equation}
Since the term on the left side in \eqref{eq-hc2-lb:gl'}
is independent from $\alpha$,
we get by taking $\alpha\to0_+$ on both sides (recall that
$c_\alpha=1+\alpha(1-\alpha)$),
\begin{equation}\label{eq-hc2-lb:gl}
\liminf_{n\to\infty}\frac{\mathcal
E_0(h\psi_n,\Ab_n;\Omega)}{\mu^2(\kappa_n)\kappa_n}\geq\left\{
\begin{array}{ll} -E_2|D|&{\rm
if}~\mu_0=\infty\\
-E_1|\partial\Omega|\mu_0^{-2}-E_2|D|&{\rm
if}~\mu_0\in(0,\infty)\,,
\end{array}\right.\end{equation}
which contradicts the upper bound in
\eqref{eq-hc2-assumption-n}. Therefore, the conclusion of
Theorem~\ref{eq-hc2-lb-loc} holds true.\hfill$\Box$

Recalling that $\mathcal E(\psi,\Ab;\Omega)=\mathcal
E_0(\psi,\Ab;\Omega)+(\kappa H)^2\int_\Omega|\curl\Ab-1|^2\,dx$, we
get as an immediate corollary of Theorem \ref{thm-hc2-FK-ub} and the
proof of Theorem~\ref{thm-hc2-lb-loc}:

\begin{corol}\label{corol-thm-hc2-lb}
Assume that $H=\kappa-\mu(\kappa)\sqrt{\kappa}$ with
$\displaystyle\lim_{\kappa\to\infty}\frac{\mu(\kappa)}{\sqrt{\kappa}}=0$.
Then, for any minimizer $(\psi,\Ab)$ of \eqref{eq-hc2-GL}, the
following asymptotic estimate holds:
$$\kappa^2H^2\int_{\Omega}|\curl\Ab-1|^2\,dx=o\left(\max([\mu(\kappa)]_+^2,1)\kappa\right)\quad{\rm
as}~\kappa\to\infty\,.$$
\end{corol}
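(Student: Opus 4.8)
The plan is to read the sought quantity as the gap between the full Ginzburg--Landau energy and its reduced version. The identity recalled just above gives
$$\kappa^2H^2\int_\Omega|\curl\Ab-1|^2\,dx=\mathcal E(\psi,\Ab;\Omega)-\mathcal E_0(\psi,\Ab;\Omega)\,,$$
and since the left-hand side is nonnegative it suffices to bound the right-hand difference from above by a lower-order term. This I will do by combining an upper bound for $\mathcal E$ with a lower bound for $\mathcal E_0$ whose leading parts are identical, so that they cancel.

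For the upper bound I use that $(\psi,\Ab)$ is a minimizer, whence $\mathcal E(\psi,\Ab;\Omega)=C_0(\kappa,H)$, and apply Theorem~\ref{thm-hc2-FK-ub} to get
$$\mathcal E(\psi,\Ab;\Omega)\leq -E_1|\partial\Omega|\kappa-E_2|\Omega|[\mu(\kappa)]_+^2\kappa+o(\max(1,[\mu(\kappa)]_+^2)\kappa)\,.$$
For the lower bound, the crucial observation---and really the only conceptual point---is that the proof of Theorem~\ref{thm-hc2-lb-loc} never uses the magnetic field energy $\kappa^2H^2\int|\curl\Ab-1|^2$ as a lower bound: the partition-of-unity splitting and the ensuing boundary estimate \eqref{eq-hc2-lb:bnd*} and bulk estimate \eqref{eq-hc2-concl'} are carried out directly for $\mathcal E_0$, the field term being simply discarded as nonnegative. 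Consequently that proof, specialized to $D=\Omega$ and $h\equiv1$, yields not merely a lower bound on $\mathcal E$ but the stronger lower bound
$$\mathcal E_0(\psi,\Ab;\Omega)\geq -E_1|\partial\Omega|\kappa-E_2|\Omega|[\mu(\kappa)]_+^2\kappa+o(\max(1,[\mu(\kappa)]_+^2)\kappa)\,,$$
valid whenever $\liminf_{\kappa\to\infty}\mu(\kappa)>-\infty$.

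Subtracting the second display from the first, the two leading terms $-E_1|\partial\Omega|\kappa$ and $-E_2|\Omega|[\mu(\kappa)]_+^2\kappa$ cancel exactly, leaving
$$0\leq \kappa^2H^2\int_\Omega|\curl\Ab-1|^2\,dx\leq o(\max(1,[\mu(\kappa)]_+^2)\kappa)\,,$$
which is the assertion. The one regime not covered by Theorem~\ref{thm-hc2-lb-loc} is $\mu(\kappa)\to-\infty$ (still with $\mu/\sqrt\kappa\to0$); there $[\mu]_+=0$, and one argues exactly as in the proof of Corollary~\ref{thm-hc2-lb}, replacing the lower bound on $\mathcal E_0$ above by the one furnished by Pan's argument for Theorem~\ref{thm-hc2-Pan} (whose proof again bounds $\mathcal E_0$ from below, discarding the nonnegative field energy), the leading term being $-E_1|\partial\Omega|\kappa$. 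I expect no genuine difficulty beyond this bookkeeping of regimes; the essential content is the recognition that what is needed---and what the earlier proofs actually deliver---is a lower bound on the \emph{reduced} energy $\mathcal E_0$ rather than on $\mathcal E$.
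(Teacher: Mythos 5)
Your proposal is correct and is essentially the paper's own argument: the paper likewise writes $\mathcal E(\psi,\Ab;\Omega)=\mathcal E_0(\psi,\Ab;\Omega)+(\kappa H)^2\int_\Omega|\curl\Ab-1|^2\,dx$, bounds $\mathcal E$ above via Theorem~\ref{thm-hc2-FK-ub}, and invokes precisely the \emph{proof} (not the statement) of Theorem~\ref{thm-hc2-lb-loc} with $D=\Omega$, $h\equiv1$, because that proof delivers the lower bound on the reduced energy $\mathcal E_0$, so the leading terms cancel and only the $o(\max(1,[\mu(\kappa)]_+^2)\kappa)$ remainder survives. Your extra bookkeeping for the regime $\mu(\kappa)\to-\infty$ (via Pan's argument) mirrors how the paper handles that regime in Corollary~\ref{thm-hc2-lb} and is, if anything, more careful than the paper's one-line proof.
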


\section{Proof of the energy estimates}\label{sec-hc2-proofs}

We proceed in this section to complete the proofs of
Theorem~\ref{thm-hc2-FK1} and Corollary~\ref{corol-hc2-FK1}.

We start by a localization estimate.

\begin{lem}\label{lem-hc2-IMS}
Assume $H=\kappa-\mu(\kappa)\sqrt{\kappa}$ such that
$$\lim_{\kappa\to\infty}\frac{\mu(\kappa)}{\sqrt{\kappa}}=0\,.$$
Then, for any minimizer $(\psi,\Ab)$ of \eqref{eq-hc2-GL} and any
open, smooth domain  $D\subset\Omega$, we have as $\kappa\to\infty$:
\begin{align}
  \label{eq:10}
\mathcal E(\psi,\Ab;D)-\mathcal
E(f\psi,\Ab;D)-\int_D|\nabla f|^2|\psi|^2\,dx
&+\frac{\kappa^2}2\int_D(1-f^2)^2|\psi|^4\,dx\nonumber \\
&=- {\rm Re}\int_{\partial \Omega} |\psi|^2 \overline{ f} \;\nu
\cdot \nabla f\,d\sigma+o(\kappa)\,.
\end{align}
Here $\nu$ is the unit inward normal vector of $\partial\Omega$ and
$f$ is any function such that,
 $$\nabla f\in L^\infty(\R^2)\,,\quad {\rm supp}\,f\subset \overline D\,.$$
\end{lem}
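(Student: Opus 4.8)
The plan is to prove this as an IMS-type localization identity: expand the energy difference by hand, use the first Ginzburg--Landau equation to handle the one genuinely large bulk term, and then control all the remaining errors with the a priori estimates of Section~\ref{hc2-sec:ape} and the $L^\infty$-bound of Theorem~\ref{thm:Linfty-bulk}.

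First I would expand $\mathcal E(\psi,\Ab;D)-\mathcal E(f\psi,\Ab;D)$ pointwise. The field term $(\kappa H)^2|\curl(\Ab-\FF)|^2$ is independent of the order parameter, hence cancels in the difference. Writing $(\nabla-i\kappa H\Ab)(f\psi)=f(\nabla-i\kappa H\Ab)\psi+(\nabla f)\psi$ and using that $f$ is real, the expansion produces, apart from purely algebraic terms, the single cross term $-2\int_D f\,\nabla f\cdot\operatorname{Re}\!\big(\overline\psi\,(\nabla-i\kappa H\Ab)\psi\big)\,dx$. The decisive remark is that the magnetic contribution to the current is purely imaginary, so $\operatorname{Re}\!\big(\overline\psi\,(\nabla-i\kappa H\Ab)\psi\big)=\tfrac12\nabla|\psi|^2$; hence this term equals $-\int_D f\,\nabla f\cdot\nabla|\psi|^2\,dx$ and can be integrated by parts. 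Because $\supp f\subset\overline D$ and $f$ is Lipschitz, $f$ vanishes on the portion of $\partial D$ interior to $\Omega$, so only the trace on $\partial\Omega$ survives, yielding the boundary term $-\operatorname{Re}\int_{\partial\Omega}|\psi|^2\overline f\,\nu\cdot\nabla f\,d\sigma$ (its sign fixed by the orientation of $\nu$) together with a bulk remainder of the form $\int_D(|\nabla f|^2+f\Delta f)|\psi|^2\,dx$. At this point collecting the quartic contributions and using the algebraic identity $(1-f^4)+(1-f^2)^2=2(1-f^2)$ is exactly what makes the correction $\tfrac{\kappa^2}2\int_D(1-f^2)^2|\psi|^4\,dx$ on the left-hand side combine cleanly.

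After this step the only non-negligible object left is $\int_D(1-f^2)\big(|(\nabla-i\kappa H\Ab)\psi|^2-\kappa^2(1-|\psi|^2)|\psi|^2\big)\,dx$, and the point is that its first summand is \emph{not} small on its own --- by \eqref{eq-hc2-FoHe2} it could be of size $C\kappa^2$ --- so it must be paired with the potential term before being estimated. I would do this by testing the first equation in \eqref{eq-hc2-GLeq} against $(1-f^2)\overline\psi$ and integrating over $D$. Applying the magnetic Green formula together with the Neumann condition $\nu\cdot(\nabla-i\kappa H\Ab)\psi=0$, the boundary contribution along $\partial\Omega$ drops out, and the resulting identity converts this bulk integral into a term that cancels the cross term above plus a flux $\operatorname{Re}\int_{\partial D\cap\Omega}\big(n\cdot(\nabla-i\kappa H\Ab)\psi\big)\overline\psi\,d\sigma$ through the \emph{interior} interface, $n$ being the outer unit normal of $D$.

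It then remains to show that this flux and the leftover gradient terms are $o(\kappa)$, which is the heart of the argument. Since $D$ is smooth and $\partial D\cap\Omega$ sits at a fixed positive distance from $\partial\Omega$, this interface lies inside $\omega_\kappa$ for $\kappa$ large, so Theorem~\ref{thm:Linfty-bulk} (or Lemma~\ref{lem-FH-jems}) gives $\|\psi\|_{L^\infty(\partial D\cap\Omega)}\le C\lambda(\kappa)\to0$; combined with $\|(\nabla-i\kappa H\Ab)\psi\|_{L^\infty(\Omega)}\le C\kappa$ from \eqref{eq-hc2-FoHe2}, the flux is $O(\kappa\,\lambda(\kappa))=o(\kappa)$, where the hypothesis $\mu(\kappa)=o(\sqrt\kappa)$ is precisely what guarantees $\kappa\,\lambda(\kappa)=o(\kappa)$. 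The remaining gradient integrals of the type $\int_D|\nabla f|^2|\psi|^2\,dx$ are then disposed of through the $L^2$-estimate of Corollary~\ref{cor-hc2:L2}. The main obstacle is exactly the one just described: each of the two bulk pieces is individually of order larger than $\kappa$, and only the cancellation \emph{forced} by the Ginzburg--Landau equation, together with the bulk smallness of $\psi$ supplied by the novel $L^\infty$-bound, reduces their sum to the stated $o(\kappa)$.
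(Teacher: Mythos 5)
Your overall strategy is the same as the paper's: after expanding, everything is made to rest on testing the first equation in \eqref{eq-hc2-GLeq} against a cut-off multiple of $\overline\psi$ (you test once against $(1-f^2)\overline\psi$; the paper tests against $f^2\overline\psi$ and against $\overline\psi$ separately, which is the identical computation), on the Neumann condition on $\partial\Omega$, and on the claim that the resulting flux $\mathrm{Re}\int_{\Omega\cap\partial D}\overline\psi\,\nu\cdot(\nabla-i\kappa H\Ab)\psi\,d\sigma$ through the interior interface is $o(\kappa)$. However, two steps of your write-up do not hold as stated.

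The flux estimate has a genuine gap. You assert that $\partial D\cap\Omega$ ``sits at a fixed positive distance from $\partial\Omega$'', hence lies in $\omega_\kappa$ for large $\kappa$. This is false exactly in the cases where the lemma has content: whenever $\overline D\cap\partial\Omega\neq\emptyset$ (the only situation in which the $\partial\Omega$-term in \eqref{eq:10} can be nonzero, and the situation occurring when the lemma is invoked in Section~\ref{sec-hc2-proofs}), the curve $\Omega\cap\partial D=\Omega\cap\partial\widetilde D$ terminates on $\partial\Omega$, so a portion of it lies at distance $\leq\kappa^{-1+\delta}$ (indeed $\leq g_1(\kappa)$ for any $g_1$) from $\partial\Omega$, where neither Theorem~\ref{thm:Linfty-bulk} nor Lemma~\ref{lem-FH-jems} gives any smallness of $\psi$. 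This is precisely why the paper, in proving \eqref{eq-hc2-D-bnd-cont}, splits the flux integral at distance $g_1(\kappa)$ from $\partial\Omega$, with $g_1(\kappa)\to0$ and $\kappa g_1(\kappa)\to\infty$: on the near portion one uses only $|\psi|\leq1$, the gradient bound \eqref{eq-hc2-FoHe2} and the fact that the arclength of that portion is $o(1)$; on the far portion Lemma~\ref{lem-FH-jems} applies. Without this splitting your estimate fails for the relevant domains $D$; with it, your conclusion $o(\kappa)$ is correct.

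There is also an internal inconsistency in your algebra. You first integrate the cross term $-\int_D f\nabla f\cdot\nabla|\psi|^2\,dx$ by parts, producing the $\partial\Omega$-term of \eqref{eq:10} plus a remainder $\int_D\bigl(|\nabla f|^2+f\Delta f\bigr)|\psi|^2\,dx$; note that $f\Delta f$ is not even defined under the lemma's sole hypothesis $\nabla f\in L^\infty$. You then assert that testing \eqref{eq-hc2-GLeq} against $(1-f^2)\overline\psi$ ``cancels the cross term above''. It cannot do both: the Green formula for $(1-f^2)\overline\psi$ produces $+\int_D f\nabla f\cdot\nabla|\psi|^2\,dx$ plus the interface flux, and this term either cancels the cross term as it stands in the pointwise expansion (if you have \emph{not} yet integrated it by parts, in which case no $\Delta f$ ever appears), or, if you have, it must itself be integrated by parts again, exactly undoing the boundary term and the $f\Delta f$ remainder you just created. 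As written, the two halves of your argument are circular and do not assemble into \eqref{eq:10}. Carried out consistently (keep only the testing step), the cross terms cancel identically, the sole error term is the interface flux, and no regularity of $f$ beyond $\nabla f\in L^\infty$ is needed --- this streamlined bookkeeping is what the paper's equations (5.2)--(5.4) implement.
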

\begin{proof}[Proof of Lemma~\ref{lem-hc2-IMS}]~\\
Integrating by parts, we get the following localization
formula,
\begin{align}
  \label{eq:9}
  \int_\Omega|(\nabla-i\kappa H)f\psi|^2\,dx&={\rm Re}\int_\Omega
-(\nabla-i\kappa H\Ab)^2\psi\,\overline {f^2\psi}\,dx\nonumber \\
&\quad-{\rm Re}\int_{\partial \Omega} |\psi|^2 \overline{ f} \;\nu \cdot
\nabla f\,d\sigma +\int_\Omega|\nabla f|^2|\psi|^2\,dx\,.
\end{align}
Using the equation for
$\psi$ in \eqref{eq-hc2-GLeq} and the assumption ${\rm
supp}\,f\subset \overline D$, we get further,
\begin{multline}\label{eq-hc2-En1}
\mathcal
E(f\psi,\Ab;D)=\kappa^2\int_\Omega\left(\frac12f^2-1\right)f^2
|\psi|^4\,dx+\int_\Omega|\nabla f|^2|\psi|^2\,dx -{\rm
Re}\int_{\partial \Omega} |\psi|^2 \overline{ f} \;\nu \cdot \nabla
f\,d\sigma\\+(\kappa H)^2\int_{D}|\curl\Ab-1|^2\,dx\,.
\end{multline}
Similarly, we get,
\begin{multline}\label{eq-hc2-En2}
\mathcal E(\psi,\Ab;D)=
-\frac{\kappa^2}2\int_D|\psi|^4\,dx+(\kappa
H)^2\int_{D}|\curl\Ab-1|^2\,dx\\
+ \int_{\overline D\cap\partial\Omega}\overline\psi\,\nu\cdot(\nabla-i\kappa
H\Ab)\psi\,d\sigma +\int_{\Omega\cap\partial
D}\overline\psi\,\nu\cdot(\nabla-i\kappa H\Ab)\psi\,d\sigma\,.
\end{multline}
Combining \eqref{eq-hc2-En1} and \eqref{eq-hc2-En2}, we need only
establish that
$$\int_{\overline D\cap\partial\Omega}\overline\psi\,\nu\cdot(\nabla-i\kappa
H\Ab)\psi\,d\sigma +\int_{\Omega\cap\partial
D}\overline\psi\,\nu\cdot(\nabla-i\kappa
H\Ab)\psi\,d\sigma=o(\kappa)\,,
$$
as $\kappa\to\infty$.

Thanks to the boundary condition in \eqref{eq-hc2-GLeq}, the
integral over $\overline D\cap\partial\Omega$ vanishes. So we only consider the
integral over $\Omega\cap\partial D$. To that end we write,
\begin{multline*}
\int_{\Omega\cap\partial D}\overline\psi\,\nu\cdot(\nabla-i\kappa
H\Ab)\psi\,d\sigma =\int_{\{x\in\Omega\cap\partial D~:~{\rm
dist}(x,\partial\Omega)\leq  g_1(\kappa)\}}
\overline\psi\,\nu\cdot(\nabla-i\kappa H\Ab)\psi\,d\sigma\\+
\int_{\{x\in\Omega\cap\partial D~:~{\rm dist}(x,\partial\Omega)\geq
g_1(\kappa)\}} \overline\psi\,\nu\cdot(\nabla-i\kappa
H\Ab)\psi\,d\sigma \,.\end{multline*} Here $g_1(\kappa)$ is any
positive function such that $g_1(\kappa)\to0$ and $\kappa
g_1(\kappa)\to\infty$ as $\kappa\to\infty$. Invoking  the results
of Lemmas~\ref{lem-hc2-FoHe} and \ref{lem-FH-jems}, we deduce that
\begin{equation}\label{eq-hc2-D-bnd-cont}
\int_{\Omega\cap\partial D}\overline\psi\,\nu\cdot(\nabla-i\kappa
H\Ab)\psi\,d\sigma=o(\kappa)\quad{\rm
as}~\kappa\to\infty\,.\end{equation}
\end{proof}

\begin{proof}[Proof of Theorem~\ref{thm-hc2-FK1}]~\\
Thanks to Theorem~\ref{thm-hc2-Pan}, we may assume the condition 
$\displaystyle\liminf_{\kappa\to\infty}\mu(\kappa)>-\infty$.

Let us consider a partition of unity on $\R$,
$$h_1^2+h_2^2=1\quad{\rm in}~\R\,,\quad {\rm
  supp}\,h_1\subset(-1,\infty)\,,\quad{\rm
  supp}\,h_2\subset(-\infty,-\frac12)\,.
$$
Let $m=m(\kappa)\in(0,1)$ be a parameter that will be specified
later. Define the `signed' distance to the boundary between $D$ and
$\Omega \setminus \overline D$,
$$t_D(x)=\left\{\begin{array}{ll}
-{\rm dist}(x,\Gamma)&{\rm if}~x\in D\,,\\
{\rm dist}(x,\Gamma)&{\rm if}~x\not\in D\,.\end{array}\right., \qquad
\text{ with } \Gamma := \partial D \setminus\partial\Omega\,.$$
For any $x\in\overline\Omega$, we define,
$$\varphi_{1,m}(x)=h_1\left(\frac{t_D(x)}{m}\right)\,\psi(x)\,,\quad
\varphi_{2,m}(x)=h_2\left(\frac{t_D(x)}{m}\right)\,\psi(x)\,.$$
Then, it results from the IMS decomposition formula,
\begin{equation}\label{eq-hc2-D-IMS}
\mathcal E(\psi,\Ab;\Omega)\geq\mathcal
E(\varphi_{1,m},\Ab;\Omega)+ \mathcal
E(\varphi_{2,m},\Ab;\Omega)-\frac{C}{m^2}\int_\Omega|\psi|^2\,dx\,.
\end{equation}

We shall assume the following condition on $m=m(\kappa)$,
\begin{equation}\label{eq-hc2-D-eta}
m\ll1\quad{\rm and}\quad m^{-1}+ m^{-2}\zeta^2\ll
\max(1,[\mu(\kappa)]_{+}^2)\kappa\quad{\rm
  as}~\kappa\to\infty\,.\end{equation}
Here $\zeta(\kappa)=\max\{|1-\frac{\kappa}H|^{1/2},\kappa^{-1/4}\}$ as
previously. The choice $m=\frac1{\sqrt{\kappa}}$ fulfills the
condition in \eqref{eq-hc2-D-eta}.

Invoking Corollary~\ref{cor-hc2:L2} and the upper bound of
Theorem~\ref{thm-hc2-FK-ub}, we get under the condition
\eqref{eq-hc2-D-eta},
\begin{equation}\label{eq-hc2-D-IMS*}
\mathcal E(\varphi_{1,m},\Ab;\Omega)+ \mathcal
E(\varphi_{2,m},\Ab;\Omega)\leq -\mathcal A(\mu(\kappa);\Omega)
+o\left(\max(1,[\mu(\kappa)]_{+}^2)\kappa\right)\,,
\end{equation}
where, for a subdomain $V\subset\Omega$, we define,
\begin{equation}\label{eq-hc2-def-V}
\mathcal A(\mu(\kappa);V)=
\left(E_1|\overline
V\cap\partial\Omega|+[\mu(\kappa)]_+^2E_2|V|\right)\kappa\,.
\end{equation}
Notice that $\varphi_{1,m}$ has support in $\overline U_m$, where
$$U_m=\{x\in\Omega~:~{\rm dist}(x, U))<m\}\,,\quad U=\Omega\setminus
\overline D\,.$$ Applying Theorem~\ref{thm-hc2-lb-loc},
we get the following lower bound,
\begin{equation}\label{eq-hc2-lb-222}
\mathcal E(\varphi_{1,m},\Ab;\Omega)\geq -\mathcal
A(\mu(\kappa);\Omega\setminus \overline
D)+o\left(\max(1,[\mu(\kappa)]_{+}^2)\kappa\right)\quad{\rm
as}~\kappa\to\infty\,. \end{equation}
Substituting \eqref{eq-hc2-lb-222} in \eqref{eq-hc2-D-IMS*}, we also
get the following upper bound
\begin{equation}\label{eq-hc2-D-claim} \mathcal E(\varphi_{2,m},\Ab;\Omega)\leq
-\mathcal A(\mu(\kappa);
D)+o\left(\max(1,[\mu(\kappa)]_{+}^2)\kappa\right)\,.
\end{equation}

In order to finish the proof of Theorem~\ref{thm-hc2-FK1}, it is
sufficient to show for an arbitrary smooth domain $D\subset\Omega$,
\begin{equation}\label{eq-hc2-D-claim*}
\mathcal E(\psi,\Ab;D)\leq -\mathcal
A(\mu(\kappa),D)+o\left(\max(1,[\mu(\kappa)]_{+}^2)\kappa\right)\quad{\rm
as}~\kappa\to\infty\,,\end{equation} and
\begin{equation}\label{eq-hc2-D-claim**}
\mathcal E(\psi,\Ab;\Omega\setminus D)\geq -\mathcal
A(\mu(\kappa),\Omega\setminus
D)+o\left(\max(1,[\mu(\kappa)]_{+}^2)\kappa\right)\quad{\rm
as}~\kappa\to\infty\,.\end{equation} Let us prove
\eqref{eq-hc2-D-claim*}. Notice that $\varphi_{2,m}$ has support in
$\overline D$. Invoking Lemma~\ref{lem-hc2-IMS} together
with Corollary~\ref{cor-hc2:L2}, we get (thanks in particular to the
condition \eqref{eq-hc2-D-eta} on $m$),
$$\mathcal E(\psi,\Ab;D)\leq\mathcal E(\varphi_{2,m},\Ab;D)+
o\left(\max(1,[\mu(\kappa)]_{+}^2)\kappa\right)\,.$$ Using
\eqref{eq-hc2-D-claim}, we see that \eqref{eq-hc2-D-claim*} actually
holds.

Let us prove now \eqref{eq-hc2-D-claim**}. We have the natural
decomposition of the energy,
$$\mathcal E(\psi,\Ab;\Omega)=\mathcal E(\psi,\Ab;D)+\mathcal E(\psi,\Ab;\Omega\setminus
D)\,.$$ Using the lower bound in Theorem~\ref{thm-hc2-lb}, we deduce
that,
$$\mathcal E(\psi,\Ab;D)+\mathcal E(\psi,\Ab;\Omega\setminus
D)\geq-\mathcal A(\mu(\kappa),\Omega)+
o\left(\max(1,[\mu(\kappa)]_{+}^2)\kappa\right)\,.$$ Inserting the
established upper bound \eqref{eq-hc2-D-claim*} in the left side
above, we get the lower bound in \eqref{eq-hc2-D-claim**}.
\end{proof}

\begin{proof}[Proof of Corollary~\ref{corol-hc2-FK1}]~\\
Consider any open domain $D\subset\Omega$. Let us multiply the G-L
equation \eqref{eq-hc2-GLeq} for $\psi$ by $\overline\psi$ and
integrate over $\Omega$. Integrating by parts and using
Corollary~\ref{corol-thm-hc2-lb}, we obtain,
\begin{multline*}\mathcal E
(\psi,\Ab;D)=-\frac{\kappa^2}2\int_D|\psi|^4\,dx+
\int_{\overline\Omega\cap\partial
D}\overline\psi\,\nu_{D}\cdot(\nabla-i\kappa
H\Ab)\psi\,\,d\sigma\\+o\left(\max(1,[\mu(\kappa)]_{+}^2)\kappa\right)\,,\quad{\rm
as }~\kappa\to\infty\,.\end{multline*} Using
Lemmas~\ref{lem-hc2-FoHe} and ~\ref{lem-FH-jems}, we get (see the
proof of \eqref{eq-hc2-D-bnd-cont}),
$$\int_{\overline\Omega\cap\partial
D}\overline\psi\,\nu_D\cdot(\nabla-i\kappa
H\Ab)\psi\,d\sigma=o(\kappa)\quad{\rm as}~\kappa\to\infty\,.$$ In
particular we have,
$$\mathcal E
(\psi,\Ab;D)=-\frac{\kappa^2}2\int_D|\psi|^4\,dx+o\left(\max(1,[\mu(\kappa)]_{+}^2)\kappa\right)\,,\quad{\rm
as }~\kappa\to\infty\,.$$ Implementing the asymptotic expansion of
Theorem~\ref{thm-hc2-FK1}, we obtain,
$$\frac{\kappa^2}2\int_{D}|\psi|^4\,dx=\mathcal A(\mu(\kappa),D)+o\left(\max(1,[\mu(\kappa)]_{+}^2)\kappa\right)
\,,\quad{\rm as }~\kappa\to\infty\,.$$ Coming back to the definition
of $\mathcal A(\mu(\kappa),D)$  in \eqref{eq-hc2-def-V}, we get the
result of Corollary~\ref{corol-hc2-FK1}.
\end{proof}

\section*{Acknowledgements}
The authors were supported by the European Research Council under the European Community's
Seventh Framework Programme (FP7/2007-2013)/ERC grant agreement
n$^{\rm o}$ 202859. SF is also supported by the Danish
Research Council and the Lundbeck Foundation.

\end{document}